\newcommand{\norm}[1]{\left\lVert#1\right\rVert}
\newcommand{\RomNum}[1]{%
  \textup{\uppercase\expandafter{\romannumeral#1}}%
}
\definecolor{darkblue}{rgb}{0, 0, 0.5}
\DeclareMathOperator*{\argmin}{argmin}
\newtheorem{proposition}{Proposition}
\newtheorem{theorem}{Theorem}
\newtheorem{lemma}{Lemma}
\newtheorem{assumption}{Assumption}
\newtheorem{definition}{Definition}
\newtheorem{corollary}{Corollary}
\newtheorem{example}{Example}
\newtheorem*{remark}{Remark}
\begin{document}

\title{Nonlinear and Nonseparable Structural Functions in Fuzzy Regression Discontinuity Designs\thanks{The author is indebted to his advisors Graham Elliott and Yixiao Sun for their constant support on this paper. For helpful comments, the author also thanks Wei-Lin Chen, Gordon Dahl, Xinwei Ma, Katherine Rittenhouse, Jack Rosetti, Kaspar W\"uthrich, and participants at UC San Diego econometrics seminar, California econometrics conference, and AMES China meeting.}}

\author{Haitian Xie\thanks{Department of Economics, University of California, San Diego. Address: 9500 Gilman Dr. La Jolla, CA 92093. Email: \url{hax082@ucsd.edu}. }}

\date{\textbf{\today}}

\maketitle

\begin{abstract} 
    Many empirical examples of regression discontinuity (RD) designs concern a continuous treatment variable, but the theoretical aspects of such models are less studied. This study examines the identification and estimation of the structural function in fuzzy RD designs with a continuous treatment variable. The structural function fully describes the causal impact of the treatment on the outcome. We show that the nonlinear and nonseparable structural function can be nonparametrically identified at the RD cutoff under shape restrictions, including monotonicity and smoothness conditions. Based on the nonparametric identification equation, we propose a three-step semiparametric estimation procedure and establish the asymptotic normality of the estimator. The semiparametric estimator achieves the same convergence rate as in the case of a binary treatment variable. As an application of the method, we estimate the causal effect of sleep time on health status by using the discontinuity in natural light timing at time zone boundaries.

    \bigskip
    \noindent%
{\bf Keywords:} Causal Inference, Continuous Treatment, Dual Monotonicity, Nonparametric Identification, Semiparametric Estimation, Asymptotic Normality.
\end{abstract}

\newpage
\section{Introduction}

The \emph{regression discontinuity} (RD) design is one of the most credible approaches to causal inference in non-experimental settings. In an RD design, the researcher is interested in the effect of a treatment $T$ on some outcome $Y$. The basic idea is that there is an observed running variable $R$ (also called score or index or forcing variable) such that the treatment varies discontinuously when the running variable crosses some cutoff (also called threshold) value $\bar{r}$. By utilizing this discontinuity, the researcher has the power to identify and estimate the causal impact of interest.

Most theoretical studies of the RD design assume that the treatment is a binary intervention. However, in empirical settings, researchers may be interested in a continuous treatment that takes value inside an interval. Such examples include sleep time, air pollution level, and medical spending. The goal of this study is to provide methods for examining the causal effect of a continuous treatment variable in an RD setting.

It takes a few steps to extend the idea of RD design from a binary treatment to a continuous one. With a binary treatment, the sharp design refers to the case where the running variable completely determines the treatment. In particular, the treatment changes from $0$ to $1$ when the running variable crosses the cutoff. The fuzzy design refers to the case where the treatment probability jumps at the cutoff. The jump can be smaller and need not be from $0$ to $1$. The sharp and fuzzy designs of a binary treatment are demonstrated in Figure \ref{fig:binary-demo}.

\begin{figure}
    \centering
    \includegraphics[width=1\linewidth]{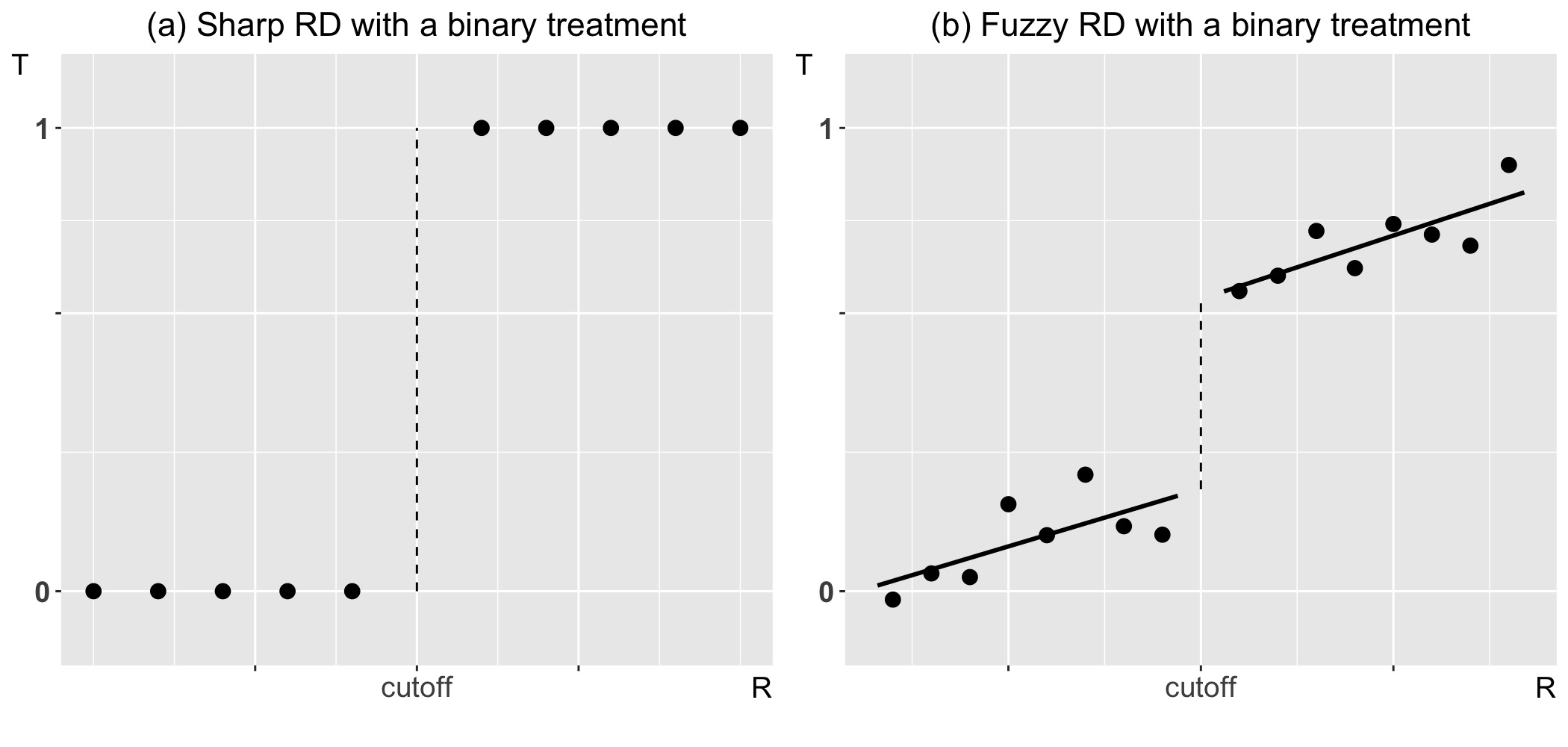}
    \caption{Demonstration of RD designs with a binary treatment.}
    \label{fig:binary-demo}
    \caption*{\footnotesize Graph (a) demonstrates the sharp RD design using a raw scatter plot. Graph (b) demonstrates the fuzzy RD design using a binscatter plot, where each dot represents the average treatment probability in the respective bin.}
\end{figure}

When the treatment variable is continuous, the representation of the RD becomes more complicated than the binary case. The reason is that the distribution of a binary variable can be completely summarized by the scalar treatment probability as in Figure \ref{fig:binary-demo}(b), while a continuous variable contains much more information. Specifically, we can consider quantile regressions of the treatment on the running variable at different quantile levels. Each quantile level would deliver a different regression model with a different discontinuity. Eventually, we would obtain an infinite number of regression discontinuities based on all quantile levels of the treatment. This infinite set of regression discontinuities can be represented as the entire variation between the conditional quantile function of the treatment from just below and just above the cutoff. Figure \ref{fig:continuous-demo} provides a demonstration.

\begin{figure}
    \centering
    \includegraphics[width=1\linewidth]{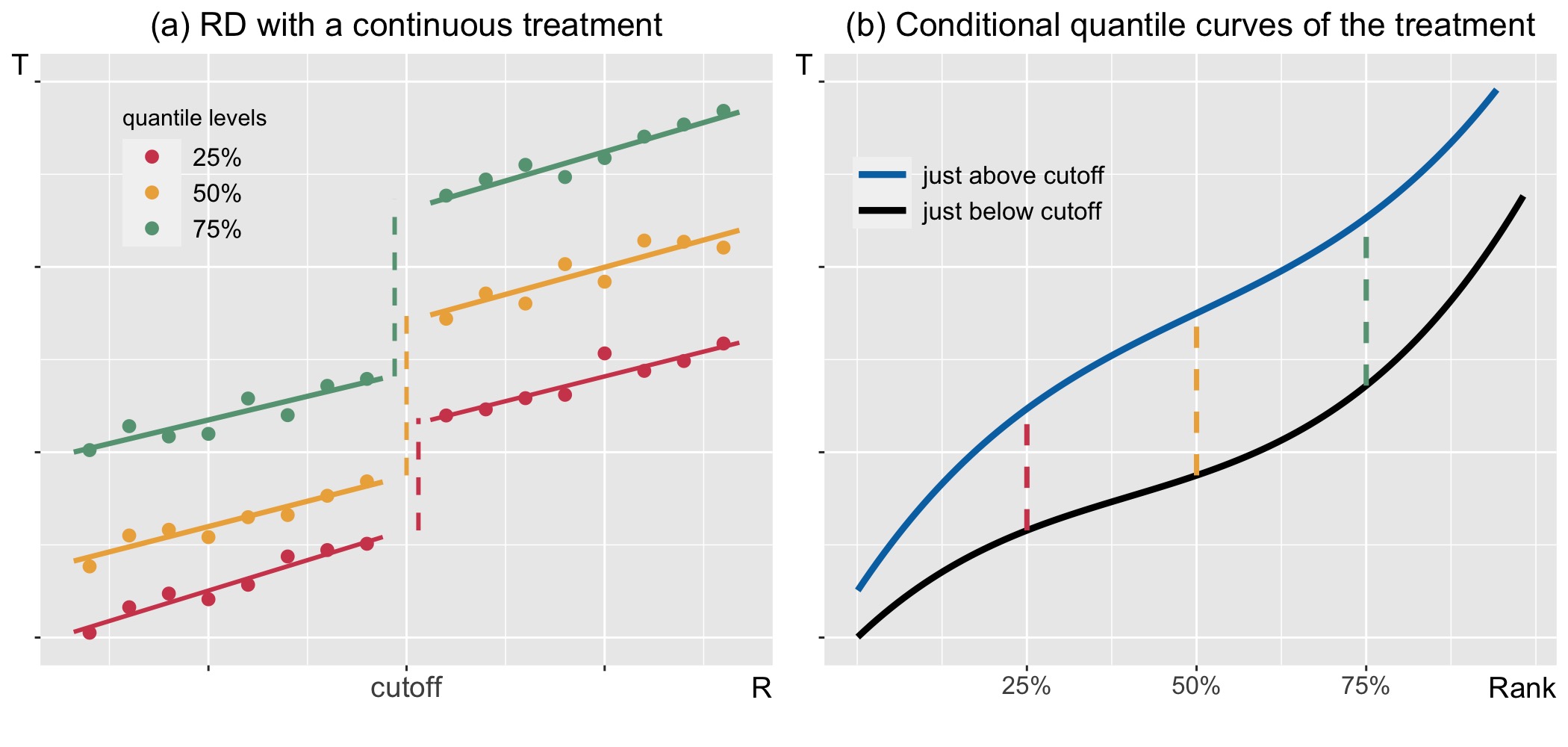}
    \caption{Demonstration of RD designs with a continuous treatment.}
    \label{fig:continuous-demo}
    \caption*{\footnotesize Graph (a) demonstrates the regression discontinuities of a continuous treatment variable at different quantile levels. The plot is a binscatter plot, where each dot represents the corresponding quantile treatment level in the respective bin. Different quantile regressions bring different discontinuities. Graph (b) plots the conditional quantile curve of the treatment from just below and just above the cutoff. The horizontal axis specifies which quantile level we are looking at. The entire difference between these two curves constitutes the content of RD of a continuous treatment variable. If there is no regression discontinuity, then the two quantile curves would completely overlap. Notice that we use the same color to denote the corresponding jumps between the two plots. }
\end{figure}

The exogenous variation of the treatment contained in the aforementioned set of regression discontinuities provides tremendous identification power on the causal effect of interest. To fully express the causal effect of the treatment $T$ on the outcome $Y$, we introduce the structural function 
\begin{align*}
    Y = g^*(T,R,\varepsilon),
\end{align*}
where $\varepsilon$ contains unobserved causal factors (for easy reference, $\varepsilon$ will be called the error term hereafter). The structural function $g^*$ specifies how the treatment $T$ determines the outcome $Y$ together with the running variable $R$ and error term $\varepsilon$. 

Consider an empirical example for concreteness, where we are interested in the causal impact of sleep time on health. Figure \ref{fig:continuous-empirical}(a) shows the histogram of sleep time based on the American Time Use Survey (ATUS) and demonstrates that sleep time is indeed a continuous treatment variable. The causal identification is based on exploiting the discontinuity in the timing of natural light at time zone boundaries. Individuals living on the late sunset side of the time zone boundary tend to go to bed at a later time, while in the morning, everyone gets up and goes to work at 8 am. This generates an exogenous variation in the sleep time across the time zone boundary.\footnote{This identification strategy is first proposed by \citep{Giuntella2019sunset} within a linear model. See the empirical application in Section \ref{sec:numeric} for more details.} Similar to the demonstration in Figure \ref{fig:continuous-demo}(b), we would expect the distribution of the sleep time for individuals living on the early sunset side to first-order stochastically dominate the distribution on the late sunset side. This relationship is supported by Figure \ref{fig:continuous-empirical}(b) based on nonparametric estimates of the conditional quantiles of sleep time. In this example, the running variable is the distance to the time zone boundary, and the cutoff $\bar{r}$ is at the time zone boundary. The error term $\varepsilon$ may contain unobserved eating habits that correlate with both health and sleep time. 

\begin{figure}
    \centering
    \includegraphics[width=1\linewidth]{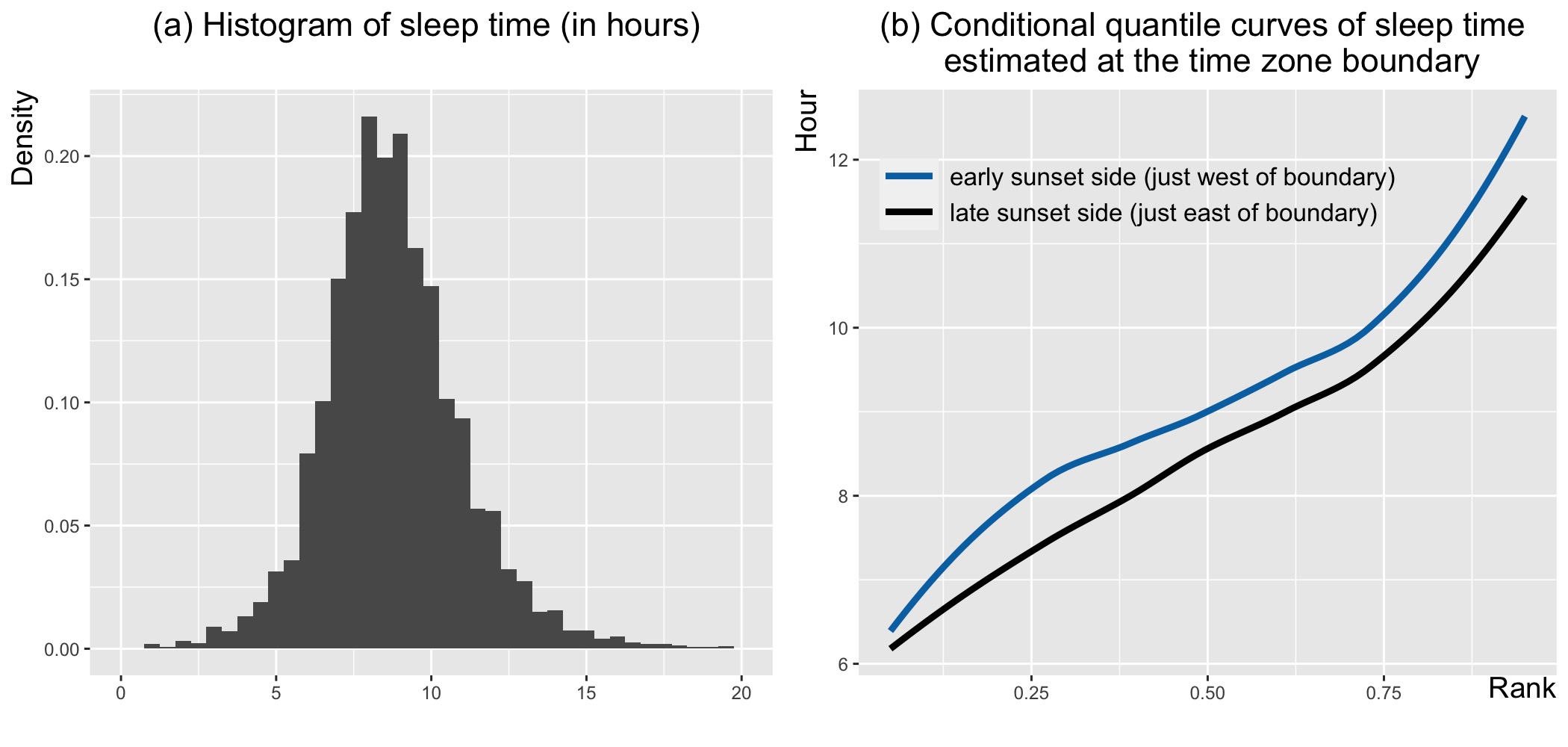}
    \caption{Empirical illustration of RD designs with a continuous treatment.}
    \label{fig:continuous-empirical}
    \caption*{\footnotesize Graph (a) shows the histogram of sleep time. Evidently, this variable is better modeled as continuous rather than discrete. Graph (b) shows the estimated conditional quantile curves of sleep time given that the geographical location is just west and east of the time zone boundary. The nonparametric estimator used here is the local constant quantile regression. The RD is clearly observed as the blue curve first-order stochastically dominates the black curve, a similar situation as demonstrated in Figure \ref{fig:continuous-demo}(b).}
\end{figure}

The goal of the RD design is to use the discontinuity to identify the structural function $g^*$ at the cutoff $\bar{r}$. When the treatment is binary, the information contained in the structural function can be reduced to a scalar treatment effect 
\begin{align*}
    g^*(1,\bar{r},\varepsilon) - g^*(0,\bar{r},\varepsilon),
\end{align*}
which is the difference in outcome when the treatment is manipulated from $0$ to $1$. However, when the treatment is continuous, the structural function is an infinite-dimensional object and is much harder to identify.

In practice, empirical studies often use the \emph{two stage least squares} (TSLS) method to estimate the following \emph{Wald ratio}:
\begin{align*}
    \text{Wald ratio} = \frac{ \lim_{r \uparrow \bar{r}} \mathbb{E}[Y|R=r] - \lim_{r \downarrow \bar{r}} \mathbb{E}[Y|R=r] }{\lim_{r \uparrow \bar{r}} \mathbb{E}[T|R=r] - \lim_{r \downarrow \bar{r}} \mathbb{E}[T|R=r]},
\end{align*}
There are two motivations behind this procedure. First, in the binary treatment case, the Wald ratio would identify the treatment effect.\footnote{As shown in \cite{hahn2001identification}, the Wald ratio identifies the average treatment effect in the sharp design and the local average treatment effect (for the compliers) in the fuzzy design.} Second, in the continuous treatment case, if the structural function is linear and separable in the treatment, that is, if the structural function can be decomposed as
\begin{align*}
    g^*(T,R,\varepsilon) = \beta T + \tilde{g}(R,\varepsilon),
\end{align*}
then the Wald ratio would identify the slope coefficient $\beta$ of the treatment.\footnote{Such a linear specification of RD design with a continuous treatment can be found in Section 3.4.2 of \cite{lee2010regression}.} However, the Wald ratio cannot identify the structural function in general because the structural function is infinite-dimensional while the Wald ratio is one-dimensional. Any attempt to condense the structural function into a scalar bears the risk of dampening the causal interpretation of the model.

The preceding discussion shows that the general identification of the structural function in RD designs remains an unsolved issue. It is desirable to know whether the structural function (at the cutoff) can be identified without the aforementioned linearity and separability conditions. This issue is a practical concern. For instance, in the time zone example, there are reasons for one to believe that the structural function is nonlinear and nonseparable in the treatment.\footnote{The nonlinearity can be due to the fact that both undersleeping and oversleeping are harmful to health. The nonseparability can be due to the effect heterogeneity caused by unobserved eating habits, which affect both sleep time and health.} The optimal sleep time can only be determined after the identification of the nonlinear structural function. From the theoretical perspective, it is wise to achieve identification in the nonparametric sense and avoid functional form restrictions such as linearity and separability that do not have economic theory foundations. As an advantage, the more general specification allows the treatment effect to be heterogeneous across different levels of the treatment and outcome.

The current study aims to precisely tackle the identification and estimation of the possibly nonlinear and nonseparable structural function. The nonparametric identification result is established based on shape restrictions, including monotonicity and smoothness conditions. The idea behind the identification result is that we are using the infinite set of regression discontinuities in Figure \ref{fig:continuous-demo}(b) to identify the infinite-dimensional structural function. The monotonicity condition restricts the structural function $g^*$ to be strictly increasing in the error term $\varepsilon$. This condition requires the error term to be one-dimensional, which is the potential restriction of the model. However, this condition is common in the nonparametric identification literature \citep[e.g.,][]{matzkin2003nonparametric} and is satisfied by most, if not all, parametric models used in practice. The smoothness and other regularity conditions imposed in this paper are common in the RD literature.

A semiparametric estimation procedure is developed based on the nonparametric identification result. The structural function is parametrized while nonlinearity and nonseparability are maintained. One such parametrization could be
\begin{align*}
    g_\gamma(T,\bar{r},\varepsilon) = \gamma_1 T + \gamma_2 T^2 + \gamma_3 T\varepsilon + \varepsilon.
\end{align*}
The relationship between the treatment and the running variable is left to be nonparametric. Under appropriate conditions, the semiparametric estimator of the structural parameter $\gamma=(\gamma_1,\gamma_2,\gamma_3)$ is shown to be consistent and asymptotically normal. As an interesting finding, the convergence rate of the semiparametric estimator, $n^{-2/5}$, is the same as in the binary treatment case. There is no loss in terms of convergence rate when extending the RD design from the binary treatment case to the continuous case. The faster convergence rate is due to the integral smoothing in the estimation of the criterion function constructed from the identification equation. To understand this phenomenon, one can consider the analogy in regular semiparametric estimation theory, where the first step is nonparametric while the second step recovers the parametric rate.

The rest of the paper is organized as follows. The remaining part of this section discusses the literature. Section \ref{sec:identification} introduces the RD model with a continuous treatment and presents the nonparametric identification result. Section \ref{sec:estimation} proposes the semiparametric estimation procedure and derives its asymptotic properties. Section \ref{sec:numeric} presents the empirical application and simulation studies. The technical proofs for the identification and estimation results are collected in Appendices \ref{sec:proof-id} and \ref{sec:proof-est}, respectively.

\subsection{Relation to the literature}

The RD method is first introduced by \citet{thistlethwaite1960regression} into the literature. \citet{hahn2001identification} establish the theoretical foundation of RD designs by using the potential outcome framework and show that the RD Wald ratio can be interpreted as the \emph{local average treatment effect} (LATE) for compliers local to the cutoff. Early reviews of the RD design can be found in \citet{IMBENS2008regression} and \citet{lee2010regression}. For more recent reviews, see \citet{cattaneo2017regression} and \citet{cattaneo2021regression}.

There are many empirical papers that study the causal effect of a continuous treatment in an RD design, some of which are given in Table \ref{tb:rd-studies}. As explained earlier, these studies apply the TSLS method to estimate the Wald ratio. Hence, there is room for potential improvement in these settings by using the semiparametric estimator developed in the current study.

\begin{table}[!hbtp]
    \small
	\centering
    \def\arraystretch{1.5}
	\begin{tabularx}{\textwidth}{|l|l|l|l|X|}
	\hline
	Context & Outcome $Y$ & Treatment $T$ & Running $R$ & Studies \\ \hline
	\makecell[l]{US time \\ zone system} & Health status & Sleep time & \makecell[l]{Distance to \\ time zone \\ boundary} & \cite{Giuntella2019sunset} \\ \hline
	\makecell[l]{Heating policy \\ in China} & Life expectancy & Air pollution & \makecell[l]{Distance to \\ Huai River} & \makecell[l]{\cite{Chen2013evidence} \\ \cite{Ebenstein2017new}} \\ \hline 
	\makecell[l]{Minimum capital \\ requirement} & \makecell[l]{Bank failue} & Capital & Town size & \cite{dong2021regression} \\ \hline
	\makecell[l]{Very low birth \\ weight policy} & \makecell[l]{Infant \\ mortality} & \makecell[l]{Medical \\ spending} & Birth weight & \makecell[l]{\cite{almond2010estimating} \\ \cite{barreca2011saving}} \\ \hline
	\makecell[l]{Tax distribution \\ in Brazil} & \makecell[l]{Electoral chance \\ of the incumbent} & \makecell[l]{Local \\ government \\ spending} & \makecell[l]{Municipality \\ population} & \cite{litschig2010government} \\ \hline
	\makecell[l]{Child-related \\ tax benefits} & \makecell[l]{Personal \\ achievements }& \makecell[l]{ Family \\ income} & \makecell[l]{Birthdate} &  \makecell[l]{\cite{barr2021effect} \\ \cite{cole2021effect}} \\ \hline
	\end{tabularx}
	\caption{Selected empirical RD studies with a continuous treatment variable.}
	\label{tb:rd-studies}
\end{table}

The theoretical literature on RD designs focuses on the case of a binary treatment variable. The one exception is the recent paper by \citet{dong2021regression}, which studies RD designs specifically with a continuous treatment variable. Under simple conditions, they propose a way to identify and estimate the \emph{Quantile specific LATE}. This parameter bears a causal interpretation as it is a weighted average of the derivative of the structural function \citep[p. 4]{dong2021regression}. It can also be understood as the treatment effect given a particular quantile of the treatment. Their results are established under conditions weaker than the ones in our paper. In particular, they do not assume the monotonicity condition of the structural function. In certain situations, however, the policy design process may require information beyond the weighted average of the structural function. The current paper takes a different approach and aims to identify the structural function directly.

It has become common in the literature to identify a certain weighted average of the structural function as the causal estimand. One of the first examples is the 2SLS estimation with a multivalued treatment \citep{angrist-imbens1995}. The reason for this trend is twofold: the direct identification of the structural function is difficult, and the researchers want the assumptions they make to be minimal. However, the weighted average only provides summary information on the structural function, which is not sufficient in optimal policy designs. In this paper, we make an effort to identify the structural function itself at the expense of making stronger assumptions. In the empirical application in Section \ref{sec:numeric}, we show that the estimated nonlinear structural function can help determine the optimal sleep time while the TSLS estimates cannot.

The identification in the RD design is related to that in the \emph{instrumental variables} (IV) models of triangular systems. The control function approach described in \cite{imbens2009identification} states that the variation in the treatment becomes exogenous after conditioning on the control function. As explained later, a similar phenomenon is also observed in the RD model with a continuous treatment. It explains the intuition behind the nonparametric identification equation. Another relevant literature is the one that studies instruments with small support \citep{torgovitsky2015identification,D2015id,TORGOVITSKY2017minimum}. These papers examine a model with a discrete instrumental variable and a continuous treatment variable. Since RD can be interpreted as a local IV approach, our framework is related to the large body of this IV literature.

That said, this paper is not a straightforward extension of the results from the IV literature. The difference between the RD design and the IV approach includes the following. First, the identification in the IV model relies on the (conditional) independence of the IV with the error term, while the identification in RD designs is based on the discontinuity and does not depend on any independence assumption. This is one of the reasons that the RD method is considered to be more credible than IV for causal inference. Second, in an RD design, the running variable directly affects both the outcome and the treatment and hence does not satisfy the exclusion restriction typically required in IV models. This inclusion also gives rise to the unique issue of extrapolation away from the cutoff. Third, the estimation procedure in the RD design focuses on the local neighborhood of the cutoff. It is theoretically more challenging to derive the asymptotic properties of the estimator.

The problem studied by this paper is also related to the broad literature on the nonparametric identification of structural functions. Relevant papers include \cite{matzkin2003nonparametric} and \cite{hoderlein2007identification,hoderlein2009identification}. The identification there relies on the exogeneity of the treatment, which is not required in RD designs.

\section{RD Design with a continuous treatment} \label{sec:identification}

This section describes the RD model with a continuous treatment, explains the assumptions of the model, and discusses the nonparametric identification of the structural function local to the cutoff. 

\subsection{The model}

We study the following causal equation:
\begin{align} \label{eqn:outcome-equation}
    Y & = g^*(T,R,\varepsilon),
\end{align}
where $Y$ is the outcome of interest, $T$ is the treatment, and $R$ is the running variable. The scalar variable $\varepsilon$ represents unobserved causal factors in the outcome equation. We assume all the random variables are absolutely continuous. The function $g^*$ is the unknown true structural function. 

The running variable $R$ partly determines the treatment $T$ by the following treatment choice function:
\begin{align} \label{eqn:first-stage}
    T = 
    \begin{cases}
        m_0(R,U_0), R < \bar{r}, \\
        m_1(R,U_1), R \geq \bar{r},
    \end{cases}
\end{align}
where $\bar{r}$ is the cutoff value, and $U_0$ and $U_1$ are scalar variables,  representing other factors that are not observable to an econometrician. For easy reference, they will be referred to as the error terms hereafter. The important feature of the RD design is that the treatment varies discontinuously when the running variable crosses the cutoff $\bar{r}$. The functions $m_0$ and $m_1$ represent respectively the treatment choice mechanism when $R$ is below and above the cutoff. 

It is important to point out that the variables $U_0,U_1,T$ and $R$ are allowed to be correlated with the error term $\varepsilon$. If we assume $\varepsilon$ to be independent of $(T,R)$, then we can follow \cite{matzkin2003nonparametric} or \cite{hoderlein2007identification} to identify the structural function. If we assume $\varepsilon \perp R$ and $R$ is excluded from $m_0,m_1$ and $g$, then we can follow \cite{torgovitsky2015identification} to identify the structural function by treating the binary variable $\mathbf{1}\{R \geq \bar{r}\}$ as the instrument.

We make the following assumptions on the model imposed by (\ref{eqn:outcome-equation}) - (\ref{eqn:first-stage}). Let $\mathcal{G}$ be the set of candidate structural functions such that the true $g^*$ is contained in $\mathcal{G}$. That is, $\mathcal{G}$ is the infinite-dimensional parameter space where the structural function belongs to. Denote the conditional distribution function by $F_{\cdot|\cdot}(\cdot | \cdot)$, the conditional density function by $f_{\cdot|\cdot}(\cdot | \cdot)$, and the conditional quantile function by $F^{-1}_{\cdot|\cdot}(\cdot | \cdot)$. 


\begin{assumption} [Dual Monotonicity] \label{ass:dual-monotonicity} 
    \ \begin{enumerate}[label = (\roman*)] 
        \item Every $g \in \mathcal{G}$ satisfies that for each given $T=t$ and $R=r$, $g$ is strictly increasing in $\varepsilon$.
        \item For each given $R=r$, $m_0$ is strictly increasing in $U_0$ and $m_1$ is strictly increasing in $U_1$.
    \end{enumerate}
\end{assumption}

\begin{assumption} [Smoothness] \label{ass:smoothness} 
    
    \ \begin{enumerate} [label = (\roman*)] 
        \item The functions $m_0,m_1$ and every $g \in \mathcal{G}$ are continuous on their respective domains.
        \item The conditional quantile functions $F^{-1}_{U_0 | R}(u | r)$ and $F^{-1}_{U_1 | R}(u | r)$ are strictly increasing in $u$ and continuous in $(u,r)$.
        \item The conditional distribution functions $F_{\varepsilon | U_0,R}(e | u , r)$ and $F_{\varepsilon | U_1,R}(e | u , r)$ are strictly increasing in $e$ and are continuous in $r$ at $\bar{r}$.

        \item The running variable $R$ is absolutely continuous, and its density is strictly positive around the cutoff $\bar{r}$.
    \end{enumerate}
\end{assumption}

\begin{assumption} [Rank Similarity] \label{ass:rank-similarity}
    $U_0 | (\varepsilon,R=\bar{r}^-) $ has the same distribution as $U_1 | (\varepsilon,R=\bar{r}^+) $. That is, 
    \begin{align*}
       \lim_{r \uparrow \bar{r}} f_{U_0|\varepsilon,R}(u | e,r) = \lim_{r \downarrow \bar{r}} f_{U_1|\varepsilon,R}(u | e,r).
    \end{align*}
\end{assumption}

Assumption \ref{ass:dual-monotonicity} defines a one-to-one mapping between $(Y,T)$ and $(\varepsilon,U_0,U_1)$ for a given value of $R$. Assumption \ref{ass:smoothness} states that except for the discontinuity introduced in (\ref{eqn:first-stage}), everything else is assumed to be reasonably smooth. Assumption \ref{ass:rank-similarity} is similar to Assumption 3 in \cite{dong2021regression}. It imposes the rank similarity condition \citep{chernozhukov2005endogeneity} on $(U_0,U_1)$. 

The treatment choice functions $(m_0,m_1)$ are not identified. Rather than trying to identify them, it is more convenient to consider a normalization to a quantile representation. By using the monotonicity of $m_0$ and $m_1$ in Assumption \ref{ass:dual-monotonicity}(ii), we define
\begin{align} \label{eqn:def-U}
    U = \mathbf{1}\{R < \bar{r}\}  F_{U_0|R}(U_0 | R) + \mathbf{1}\{R \geq \bar{r}\}  F_{U_1|R}(U_1 | R) = F_{T | R}(T | R),
\end{align}
as the conditional rank of $T$ given $R$.\footnote{The second equality in Equation (\ref{eqn:def-U}) is proved in Lemma \ref{lm:quantile-representation}} Then the treatment choice model in (\ref{eqn:first-stage}) can be written as
\begin{align*}
    T = h(R,U) = 
    \begin{cases}
        h_0(R,U), R < \bar{r}, \\
        h_1(R,U), R \geq \bar{r},
    \end{cases}
\end{align*}
where 
\begin{align*}
    h_0(r,u) = m_0\big(r,F_{U_0|R}^{-1}(u | r)\big) \text{ and } h_1(r,u) = m_1\big(r,F_{U_1|R}^{-1}(u | r)\big) .
\end{align*}
By using $[r_0,r_1]$ to denote the support of $R$, we can write the domains of $h_0$ and $h_1$ respectively as $[r_0,\bar{r}] \times [0,1]$ and $[\bar{r},r_1] \times [0,1]$.

The following lemma shows that the function $h$ defined above is the conditional quantile function of $T$ given $R$, and the quantile representation is a valid normalization in the sense that it preserves the monotonicity and smoothness conditions. Consequently, the function $h$ (including both $h_0$ and $h_1$) and the rank $U = h^{-1}(R,T)$ are identified from the data, where $h^{-1}$ denotes the inverse of $h$ with respect to the second argument $U$. 

\begin{lemma} [Quantile Representation] \label{lm:quantile-representation}
    The following statements hold under Assumptions \ref{ass:dual-monotonicity} - \ref{ass:rank-similarity}:
    \begin{enumerate} [label = (\roman*)]
        \item $U \perp R$, $U | R \sim \text{Unif}[0,1]$, and $\mathbb{P}(T \leq h(R,u) | R) = u, u \in [0,1]$.
        \item For each $R=r$, $h_0$ and $h_1$ are strictly increasing in $U$.
        \item The functions $h_0$ and $h_1$ are continuous.
        \item The conditional distribution function $F_{\varepsilon|U,R}(e | u,r)$ is strictly increasing in $e$ and is continuous in $r$ at $\bar{r}$, that is,
        \begin{align*}
            \lim_{r \uparrow \bar{r}}F_{\varepsilon|U,R}(e | u,r) = \lim_{r \downarrow \bar{r}}F_{\varepsilon|U,R}(e | u,r), \text{ for every } (e,u).
        \end{align*}
    \end{enumerate}
\end{lemma}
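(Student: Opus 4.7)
The proof is organized around a single bijective change of variable: for each $R = r$ near $\bar{r}$, the map $u_j \mapsto F_{U_j | R}(u_j | r)$ sends $U_j$ monotonically onto $[0,1]$ by Assumption \ref{ass:smoothness}(ii), and each of the four claims reduces to pushing a primitive property of $(U_0, U_1, \varepsilon)$ through this transformation.

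For part (i), strict monotonicity of $m_0$ in its second argument (Assumption \ref{ass:dual-monotonicity}(ii)) together with absolute continuity of $U_0 | R$ (Assumption \ref{ass:smoothness}(ii)) gives
\begin{equation*}
F_{T | R}(t | r) = F_{U_0 | R}\bigl(m_0^{-1}(r, t) \bigm| r\bigr), \qquad r < \bar{r},
\end{equation*}
and substituting $t = T = m_0(r, U_0)$ yields $F_{T | R}(T | R) = F_{U_0 | R}(U_0 | R)$; the case $r \geq \bar{r}$ is symmetric, and the two cases together verify the second equality in (\ref{eqn:def-U}). The probability integral transform then delivers $U | R \sim \text{Unif}[0, 1]$, which in turn gives $U \perp R$. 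The quantile identity $\mathbb{P}(T \leq h(R, u) | R) = u$ follows because, by the definition $h_j(r, u) = m_j(r, F_{U_j | R}^{-1}(u | r))$, the map $h(r, \cdot)$ inverts $F_{T | R}(\cdot | r)$. Parts (ii) and (iii) then fall out of this representation: $h_j$ is a composition of two maps that are strictly increasing and jointly continuous, by Assumptions \ref{ass:dual-monotonicity}(ii), \ref{ass:smoothness}(i), and \ref{ass:smoothness}(ii), so both properties transfer to $h_j$.

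Part (iv) is the step that carries the real content. Strict monotonicity in $e$ is easy: the bijection $U_j \leftrightarrow U$ given $R = r$ yields
\begin{equation*}
F_{\varepsilon | U, R}(e | u, r) = F_{\varepsilon | U_j, R}\bigl(e \bigm| F_{U_j | R}^{-1}(u | r),\; r\bigr),
\end{equation*}
with $j = 0$ for $r < \bar{r}$ and $j = 1$ for $r \geq \bar{r}$, so Assumption \ref{ass:smoothness}(iii) carries strict monotonicity in $e$ through the substitution. Continuity at $\bar{r}$ is the main obstacle, and this is where Rank Similarity (Assumption \ref{ass:rank-similarity}) must do its job. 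The plan is to take one-sided limits of the two displays, use Bayes' rule to rewrite each as a ratio involving $f_{U_j | \varepsilon, R}$, $f_{\varepsilon | R}$, and $f_{U_j | R}$, and then apply rank similarity to equate the conditional-on-$\varepsilon$ factors across the cutoff. Integrating that identity against $f_{\varepsilon | R}(\cdot | \bar{r}^\pm)$ shows in turn that $F_{U_0 | R}(\cdot | \bar{r}^-) = F_{U_1 | R}(\cdot | \bar{r}^+)$, so the inverse-CDF arguments on the two sides also coincide and the two ratios collapse to a common value. The delicate bookkeeping will be justifying the exchange of limits with the CDF inversions and integrations, for which one can lean on the joint continuity guaranteed by Assumption \ref{ass:smoothness}(i)--(iii) together with the strict positivity of the running-variable density near $\bar{r}$ from Assumption \ref{ass:smoothness}(iv).
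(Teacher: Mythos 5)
Your proposal mirrors the paper's proof step by step: parts (i)--(iii) via the probability-integral-transform change of variable $U_j \mapsto F_{U_j|R}(U_j|R)$ and the monotonicity/continuity of the composed maps, and part (iv) via the substitution $F_{\varepsilon|U,R}(e|u,r) = F_{\varepsilon|U_j,R}(e|F_{U_j|R}^{-1}(u|r),r)$ followed by Bayes' rule plus Rank Similarity to match the $\varepsilon$-given-$U_j$ conditionals and the $U_j|R$ marginals across the cutoff. The only cosmetic difference is that the paper simply asserts the two distributional consequences of Bayes' rule (that $U_0|R=\bar{r}^- \sim U_1|R=\bar{r}^+$ and $\varepsilon|U_0,R=\bar{r}^- \sim \varepsilon|U_1,R=\bar{r}^+$) while you sketch the integration step; both rely on the same (implicit) continuity of $f_{\varepsilon|R}$ across $\bar{r}$ when marginalizing, so this is the same argument, not a different route.
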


After the normalization, $U$ is independent of $R$ but $U$ and $\epsilon$ are possibly correlated even after conditioning on $R$.
Let $F_{Y | T, R}$ be the conditional distribution function of $Y$ given $T$ and $R$. We define
\begin{align*}
    F^-_{Y | T ,R}(y | t, r) & = 
    \begin{cases}
        F_{Y | T ,R}(y | t, r), & \text{ if } r < \bar{r}, \\
        \lim_{r \uparrow \bar{r}} F_{Y | T ,R}(y | t, r), & \text{ if } r = \bar{r}.
    \end{cases} \\
    F^+_{Y | T ,R}(y | t, r) & = 
    \begin{cases}
        F_{Y | T ,R}(y | t, r), & \text{ if } r > \bar{r}, \\
        \lim_{r \downarrow \bar{r}} F_{Y | T ,R}(y | t, r), & \text{ if } r = \bar{r}.
    \end{cases} 
\end{align*}
The above left and right limits exist in view of Assumptions \ref{ass:dual-monotonicity} and \ref{ass:smoothness}. The following assumption states that the support of the unobserved $\varepsilon$ does not vary with $U$ or $R$. This invariance of the support is not strong since it still allows $\varepsilon$ to be correlated with $U$ or $R$ in any way. 

\begin{assumption} [Support Invariance] \label{ass:support-invariance}
     Supp$(\varepsilon | U=u, R=r)$ does not depend on $u$ or $r$ in the neighborhood of $\bar{r}$. This common support is denoted by $\mathcal{E}$.
\end{assumption}

\subsection{Nonparametric identification}

We derive an important implication of the model (\ref{eqn:outcome-equation}) - (\ref{eqn:first-stage}). This implication is the key to identification and estimation.
A function $g \in \mathcal{G}$ is said to satisfy Condition (\ref{eqn:refutable-implication}) if for every $e \in \mathcal{E}$ and $ u \in [0,1]$,
\begin{align}  \label{eqn:refutable-implication}
    F^-_{Y | T,R} (g(h_0(\bar{r},u),\bar{r},e) | h_0(\bar{r},u),\bar{r}) = F^+_{Y | T,R} (g(h_1(\bar{r},u),\bar{r},e) | h_1(\bar{r},u),\bar{r}).
\end{align}
If the function $g$ in Condition (\ref{eqn:refutable-implication}) is equal to the true $g^*$, then the left-hand side of (\ref{eqn:refutable-implication}) is equal to the conditional distribution of $\varepsilon$ given $U$ evaluated from the left side of the cutoff $\bar{r}$. Symmetrically, the right-hand side of (\ref{eqn:refutable-implication}) is equal to the conditional distribution of $\varepsilon $ given $U$ evaluated from the right side of the cutoff $\bar{r}$. Then the equality holds by the continuity of $F_{\varepsilon|U,R}(e|u,\cdot)$ stated in Lemma \ref{lm:quantile-representation}(iv). We summarize this result in the following lemma.

\begin{lemma} [Local Control Function] \label{lm:local-control}
    Under Assumptions \ref{ass:dual-monotonicity} - \ref{ass:rank-similarity}, $g^*$ satisfies Condition (\ref{eqn:refutable-implication}) with both sides of the equation equal to $F_{\varepsilon|U,R}(e| u,\bar{r})$.

\end{lemma}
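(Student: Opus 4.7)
The plan is to rewrite $F^-_{Y|T,R}$ and $F^+_{Y|T,R}$ in terms of the conditional distribution of $\varepsilon$ given $(U,R)$ by exploiting the two monotonicity conditions in Assumption \ref{ass:dual-monotonicity} together with the quantile representation, and then to pass to the cutoff via the continuity facts recorded in Assumption \ref{ass:smoothness} and Lemma \ref{lm:quantile-representation}.

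For the left-hand side, I would proceed as follows. Fix $r < \bar{r}$ and any $t$ in the support of $T|R=r$. By Assumption \ref{ass:dual-monotonicity}(i), $g^*(t,r,\cdot)$ is strictly increasing and continuous, so it has an inverse $\phi(\,\cdot\,;t,r)$ in the third argument, and
\begin{align*}
    F_{Y|T,R}(y|t,r) = \mathbb{P}\bigl(g^*(t,r,\varepsilon) \leq y \,\big|\, T=t, R=r\bigr) = F_{\varepsilon|T,R}\bigl(\phi(y;t,r) \,\big|\, t, r\bigr).
\end{align*}
Since $h_0(r,\cdot)$ is a strictly increasing continuous bijection between $[0,1]$ and the support of $T|R=r$, conditioning on $\{T=t, R=r\}$ is the same event as conditioning on $\{U = h_0^{-1}(r,t), R=r\}$, where $h_0^{-1}$ denotes the inverse in the second argument. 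Hence
\begin{align*}
    F_{Y|T,R}(y|t,r) = F_{\varepsilon|U,R}\bigl(\phi(y;t,r) \,\big|\, h_0^{-1}(r,t), r\bigr).
\end{align*}

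Now specialize to $t = h_0(\bar{r},u)$ and $y = g^*(h_0(\bar{r},u),\bar{r},e)$ and let $r \uparrow \bar{r}$. By joint continuity of $g^*$ (Assumption \ref{ass:smoothness}(i)) and strict monotonicity in $\varepsilon$, $\phi(y;h_0(\bar{r},u),r) \to e$; by continuity of $h_0$ (Lemma \ref{lm:quantile-representation}(iii)), $h_0^{-1}(r,h_0(\bar{r},u)) \to h_0^{-1}(\bar{r},h_0(\bar{r},u)) = u$. Combining these with the continuity of $F_{\varepsilon|U,R}(e|u,\cdot)$ at $\bar{r}$ from the left (Lemma \ref{lm:quantile-representation}(iv)) yields
\begin{align*}
    F^-_{Y|T,R}\bigl(g^*(h_0(\bar{r},u),\bar{r},e) \,\big|\, h_0(\bar{r},u), \bar{r}\bigr) = F_{\varepsilon|U,R}(e \mid u, \bar{r}).
\end{align*}
An entirely symmetric argument for $r > \bar{r}$ with $h_1$ in place of $h_0$, followed by the right-limit version of Lemma \ref{lm:quantile-representation}(iv), gives the same value $F_{\varepsilon|U,R}(e \mid u, \bar{r})$ for the right-hand side of (\ref{eqn:refutable-implication}).

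The main obstacle is not any single step but rather the careful bookkeeping when taking the limit: one must justify that the two arguments of $F_{\varepsilon|U,R}$ and the conditioning level in $R$ all move simultaneously without disrupting the limit. This is where the joint continuity of $g^*$ and $h_0$, the strict monotonicity giving well-defined and continuous inverses $\phi$ and $h_0^{-1}$, and above all the two-sided continuity in $r$ of $F_{\varepsilon|U,R}$ provided by Lemma \ref{lm:quantile-representation}(iv) are indispensable. Once these pieces are in place, no further computation is needed.
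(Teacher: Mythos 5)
Your proof is built from the same three ingredients as the paper's: invert $g^*$ using its strict monotonicity in $\varepsilon$, pass from conditioning on $T$ to conditioning on $U$ via the quantile representation, and close the argument with the continuity of $F_{\varepsilon|U,R}$ across the cutoff from Lemma \ref{lm:quantile-representation}(iv). The substance matches; the organization differs, and the difference matters.

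The paper parametrizes the approach to the cutoff along $(t_r, y_r, r) = (h_0(r,u),\, g^*(h_0(r,u),r,e),\, r)$, so that after inverting $g^*$ and switching to $U$, the conditioning point is $\{U = u, R = r\}$ with $u$ \emph{fixed} and the inverse level is exactly $e$, \emph{fixed}. The final step is then $\lim_{r\uparrow\bar{r}} F_{\varepsilon|U,R}(e|u,r) = F_{\varepsilon|U,R}(e|u,\bar{r})$, which is precisely what Lemma \ref{lm:quantile-representation}(iv) states, with no further continuity needed. You instead freeze $t = h_0(\bar{r},u)$ and let the pulled-back coordinates $e_r = \phi(y; h_0(\bar{r},u),r)$ and $u_r = h_0^{-1}(r, h_0(\bar{r},u))$ drift toward $(e,u)$ as $r \uparrow \bar{r}$. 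Your closing step therefore needs $\lim_{r\uparrow\bar{r}} F_{\varepsilon|U,R}(e_r|u_r,r) = F_{\varepsilon|U,R}(e|u,\bar{r})$ with all three arguments moving simultaneously, i.e.\ \emph{joint} continuity of $F_{\varepsilon|U,R}(\cdot|\cdot,\cdot)$ near $(e,u,\bar{r}^-)$. Lemma \ref{lm:quantile-representation}(iv) gives continuity in $r$ at fixed $(e,u)$ only, and Assumption \ref{ass:smoothness}(iii) supplies continuity in $r$ and strict (not necessarily continuous) monotonicity in $e$, but neither asserts continuity of $F_{\varepsilon|U,R}$ jointly in $(e,u,r)$.

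You flag this yourself at the end ("the careful bookkeeping when taking the limit") but then cite Lemma \ref{lm:quantile-representation}(iv) as if it settled it; it does not. This is the one genuine gap. It can be repaired either by (a) establishing joint continuity of $F_{\varepsilon|U,R}$ from the absolute continuity of the random variables together with Assumption \ref{ass:smoothness}(ii)--(iii), or, more cleanly and in line with the paper, (b) re-parametrizing the limit by substituting $t = h_0(r,u)$ and $y = g^*(h_0(r,u),r,e)$ inside $F_{Y|T,R}(y|t,r)$, so that the reduction to $\mathbb{P}(\varepsilon\le e \mid U=u, R=r)$ happens \emph{before} taking $r\uparrow\bar{r}$, leaving $(e,u)$ frozen and only the single-variable continuity in $r$ to be invoked. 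Route (b) is what the paper does and costs nothing extra, so I would adopt it.
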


When $g = g^*$, Condition (\ref{eqn:refutable-implication}) can be written as 
\begin{align*}
    \lim_{r \uparrow \bar{r}} \mathbb{P}(\varepsilon | T=h_0(r,u),R=r) = \lim_{r \downarrow \bar{r}} \mathbb{P}(\varepsilon | T=h_1(r,u),R=r)
\end{align*}
This leads to another interpretation of Lemma \ref{lm:local-control}: $U$ can serve as a control function local to the cutoff. After fixing the value of $U$, the variation in the treatment $T$ becomes locally exogenous. This is because given $U$ and $R$, the treatment $T$ becomes deterministic. The only variation left in $T$ around the cutoff is due to the discontinuity in the treatment choice function.
Lemma \ref{lm:local-control} is essentially a version of Lemma 1(i) in \citet{dong2021regression}. From the IV perspective, Lemma \ref{lm:local-control} corresponds to Theorem 1 in \cite{imbens2009identification}. It is also similar to Theorem 1 in \citet{torgovitsky2015identification} in that it provides a (necessary) characterization of the identified set of the structural function.\footnote{The identified set can be defined as the subset of $\mathcal{G}$ that contains the functions $g$ that can generate the observed distribution of $(Y,T,R)$. However, it is rather a detour to formally define such a set because in Section \ref{sec:estimation} we directly use Condition (\ref{eqn:refutable-implication}) for estimation.} 

For any $g \in \mathcal{G}$, Lemma \ref{lm:local-control} can be used to verify whether $g = g^*$. In particular, if 
\begin{align*} 
    F^-_{Y | U,R} (g(h_0(\bar{r},u),\bar{r},e) | u,\bar{r}) \ne F^+_{Y | U,R} (g(h_1(\bar{r},u),\bar{r},e) | u,\bar{r}), \text{ for some } e \text{ and } u,
\end{align*}
then $g$ can not be the true structural function. We further introduce some regularity conditions below.

\begin{assumption} \label{ass:strong-discontinuity} 
   \ \begin{enumerate} [label = (\roman*)]
        \item (Fuzzy RD). The support of $T | R$ from just below and above the cutoff are intervals denoted respectively by $\text{Supp}(h_0(\bar{r},U)) = [t_0',t_0''] \text{ and } \text{Supp}(h_1(\bar{r},U)) = [t_1',t_1''].$
        The two supports are overlapping: $[t_0',t_0''] \cap [t_1',t_1''] \ne \emptyset$.\footnote{Infinite intervals are also allowed. For example, $\text{Supp}(h_0(\bar{r},U))$ can be $(-\infty,t_0'']$, $[t_0',\infty)$, or $\mathbb{R}$. We use the notation $[t_0',t_0'']$ to represent all these cases. }
        \item (Strong Discontinuity). Local to the cutoff, the functions $h_0$ and $h_1$ intersects and only intersects finitely many times. That is, the following set is nonempty and finite:
        \begin{align*}
            \{ h_0(\bar{r},u): h_0(\bar{r},u) = h_1(\bar{r},u) \in [t_0',t_0''] \cap [t_1',t_1''], u \in [0,1] \}.
        \end{align*}
        
    \end{enumerate}
\end{assumption}

Assumption \ref{ass:strong-discontinuity} imposes restrictions on the nature of the discontinuity.
Assumption \ref{ass:strong-discontinuity}(i) requires that the RD design is fuzzy in that there are treatment levels that are taken both below and above the cutoff. Assumption \ref{ass:strong-discontinuity}(ii) imposes restrictions on the strength of the discontinuity. It requires that the conditional quantile functions $h_0(\bar{r},\cdot)$ and $h_1(\bar{r},\cdot)$ only intersects finitely many times. The two curves can intersect but not overlap. If the two functions $h_0(\bar{r},\cdot)$ and $h_1(\bar{r},\cdot)$ overlaps on some interval, then the structural function is not identified on that interval because there is no exogenous variation in the treatment inside that interval.\footnote{In that case, it is possible to partially identify the structural function. } In the extreme case where the two curves completely overlap, there is no discontinuity. 

With the above assumptions, we present the main identification result of the paper. The following theorem shows that Condition (\ref{eqn:refutable-implication}) identifies the true structural function up to a monotone transformation of the error term.

\begin{theorem} [Nonparametric Identification] \label{thm:id-rd-cutoff}
    Let Assumptions \ref{ass:dual-monotonicity} - \ref{ass:strong-discontinuity} hold. If $g \in \mathcal{G}$ satisfies Condition (\ref{eqn:refutable-implication}), then there exists a continuous and strictly increasing function $\lambda^g$ such that for every $t \in [t_0',t_0''] \cup [t_1',t_1''], e \in \mathcal{E},$ and $u \in [0,1]$, $g^*(t,\bar{r},e) = g(t,\bar{r},\lambda^g(e))$.

\end{theorem}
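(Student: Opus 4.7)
The plan is to translate Condition (\ref{eqn:refutable-implication}) into a pointwise equation between $g$ and $g^*$ and then use the discontinuity structure from Assumption \ref{ass:strong-discontinuity} to force the resulting error-transformation to depend only on $e$. Let $\eta^*(y;t,r)$ denote the unique $\varepsilon$ solving $g^*(t,r,\varepsilon) = y$, which exists by Assumption \ref{ass:dual-monotonicity}(i). Applying Lemma \ref{lm:local-control} to $g^*$ gives, for $t \in [t_0',t_0'']$,
\begin{align*}
F^-_{Y|T,R}(y \mid t,\bar{r}) = F_{\varepsilon|U,R}\bigl(\eta^*(y;t,\bar{r}) \,\big|\, h_0^{-1}(\bar{r},t),\bar{r}\bigr),
\end{align*}
and an analogous expression from above with $h_1$ on $[t_1',t_1'']$. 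Substituting $t = h_0(\bar{r},u)$ and $t = h_1(\bar{r},u)$ into Condition (\ref{eqn:refutable-implication}) for a candidate $g$ and cancelling the outer $F_{\varepsilon|U,R}(\cdot \mid u,\bar{r})$ using its strict monotonicity from Lemma \ref{lm:quantile-representation}(iv) produces the pointwise identity
\begin{align*}
\eta^*\bigl(g(h_0(\bar{r},u),\bar{r},e);h_0(\bar{r},u),\bar{r}\bigr) = \eta^*\bigl(g(h_1(\bar{r},u),\bar{r},e);h_1(\bar{r},u),\bar{r}\bigr)
\end{align*}
for all $u \in [0,1]$ and $e \in \mathcal{E}$.

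Setting $\phi^g(t,e) := \eta^*(g(t,\bar{r},e);t,\bar{r})$ on $[t_0',t_0''] \cup [t_1',t_1'']$, the identity becomes $\phi^g(h_0(\bar{r},u),e) = \phi^g(h_1(\bar{r},u),e)$. Continuity of $g,g^*$ (Assumption \ref{ass:smoothness}(i)) makes $\phi^g$ continuous in $t$, while strict monotonicity of $g,g^*$ in $\varepsilon$ makes $\phi^g$ continuous and strictly increasing in $e$. The remaining task is therefore to show that for each fixed $e$, the function $\tau_e(\cdot) := \phi^g(\cdot,e)$ is constant on $[t_0',t_0''] \cup [t_1',t_1'']$. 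Once this is established, its common value $\lambda_0^g(e)$ is continuous and strictly increasing in $e$, and the equality $g^*(t,\bar{r},\lambda_0^g(e)) = g(t,\bar{r},e)$ can be inverted in $e$ to yield the conclusion of the theorem with $\lambda^g := (\lambda_0^g)^{-1}$.

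For the constancy of $\tau_e$, Assumption \ref{ass:strong-discontinuity}(ii) partitions $[0,1]$ by the finitely many intersections of $h_0(\bar{r},\cdot)$ and $h_1(\bar{r},\cdot)$ into open arcs on which $h_0 - h_1$ has constant sign. On an interior arc $(u^-,u^+)$ with $h_0 < h_1$ and common endpoint values $a = h_0(\bar{r},u^-) = h_1(\bar{r},u^-) < h_0(\bar{r},u^+) = h_1(\bar{r},u^+) = b$, I would introduce the map $q(t) := h_0(\bar{r},h_1^{-1}(\bar{r},t))$ on $[a,b]$: it satisfies $q(a)=a$, $q(b)=b$, and $q(t)<t$ on $(a,b)$, and the identity rewrites as $\tau_e(q(t)) = \tau_e(t)$. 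Iterating gives $q^n(t) \downarrow a$, and continuity of $\tau_e$ forces $\tau_e \equiv \tau_e(a)$ on $[a,b]$. For an arc adjacent to $u = 0$ or $u = 1$, where the arc's endpoints need not give a common value of $h_0$ and $h_1$, the analogous map still has a unique interior fixed point, and iterating its inverse outward extends the constancy of $\tau_e$ from the overlap onto the exclusive piece of $[t_0',t_0'']$ or $[t_1',t_1'']$. Assumption \ref{ass:strong-discontinuity}(i) ensures $[t_0',t_0''] \cup [t_1',t_1'']$ is a connected interval, so continuity of $\tau_e$ glues the finitely many constant pieces into a single value.

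The last step is where I expect the main obstacle, specifically the bookkeeping needed to cover the non-overlap parts of $[t_0',t_0''] \cup [t_1',t_1'']$ via the inverse iteration and to verify that both the contraction and its inverse converge to their intended fixed points across every arc. The finiteness of intersections in Assumption \ref{ass:strong-discontinuity}(ii) keeps this combinatorial gluing tractable, while the fuzzy-overlap condition in Assumption \ref{ass:strong-discontinuity}(i) is what allows continuity of $\tau_e$ to propagate a common constant across the arcs.
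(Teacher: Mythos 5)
Your proposal follows essentially the same sequencing strategy as the paper's proof. You define $\phi^g(t,e) = (g^*)^{-1}(t,\bar{r},g(t,\bar{r},e))$; the paper uses its inverse-in-$e$, namely $\tilde{\lambda}^g(t,e) = g^{-1}(t,\bar{r},g^*(t,\bar{r},e))$. Both work, and you correctly invert at the end. Your cancellation step — expressing $F^{\pm}_{Y|T,R}$ via $g^*$ using Lemma~\ref{lm:local-control} and then stripping $F_{\varepsilon|U,R}(\cdot|u,\bar{r})$ by strict monotonicity — is a slightly more direct route to the invariance $\phi^g(h_0(\bar{r},u),e)=\phi^g(h_1(\bar{r},u),e)$ than the paper's, which establishes the strict monotonicity of $F^-_{Y|T,R}(\cdot|t,\bar{r})$ and cancels on the level of $g$-compositions; the two derivations are equivalent. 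Your iteration with $q = h_0\circ h_1^{-1}$ on each arc is the paper's $\pi = h_1\circ h_0^{-1}$ run in the other direction, and the fixed-point argument is identical.

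The one place where your route is genuinely heavier than the paper's is the extension to the non-overlap part of $[t_0',t_0'']\cup[t_1',t_1'']$. You propose iterating the inverse of the contraction outward and then gluing by continuity, and you flag the bookkeeping as a potential obstacle. The paper's Step 4 avoids the iteration entirely: for $t'$ outside the overlap, let $t''$ be the unique point with $h_0^{-1}(\bar{r},t') = h_1^{-1}(\bar{r},t'')$ (or vice versa); a short argument using the nonemptiness of $\mathcal{T}^\times$ shows $t''$ falls inside the overlap, and a single application of the invariance gives $\tilde{\lambda}^g(t',e)=\tilde{\lambda}^g(t'',e)$, with the overlap already handled. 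If you prefer your iteration route, the key things to verify are that the outward orbit of any point in the exclusive piece actually enters the overlap after finitely many steps, and that the orbit stays within the domain $[t_0',t_0'']\cup[t_1',t_1'']$ — neither is automatic, which is precisely why the paper's one-shot map is preferable.
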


\begin{remark}
    Theorem \ref{thm:id-rd-cutoff} is proved based on the sequencing approaching developed in the proof of Theorem 2 in \cite{torgovitsky2015identification}. 
\end{remark}

Theorem \ref{thm:id-rd-cutoff} is the best one can achieve in terms of identifying the nonseparable structural function because the error term is unobserved. Any $g \in \mathcal{G}$ that satisfies Condition (\ref{eqn:refutable-implication}) is equally good as the true $g^*$. The only difference is that the error term is rescaled by the monotone transformation $\lambda^g$. Therefore, such a function $g$ can also be seen as a ``version'' of $g^*$.

Inspecting the conditions of Theorem \ref{thm:id-rd-cutoff}, we can see that no independence assumption is needed. This is why the RD design is often considered a more credible approach than instrumental variables for conducting causal inference. However, in many studies of RD designs, a local independence assumption is imposed, explicitly making the running variable exogenous around the cutoff. For example, Assumption A3(i) in \cite{hahn2001identification} requires $(\varepsilon,U)$ to be jointly independent of $R$ conditioning on $R$ near $\bar{r}$.
In the binary treatment case, \citet{dong2018alternative} shows that this local independence condition is not needed to achieve identification. 

Based on the observation made in Lemma \ref{lm:local-control}, we can recover the conditional distribution of $\varepsilon$ given $U$ and $R = \bar{r}$. For any $g \in \mathcal{G}$, if $g$ is the true structural function, then the corresponding conditional distribution of $\varepsilon$ is
\begin{align} \label{eqn:def-F-varepsilon-g}
    F_{\varepsilon | U,R}^g(e | u,\bar{r}) = F^-_{Y | T,R} (g(h_0(\bar{r},u),\bar{r},e) | h_0(\bar{r},u),\bar{r}) = F^+_{Y | T,R} (g(h_1(\bar{r},u),\bar{r},e) | h_1(\bar{r},u),\bar{r}).
\end{align}
In fact, the above conditional distribution $F_{\varepsilon | U,R}^g$ is a transformed version of the true conditional distribution $F_{\varepsilon | U,R}$, where the transformation is the $\lambda^g$ defined in Theorem \ref{thm:id-rd-cutoff}. This means that the conditional distribution of $\varepsilon | U,R=\bar{r}$ is identified up to the same monotone transformation as the structural function. To eliminate such inconvenience caused by the error term, we can integrate out $\varepsilon$ and obtain a unique \emph{conditional average structural function} (CASF):
\begin{align*}
    \beta^*(t) = \mathbb{E} [g^*(t,r,\varepsilon) | R=\bar{r}],
\end{align*}
where the expectation is taken with respect to the true conditional distribution of $\varepsilon$ given $ R=\bar{r}$. The following corollary summarizes the above discussion. 
\begin{corollary} [CASF] \label{cor:CASF-identification}
    Let Assumptions \ref{ass:dual-monotonicity} - \ref{ass:strong-discontinuity} hold. For any $g^{\circ} \in \mathcal{G}$ that satisfies Condition (\ref{eqn:refutable-implication}), let $\lambda^g$ be the transformation defined in Theorem \ref{thm:id-rd-cutoff}. The following two statements hold true.\footnote{Since the conditional distribution of $\varepsilon$ given $U,R=\bar{r}$ is identified. We can also identify other structural parameters, including the conditional quantile structural function.}
    \begin{enumerate} [label = (\roman*)]
        \item For every $e \in \mathcal{E}$ and $u \in [0,1]$, $F^{g^{\circ}}_{\varepsilon | U,R}(\lambda^{g^{\circ}}(e) | u,\bar{r}) = F_{\varepsilon | U,R}(e | u,\bar{r}).$
        \item For any $t \in [t_0',t_0''] \cup [t_1',t_1'']$, the CASF $\beta^*(t,\bar{r})$ is uniquely identified as
        \begin{align*}
            \beta^*(t) = \int {g^{\circ}}(t,\bar{r},e) dF^{g^{\circ}}_{\varepsilon | U,R}(e | u,\bar{r}) du.
        \end{align*}
    \end{enumerate}
\end{corollary}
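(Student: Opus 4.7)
The plan is to prove (i) by directly plugging the identification formula of Theorem 1 into the defining equation of $F^{g^\circ}_{\varepsilon|U,R}$, and then to bootstrap (ii) from (i) via a standard change of variables in the outer expectation.

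For part (i), I would start from the key observation of Lemma 2, which asserts that the true $g^*$ satisfies Condition (5) with both sides equal to $F_{\varepsilon|U,R}(e|u,\bar{r})$. In particular,
\begin{equation*}
F_{\varepsilon|U,R}(e|u,\bar{r}) \;=\; F^-_{Y|T,R}\bigl(g^*(h_0(\bar{r},u),\bar{r},e)\,\bigm|\,h_0(\bar{r},u),\bar{r}\bigr).
\end{equation*}
By Theorem 1, applied to $g^\circ$, we have $g^*(t,\bar{r},e) = g^\circ(t,\bar{r},\lambda^{g^\circ}(e))$ for every admissible $t$ and every $e \in \mathcal{E}$. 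Substituting into the display above gives
\begin{equation*}
F_{\varepsilon|U,R}(e|u,\bar{r}) \;=\; F^-_{Y|T,R}\bigl(g^\circ(h_0(\bar{r},u),\bar{r},\lambda^{g^\circ}(e))\,\bigm|\,h_0(\bar{r},u),\bar{r}\bigr),
\end{equation*}
and the right-hand side is precisely the definition (6) of $F^{g^\circ}_{\varepsilon|U,R}$ evaluated at the first argument $\lambda^{g^\circ}(e)$. This proves (i).

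For part (ii), I would first use Lemma 1, which gives $U \perp R$ with $U|R \sim \mathrm{Unif}[0,1]$, to expand the CASF as an iterated integral:
\begin{equation*}
\beta^*(t) \;=\; \mathbb{E}\bigl[g^*(t,\bar{r},\varepsilon)\,\bigm|\,R=\bar{r}\bigr] \;=\; \int_0^1 \int_{\mathcal{E}} g^*(t,\bar{r},e)\, dF_{\varepsilon|U,R}(e|u,\bar{r})\, du.
\end{equation*}
Then I would substitute $g^*(t,\bar{r},e) = g^\circ(t,\bar{r},\lambda^{g^\circ}(e))$ and perform the change of variables $\tilde e = \lambda^{g^\circ}(e)$. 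Since $\lambda^{g^\circ}$ is continuous and strictly increasing on $\mathcal{E}$ by Theorem 1, it is a measurable bijection onto its image, and the pushforward of $F_{\varepsilon|U,R}(\cdot|u,\bar{r})$ under $\lambda^{g^\circ}$ is, by part (i), exactly $F^{g^\circ}_{\varepsilon|U,R}(\cdot|u,\bar{r})$. This converts the inner integral into $\int g^\circ(t,\bar{r},\tilde e)\, dF^{g^\circ}_{\varepsilon|U,R}(\tilde e|u,\bar{r})$, yielding the stated representation.

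I do not expect any serious obstacle; both parts are essentially corollaries of earlier results, and the role of the proof is mainly bookkeeping. The one subtlety worth writing carefully is the change-of-variables step: the identification in Theorem 1 is only up to the monotone transformation $\lambda^{g^\circ}$, so the conditional distribution $F^{g^\circ}_{\varepsilon|U,R}$ is generally not equal to $F_{\varepsilon|U,R}$ pointwise, but only after the relabeling dictated by (i). Once that relabeling is executed consistently in both $g^\circ$ and in the measure, the $\lambda^{g^\circ}$'s cancel and the resulting expression for $\beta^*(t)$ is invariant across all $g^\circ$ that satisfy Condition (5), which is exactly the uniqueness claim of (ii).
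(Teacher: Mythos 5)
Your proof is correct and follows essentially the same route as the paper's: part (i) is the definition of $F^{g^\circ}_{\varepsilon|U,R}$ combined with Theorem \ref{thm:id-rd-cutoff} and Lemma \ref{lm:local-control}, and part (ii) is the change of variables $\tilde e = \lambda^{g^\circ}(e)$ whose pushforward is identified by (i). The paper states part (ii) in one line, whereas you spell out the iterated-integral expansion via Lemma \ref{lm:quantile-representation}(i) and the pushforward argument — that is exactly the bookkeeping the paper leaves implicit.
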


The CASF $\beta^*(t)$ gives the average outcome the policy-maker can achieve when the treatment level for individuals with characteristic $R = \bar{r}$ is set to $t$. It is worth noting the difference between the CASF and the \emph{local average structural function} (LASF) commonly seen in the LATE literature. The LASF represents the average outcome for the so-called compliers, an unobservable subpopulation. Therefore, the policy-maker cannot assign treatment to the compliers even when the LASF is identified. On the other hand, the identified CASF can directly guide the treatment assignment to the subpopulation with $R = \bar{r}$. The derivative of the CASF is not the causal effect specific to any subpopulation. Following the spirit of, for example, \cite{heckman2001policy}, we may call CASF a policy-relevant parameter.

\section{Semiparametric estimation} \label{sec:estimation}

    In this section, we consider parametrizations of the structural function that maintain nonlinearity and nonseparability. We propose a semiparametric estimation procedure and derive its large-sample properties. The estimator is semiparametric because the structural function is parametrically specified, while the treatment choice model is left nonparametrically specified.
    
    We do not consider a fully nonparametric estimator since such a procedure can be too data-demanding for practical use, which is especially true for the RD design since the estimation is in the local neighborhood of the cutoff.\footnote{From the theoretical perspective, it can be challenging to construct a fully nonparametric estimator. If we follow the sieve approach, for example, we would need to consider a basis of functions that are strictly increasing in one of the arguments to accommodate the monotonicity of the structural function, which is a non-trivial task.}


\subsection{Construction of the estimator}

    Consider the following parametrization of $\mathcal{G}$ local to the cutoff.

    \begin{assumption} [Semiparametric Specification]
    There is a one-to-one mapping from the class $\{ (t,e) \mapsto g(t,\bar{r},e): g \in \mathcal{G} \}$ of functions to a finite-dimensional parameter space $\Gamma \subset \mathbb{R}^{d_\Gamma}$. We write such parametrization as $\{g_\gamma(\cdot,\bar{r},\cdot): \gamma \in \Gamma\}$.
    Assume this parametric model is correctly specified, that is, there exists $\gamma^* \in \Gamma$ such that $g_{\gamma^*}(\cdot,\bar{r},\cdot) = g^*(\cdot,\bar{r},\cdot)$.
    \end{assumption}

    \begin{assumption} [Normalization of $\Gamma$] \label{ass:normalization-Gamma}
        For any $\gamma,\gamma' \in \Gamma$, if there exists a transformation $\lambda$ such that
        \begin{align*}
            g_\gamma(\cdot,\bar{r},\cdot) = g_{\gamma'}(\cdot,\bar{r},\lambda(\cdot)),
        \end{align*}
        then $\gamma = \gamma'$ and $\lambda$ is the identity transformation.
    \end{assumption}
    Assumption \ref{ass:normalization-Gamma} is a normalization condition that fixes the scale of the error term $\varepsilon$. An example is provided below to illustrate the parametrization of the structural function. 
    One way to achieve such normalization is to have some treatment value $\tilde{t}$ such that $g_\gamma(\tilde{t},\bar{r},e) = e, \text{ for all } \gamma \in \Gamma.$ 
    
    \begin{example} \label{eg:1}
        Let $\tilde{t} \in [t_0',t_0''] \cap [t_1',t_1'']$. We can specify the structural function by
    \begin{align*}
        g_\gamma(T,\bar{r},\varepsilon) = \gamma_1 (T-\tilde{t}) + \gamma_2 (T-\tilde{t})^2 + \gamma_3 (T-\tilde{t})\varepsilon + \varepsilon.
    \end{align*}
    The parameter $\gamma = (\gamma_1,\gamma_2,\gamma_3)$ is three-dimensional. The function $g_\gamma(T,\bar{r},\varepsilon)$ is strictly increasing in $\varepsilon$ for all $\gamma$ satisfying $\mathbb{P}(\gamma_3(T-\tilde{t})+1 > 0)=1$. The parametrization satisfies Assumption \ref{ass:normalization-Gamma} because, by construction, $g_{\gamma}(\tilde{t},\bar{r},e) = e$ for all values of $\gamma$ and $e$.\footnote{This normalization strategy is presented in Equation (2.5) of \citet{matzkin2003nonparametric}.} The model is quadratic in $T$ and nonseparable between $T$ and $\varepsilon$. The effect of $T$ on $Y$ is allowed to be nonlinear and contain unobserved heterogeneity. The distribution of $\varepsilon$ is not parametrized and thus can be very general.
    \end{example}

    The true parameter $\gamma^*$ in the normalized semiparametric model can be identified as follows. We use $h^* = (h_0^*,h_1^*)$ to signify the true conditional quantile functions and $h = (h_0,h_1)$ a generic pair of conditional quantile functions. Let $w(e,u)$ be a weighting function defined on $\mathbb{R} \times [0,1]$.
    Define the criterion function as 
    \begin{align*}
        \left\lVert D_{\gamma,h} \right\rVert_{w} = \left( \int_0^1 \int_{\mathbb{R}} | D_{\gamma,h}(e,u)|^2 w(e,u) d edu \right)^{1/2} ,
    \end{align*}
    where $D_{\gamma,h}(e,u)$ is defined to be
    \begin{align} \label{eqn:D-gammah-def}
         \int_0^u \left( F^-_{Y|T,R}( g_\gamma(h_0(\bar{r},v),\bar{r},e) | h_0(\bar{r},v),\bar{r} ) - F^+_{Y|T,R}( g_\gamma(h_1(\bar{r},v),\bar{r},e) | h_1(\bar{r},v),\bar{r} ) \right) dv.
    \end{align}
    This criterion function is based on Equation (\ref{eqn:refutable-implication}), which by Lemma \ref{lm:local-control} is a necessary characterization of the identified set. We take an integral form of Condition (\ref{eqn:refutable-implication}) because it gives a faster convergence rate of the resulting estimator.
    
    \begin{assumption} [Weighting function $w$] \label{ass:w}
        The function $w$ is nonnegative, integrates to one, and is bounded on $\mathbb{R} \times [0,1]$. The support of $w$ contains $\mathcal{E} \times [0,1]$. 
    \end{assumption}
    In practice, we may use a weighting function $w$ that is supported on the entire domain $\mathbb{R} \times [0,1]$ since $\mathcal{E}$ is unknown.
    The following corollary provides the semiparametric identification result, which is based on the nonparametric identification result in Section \ref{sec:identification}. It shows that the criterion function, when evaluated at the true nuisance parameter value $h^*$, is uniquely minimized by the true $\gamma^*$.

    \begin{corollary} [Semiparametric Identification] \label{cor:gamma-star-identification}
        Let Assumptions \ref{ass:dual-monotonicity} - \ref{ass:w} hold.
        For any $\gamma \in \Gamma$ such that $\gamma \ne \gamma^*$, we have $\norm{ D_{\gamma,h^*} }_w > \norm{ D_{\gamma^*,h^*} }_w = 0.$
        In other words, $\gamma^*$ is the unique minimizer of $\norm{ D_{\gamma^*,h^*} }_w$.
    \end{corollary}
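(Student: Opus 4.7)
The plan is to decompose the claim into two pieces: (a) $D_{\gamma^*,h^*}$ vanishes identically, so its weighted $L^2$ norm is zero; and (b) any $\gamma \neq \gamma^*$ produces a strictly positive norm. Part (a) is essentially a corollary of Lemma \ref{lm:local-control}. When we plug $g = g_{\gamma^*} = g^*$ and $h = h^*$ into the definition (\ref{eqn:D-gammah-def}), the integrand at each $v \in [0,1]$ becomes the difference of the left- and right-limit expressions in Condition (\ref{eqn:refutable-implication}), both of which equal $F_{\varepsilon|U,R}(e|v,\bar r)$ by Lemma \ref{lm:local-control}. Hence the inner integrand is identically zero, so $D_{\gamma^*,h^*}(e,u)=0$ for every $(e,u)$, and in particular $\norm{D_{\gamma^*,h^*}}_w=0$.

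For part (b), the plan is to argue by contrapositive: assume $\norm{D_{\gamma,h^*}}_w=0$ and deduce $\gamma=\gamma^*$. Because $w$ is nonnegative and its support contains $\mathcal{E}\times[0,1]$ (Assumption \ref{ass:w}), a zero weighted $L^2$ norm forces $D_{\gamma,h^*}(e,u)=0$ for almost every $(e,u)\in\mathcal{E}\times[0,1]$. The next step is to upgrade this to pointwise vanishing on $\mathcal{E}\times[0,1]$. I will rely on the continuity provided by Assumption \ref{ass:smoothness} and Lemma \ref{lm:quantile-representation} (continuity of $h_0,h_1$, of $g_\gamma$, and of the conditional CDFs' one-sided limits) to conclude that the map $(e,u)\mapsto D_{\gamma,h^*}(e,u)$ is continuous, so almost-everywhere zero implies everywhere zero.

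Once $D_{\gamma,h^*}\equiv 0$ on $\mathcal{E}\times[0,1]$, I differentiate in $u$ (applying the fundamental theorem of calculus, which is legitimate because the integrand in (\ref{eqn:D-gammah-def}) is continuous in $v$) to obtain, for every $e\in\mathcal{E}$ and $u\in[0,1]$,
\begin{equation*}
F^-_{Y|T,R}\bigl(g_\gamma(h_0^*(\bar r,u),\bar r,e)\,\big|\,h_0^*(\bar r,u),\bar r\bigr)
= F^+_{Y|T,R}\bigl(g_\gamma(h_1^*(\bar r,u),\bar r,e)\,\big|\,h_1^*(\bar r,u),\bar r\bigr).
\end{equation*}
This is precisely Condition (\ref{eqn:refutable-implication}) for $g=g_\gamma$, so Theorem \ref{thm:id-rd-cutoff} applies and yields a continuous strictly increasing $\lambda^{g_\gamma}$ such that $g^*(t,\bar r,e)=g_\gamma(t,\bar r,\lambda^{g_\gamma}(e))$ on the relevant domain. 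Equivalently, $g_{\gamma^*}(\cdot,\bar r,\cdot)=g_\gamma(\cdot,\bar r,\lambda^{g_\gamma}(\cdot))$. Assumption \ref{ass:normalization-Gamma} then forces $\gamma=\gamma^*$ and $\lambda^{g_\gamma}$ to be the identity, completing the argument.

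The main obstacle is the passage from the integrated/weighted identity to the pointwise equality of one-sided CDFs that matches (\ref{eqn:refutable-implication}). The integral form in (\ref{eqn:D-gammah-def}) was introduced for estimation-rate reasons, and it smooths over $v$; to invoke Theorem \ref{thm:id-rd-cutoff} I must recover the un-integrated identity, which requires the continuity carefully organized from Assumption \ref{ass:smoothness} and Lemma \ref{lm:quantile-representation} together with the differentiation in $u$. Everything else is bookkeeping built on results already proved earlier in the paper.
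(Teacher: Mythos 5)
Your proof is correct and follows the same route as the paper: show $D_{\gamma^*,h^*}\equiv 0$ via Lemma \ref{lm:local-control}, show a zero weighted norm forces $D_{\gamma,h^*}\equiv 0$, differentiate in $u$ to recover Condition (\ref{eqn:refutable-implication}), then apply Theorem \ref{thm:id-rd-cutoff} and the normalization in Assumption \ref{ass:normalization-Gamma}. The one place you are actually \emph{more} careful than the paper is the almost-everywhere-to-everywhere upgrade: the paper's proof just writes ``since $w>0$'' and jumps straight to $D_{\gamma,h^*}(e,u)=0$ for all $(e,u)$, whereas Assumption \ref{ass:w} only asserts $w$ is nonnegative with support containing $\mathcal{E}\times[0,1]$, so the continuity argument you supply is needed to make that step airtight.
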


    Assume there is an \emph{independent and identically distributed} (iid) sample $(Y_i,T_i,R_i)_{i=1}^n$ available. We propose an estimation procedure based on the above semiparametric identification result. The idea is that we first estimate the nonparametric components $(h_0,h_1)$ and $(F^-_{Y|T,R},F^+_{Y|T,R})$ that appear in the criterion function. Then we construct an empirical version of the criterion function and take its minimizer to be the estimator.
    
    The estimation procedure of $\gamma$ is more specifically divided into three steps. The first step is to estimate the conditional quantile functions $h_0$ and $h_1$. 
    The second step uses \emph{local linear regression} (LLR) to estimate the conditional distributions $F^-_{Y | T,R}$ and $F^+_{Y | T,R}$. It is standard to use local polynomials in the estimation of RD designs \citep{porter2003estimation,sun2005adaptive}. The difference is that classical RD methods use local polynomial to estimate the conditional expectation function of $Y$ given $R$ while we estimate the conditional distribution of $Y$.\footnote{Local linear estimation of the conditional distribution function can be found in \cite{hansen2004nonparametric,xie2021uniform}.}  
    The third step constructs an estimate of the criterion function by replacing the nonparametric nuisance parameters in (\ref{eqn:D-gammah-def}) by their estimated counterparts and then finds the estimate of $\gamma^*$ by minimizing the estimated criterion function. We describe the detail of the estimation procedure as follows. Denote $\mathcal{Y}$ as the range of the outcome $Y$. 

    \begin{itemize}
        \item \textbf{STEP 1.} 
        Choose estimators $\hat{h}_0(\bar{r},\cdot)$ and $\hat{h}_1(\bar{r},\cdot)$ of the corresponding conditional quantile processes, $h_0(\bar{r},\cdot)$ and $h_1(\bar{r},\cdot)$. Specific constructions are provided in Section \ref{ssec:quantile-est}.

        \item \textbf{STEP 2.} Choose two bandwidth sequences $b_1 = b_{1n}$ and $b_2 = b_{2n}$ and three kernel functions $k_Y$, $k_T$, and $k_R$. Define $K_Y(y) = \int_{-\infty}^y k_Y(\tilde{y}) d \tilde{y} $. For each $y \in \mathcal{Y}$ and $t \in [t_0',t_0'']$, solve the following minimization problem:
        \begin{align*} 
            \min_{a^-,a_{T}^-,a_{R}^-} \sum_{i:R_i < \bar{r}} & \left( K_Y\left( \frac{y - Y_i}{b_2} \right) - a^- - a_{T}^-(T_i-t) - a_{R}^-(R_i - \bar{r}) \right)^2  \\
            & \times k_T\left( \frac{T_i - t}{b_1} \right) k_R\left( \frac{R_i - \bar{r}}{b_1} \right).
        \end{align*}
        The minimizer $\hat{a}^-$ is the estimate $\hat{F}^-_{Y | T,R}(y | t,\bar{r})$. For each $y \in \mathcal{Y}$ and $t \in [t_1',t_1'']$, solve the following minimization problem:
        \begin{align*}
            \min_{a^+,a_{T}^+,a_{R}^+} \sum_{i:R_i \geq \bar{r}} & \left( K_Y\left( \frac{y - Y_i}{b_2} \right) - a^+ - a_{T}^+(T_i-t) - a_{R}^+(R_i - \bar{r}) \right)^2 \\
            & \times k_T\left( \frac{T_i - t}{b_1} \right) k_R\left( \frac{R_i - \bar{r}}{b_1} \right).
        \end{align*}
        The minimizer $\hat{a}^+$ is the estimate $\hat{F}^+_{Y | T,R}(y | t,\bar{r})$. 

        \item \textbf{STEP 3.} Construct the empirical version of the criterion function:
        \begin{align*}
            \left\lVert \hat{D}_{\gamma,\hat{h}} \right\rVert_{w} = \left(  \int_0^1 \int_{\mathbb{R}} | \hat{D}_{\gamma,\hat{h}}(e,u)|^2 w(e,u) dedu \right)^{1/2} ,
        \end{align*}
        where $\hat{D}_{\gamma,\hat{h}}(e,u) $ is defined to be
        \begin{align*}
             \int_0^u \left( \hat{F}^-_{Y|T,R}( g_\gamma(\hat{h}_0(\bar{r},v),\bar{r},e) | \hat{h}_0(\bar{r},v),\bar{r} ) - \hat{F}^+_{Y|T,R}( g_\gamma(\hat{h}_1(\bar{r},v),\bar{r},e) | \hat{h}_1(\bar{r},v),\bar{r} ) \right) dv.
        \end{align*}
        The estimator $\hat{\gamma}$ is any parameter value in $\Gamma$ that satisfies
        \begin{align} \label{eqn:gamma-hat-def}
            \big\lVert \hat{D}_{\hat{\gamma},h} \big\rVert_{w} \leq \inf_{\gamma \in \Gamma} \big\lVert \hat{D}_{\gamma,h} \big\rVert_{w} + O_p\left(\alpha_n\right),
        \end{align}
        where $\alpha_n \rightarrow 0$ is specified by Equation (\ref{eqn:alpha-n}) in Appendix \ref{sec:proof-est}.
    \end{itemize}

    \subsection{Asymptotic normality}

    More regularity assumptions are imposed for the estimator to enjoy desirable statistical properties. Define
    \begin{alignat*}{3}
        f^-_{T | R}(t | r) & = 
        \begin{cases}
            f_{T | R}(t | r), & \text{ if } r < \bar{r}, \\
            \lim_{r \uparrow \bar{r}} f_{T | R}(t | r), & \text{ if } r = \bar{r}.
        \end{cases} \quad
        f^+_{T | R}(t | r) & = 
        \begin{cases}
            f_{T | R}(t | r), & \text{ if } r > \bar{r}, \\
            \lim_{r \downarrow \bar{r}} f_{T | R}(t | r), & \text{ if } r = \bar{r}.
        \end{cases}
    \end{alignat*}
    The above left and right limits exist in view of Assumptions \ref{ass:dual-monotonicity} and \ref{ass:smoothness}.

    \begin{assumption} [Distributions of $Y,T,$ and $R$] \label{ass:smoothness-distribution}
        \ \begin{enumerate} [label = (\roman*)]
            \item The support of $T$ does not vary with $R$ except when crossing the cutoff $\bar{r}$, i.e., Supp$(T | R=r) = [t_0',t_0'']$ for $r < \bar{r}$ and Supp$(T | R=r) = [t_1',t_1'']$ for $r > \bar{r}$. The density functions $f^-_{T,R}$ and $f^+_{T,R}$ are bound away from zero. 
            
            \item The density functions $f^-_{T,R}$ and $f^+_{T,R}$ are twice continuously differentiable, and $\frac{\partial^2}{\partial t^2} f^-_{T,R}(t,\bar{r})$ and $\frac{\partial^2}{\partial t^2} f^+_{T,R}(t,\bar{r})$ are Lipschitz continuous with respect to $t$. 
            \item The support of $Y$, $\mathcal{Y}$, is compact. The conditional distribution functions $F^-_{Y | T,R}$ and $F^+_{Y | T,R}$ are three-times continuously differentiable over $\mathcal{Y} \times [t_0',t_0''] \times [r_0,\bar{r}]$ and $\mathcal{Y} \times [t_1',t_1''] \times [\bar{r},r_1]$, respectively. 
        \end{enumerate}
    \end{assumption}

    \begin{assumption} [Complexity of the Parametric Model] \label{ass:complexity-parametric-model}
        The parametrization $\{g_\gamma(\cdot,\bar{r},\cdot):\gamma \in \Gamma\}$ satisfies the following conditions.
        \begin{enumerate} [label = (\roman*)]
            \item The parameter space $\Gamma$ is compact.
            \item The class of functions $\{ T \mapsto g_\gamma(T+v,\bar{r},e): \gamma \in \Gamma, v \in (-1,1), e \in \mathcal{E} \}$ is finite-dimensional.
            \item The function $g_\gamma(t,\bar{r},e)$ is twice continuously differentiable over $\gamma \in \Gamma$, $t \in [t_0',t_0''] \cup [t_1',t_1'']$, and $ e \in \mathcal{E}$.
            \item The gradient $\nabla_\gamma D_{\gamma^*,h^*}(e,u)$ is a vector of linearly independent functions of $(e,u)$.

        \end{enumerate}
    \end{assumption}

    \begin{assumption} [Kernels] \label{ass:kernels}
        \ \begin{enumerate} [ label = (\roman*)]
            \item  The kernel functions $k_T$ and $k_R$ are (1) supported on $[-1,1]$, (2) strictly greater than zero in the interior of the support, (3) of bounded variation, (4) continuously differentiable on $\mathbb{R}$. 
            \item The kernel function $k_Y$ is (1) nonnegative and (2) integrable on $\mathbb{R}$ with $\int k_Y(y) dy =1$ and satisfies (3) $\int y k_Y(y) dy = 0$. 
        \end{enumerate}

    \end{assumption}

    \begin{assumption} [Bandwidth] \label{ass:bandwidth} The bandwidth $b_1$ and $b_2$ satisfy the following conditions:
        \begin{enumerate} [label = (\roman*)]
            \item $b_1 \asymp b_2$.\footnote{The notation $b_1 \asymp b_2$ means that there exists $C>1$ such that $b_1/b_2 \in [1/C,C]$.}
            \item $(n \log n) b_1^6 = o(1)$.
            \item $n b_1^{\frac{13}{3} + \epsilon} \rightarrow \infty$, for some sufficiently small $\epsilon>0$.
        \end{enumerate}

    \end{assumption}

    \begin{assumption} [First-step Conditional Quantile Estimators] \label{ass:h-tilde}
        The estimators $\hat{h}_0$ and $\hat{h}_1$ satisfy the following conditions.
        \begin{enumerate} [label = (\roman*)]
            \item Monotonicity and smoothness: for every $n$ sufficiently large, there exist $C > 0$ and deterministic and finite partitions $\mathcal{P}_0^n$ and $\mathcal{P}_1^n$ on $(0,1)$ such that
            \begin{align*}
                \mathbb{P}\left(\hat{h}_0(\bar{r},\cdot) \notin \mathcal{H}_0(\mathcal{P}_0^n) \right), \mathbb{P}\left(\hat{h}_1(\bar{r},\cdot) \notin \mathcal{H}_1(\mathcal{P}_1^n) \right) = O \big( \sqrt{b_1}\big),
            \end{align*}
            where
            \begin{align*}
                \mathcal{H}_0(\mathcal{P}_0^n) = &\{ \text{ function } h \text{ from } [0,1] \text{ into } [t_0',t_0'']: \text{on each element of $\mathcal{P}_0^n$}, h \text{ is strictly} \\
                & \quad \text{increasing, its inverse $h^{-1}$ is three-times continuously differentiable, } \\ 
                & \quad \text{and $(h^{-1})^{(3)}$ is Lipschitz continuous } \}, 
            \end{align*}
            and $\mathcal{H}_1(\mathcal{P}_1^n)$ is defined analogously by replacing $\mathcal{P}_0^n$ with $\mathcal{P}_1^n$.
            \item Uniform Bahadur representation:
            \begin{align*}
                \hat{h}_0(\bar{r},u) - h_0^*(\bar{r},u) & = b_1^2 \nu_0(u) + O_p ( b_1^3)\\ 
                & \quad + \frac{1}{nb_1} \sum_{i=1}^n q_{0}(T_i,R_i;u) k_{Q,0}\left( \frac{R_i - \bar{r}}{b_1} \right) \mathbf{1}\{R_i < \bar{r}\} + o_p \big( 1/\sqrt{nb_1} \big),  \\
                \hat{h}_1(\bar{r},u) - h_1^*(\bar{r},u) & = b_1^2 \nu_1(u) + O_p ( b_1^3)\\
                & \quad + \frac{1}{nb_1} \sum_{i=1}^n q_{1}(T_i,R_i;u) k_{Q,1}\left( \frac{R_i - \bar{r}}{b_1} \right) \mathbf{1}\{R_i \geq \bar{r}\} + o_p \big(1/\sqrt{nb_1} \big), 
            \end{align*}
            uniformly over $u \in (0,1)$. The functions $\nu_0$ and $\nu_1$ are bounded. The functions $q_0$ and $q_1$ are (1) bounded, (2) centered, that is, $\mathbb{E}[q_{0}(T,R;u)|T,R] = \mathbb{E}[q_{1}(T,R;u)|T,R] = 0$, and (3) does not vary with $n$. The functions $k_{Q,0}$ and $k_{Q,1}$ are bounded.
            \item Uniform convergence rate:
            \begin{align*}
                \lVert \hat{h} - h^* \rVert_\infty & = \sup_{ u \in (0,1)} |\hat{h}_0(\bar{r},u) - h_0^*(\bar{r},u)| \vee |\hat{h}_1(\bar{r},u) - h_1^*(\bar{r},u)| \\
                & = O_p \left( \sqrt{\log n / (n b_1)} + b_1^2 \right).
            \end{align*}
            
        \end{enumerate}

    \end{assumption}

    A brief discussion of the assumptions is in order. Assumption \ref{ass:smoothness-distribution} imposes smoothness restrictions on the joint distribution of $(Y,T,R)$. In the previous section, the identification result only requires continuity of the relevant functions. For estimation, we need higher-order smoothness regarding the distribution functions. Assumption \ref{ass:complexity-parametric-model} imposes restrictions on the parametric model of the structural function. Part (ii) restricts the complexity of the model. Part (iii) imposes high-order smoothness on the structural function. Part (iv) is similar to Assumption D4 in \cite{TORGOVITSKY2017minimum} and requires that $\nabla_\gamma D_{\gamma^*,h^*}$ to carry information about each component of the parameter. 
 
 Assumption \ref{ass:kernels} imposes restrictions on the kernel functions $k_T$, $k_R$, and $k_Y$. The differentiability is needed to prove a stochastic equicontinuity condition. Assumption \ref{ass:bandwidth} restricts that $b_1$ and $b_2$ are of the same asymptotic order, which is slightly faster than $n^{-1/6}$ and slightly slower than $n^{-3/13}$. This assumption is not restrictive and allows for the asymptotic mean squared error (AMSE) optimal bandwidth as well as undersmoothing.

 Assumption \ref{ass:h-tilde} imposes high-level restrictions on the first-stage nonparametric conditional quantile estimators. Part (i) assumes that the quantile estimators are piece-wise monotonic and smooth with a high probability. Part (ii) and (iii) give the uniform Bahadur representation and the uniform convergence rate, which are fairly standard in the quantile estimation literature. In Section \ref{ssec:quantile-est}, we discuss a specific nonparametric quantile estimator that satisfies Assumption \ref{ass:h-tilde}.

    \begin{theorem} [Asymptotic Distribution of the Semiparametric Estimator] \label{thm:estimation}
        Let Assumptions \ref{ass:dual-monotonicity} - \ref{ass:h-tilde} hold. Then $\norm{\hat{\gamma} - \gamma^*}_2 = O_p(b_1^2 + 1/\sqrt{nb_1})$ and
        \begin{align*}
            \big( \sqrt{nb_1} (\Sigma_- + \Sigma_+)^{-1/2} \big) (\Delta (\hat{\gamma} - \gamma^*) - b_1^2( B_- - B_+)) \overset{d}{\rightarrow} N(0,\bm{I}_{d_\Gamma}),
        \end{align*}
        where $I_{d_\Gamma}$ is the $d_\Gamma$-dimensional identity matrix.
        The exact forms of $\Delta$, $B_-$, $B_+$, $\Sigma_-$ and $\Sigma_+$ are given in Equations (\ref{eqn:nabla-D}), (\ref{eqn:bias-}), (\ref{eqn:bias+}), (\ref{eqn:sigma-}) and (\ref{eqn:sigma+}) in Appendix \ref{sec:proof-est}, respectively.

    \end{theorem}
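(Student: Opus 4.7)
The plan is to follow the standard extremum-estimator template adapted to an $L^2$-type criterion with nonparametric first-stage nuisance, exploiting Corollary \ref{cor:gamma-star-identification}, which says that $\gamma^*$ is the unique zero of $\norm{D_{\gamma,h^*}}_w$. First, I would establish consistency $\hat\gamma \overset{p}{\to} \gamma^*$ by combining compactness of $\Gamma$ (Assumption \ref{ass:complexity-parametric-model}(i)), the approximate-minimization slack $\alpha_n$ in (\ref{eqn:gamma-hat-def}), and the uniform convergence $\sup_{\gamma \in \Gamma} \big| \norm{\hat D_{\gamma,\hat h}}_w - \norm{D_{\gamma,h^*}}_w \big| = o_p(1)$. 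The latter is inherited from the uniform rates of $\hat h$ (Assumption \ref{ass:h-tilde}(iii)) and of the LLR estimators $\hat F^\pm$ (standard under Assumptions \ref{ass:smoothness-distribution} and \ref{ass:kernels}), made uniform over $\gamma$ by the finite-dimensional complexity in Assumption \ref{ass:complexity-parametric-model}(ii).

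Next I linearize around $\gamma^*$. The approximate first-order condition $\int_0^1 \int_{\mathbb{R}} \nabla_\gamma \hat D_{\hat\gamma,\hat h}(e,u)\, \hat D_{\hat\gamma,\hat h}(e,u)\, w(e,u)\, de\, du = O_p(\alpha_n)$, combined with a Taylor expansion of $\hat D_{\hat\gamma,\hat h}$ in $\gamma$ using the smoothness in Assumption \ref{ass:complexity-parametric-model}(iii), yields the linearization
\begin{equation*}
\hat H(\hat\gamma - \gamma^*) = -\int_0^1 \int_{\mathbb{R}} \nabla_\gamma \hat D_{\gamma^*,\hat h}(e,u)\, \hat D_{\gamma^*,\hat h}(e,u)\, w(e,u)\, de\, du + \text{lower-order terms,}
\end{equation*}
with $\hat H \overset{p}{\to} \Delta$, the Gram matrix appearing in (\ref{eqn:nabla-D}), invertible by Assumption \ref{ass:complexity-parametric-model}(iv). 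Since $D_{\gamma^*,h^*} \equiv 0$ by Lemma \ref{lm:local-control}, I decompose $\hat D_{\gamma^*,\hat h}$ into two additive contributions: (a) the first-stage quantile error $\hat h - h^*$, substituted via the Bahadur representation of Assumption \ref{ass:h-tilde}(ii) and Taylor-expanded in $T$ through $F^\pm$ using Assumption \ref{ass:smoothness-distribution}(iii); (b) the LLR error $\hat F^\pm - F^\pm$, for which I derive an analogous Bahadur-type expansion under Assumptions \ref{ass:smoothness-distribution} and \ref{ass:kernels}. Both pieces carry a bias of order $b_1^2$ and a stochastic term; crucially, the $v$-integration in $\hat D_{\gamma^*,\hat h}(e,u) = \int_0^u [\cdots]\,dv$ smooths the LLR-induced noise along the $T$-direction and collapses its pointwise rate from $1/\sqrt{nb_1^2}$ to $1/\sqrt{nb_1}$, matching the rate of the $\hat h$-contribution. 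Projecting both linear representations onto $\nabla_\gamma D_{\gamma^*,h^*}$ against $w$ identifies $B_\pm$ as in (\ref{eqn:bias-}) and (\ref{eqn:bias+}), and produces a sum of independent mean-zero summands supported strictly below or above the cutoff, so variances add to give $\Sigma_- + \Sigma_+$ as in (\ref{eqn:sigma-}) and (\ref{eqn:sigma+}). A Lindeberg--Feller CLT scaled by $\sqrt{nb_1}$ then closes the argument, with Assumption \ref{ass:bandwidth} ensuring negligible remainders: $(n\log n)b_1^6 = o(1)$ absorbs the squared-bias-times-stochastic cross terms, and $nb_1^{13/3+\epsilon} \to \infty$ keeps the higher-order stochastic products subdominant.

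The hardest step will be the rate-improvement from $1/\sqrt{nb_1^2}$ to $1/\sqrt{nb_1}$ for the LLR contribution in (b); it is precisely this smoothing that delivers the $n^{-2/5}$ rate matching the binary-treatment benchmark. It requires treating the $(e,u)$-integrated LLR stochastic term as a second-order degenerate $U$-statistic-type object whose variance gains one extra factor of $b_1$ from the $v$-integration of overlapping kernel windows, while maintaining uniformity in $(e,u)$ through a stochastic equicontinuity argument built on the bounded-variation and differentiability properties of the kernels (Assumption \ref{ass:kernels}) and on the finite-dimensionality of the parametric family (Assumption \ref{ass:complexity-parametric-model}(ii)). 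A secondary technical difficulty is that $\hat F^\pm$ is evaluated at the random argument $g_{\gamma^*}(\hat h_j(\bar r, v), \bar r, e)$, so an additional Taylor expansion in $T$, using Assumption \ref{ass:smoothness-distribution}(iii) together with the uniform rate in Assumption \ref{ass:h-tilde}(iii), is needed to cleanly isolate the $\hat h$-error from the $\hat F^\pm$-error without producing higher-order cross terms that would spoil the $\sqrt{nb_1}$-rate.
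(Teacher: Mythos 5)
Your high-level strategy matches the paper's: establish consistency from Corollary \ref{cor:gamma-star-identification} and compactness, linearize the criterion, use Bahadur-type expansions for both $\hat h$ and $\hat F^{\pm}$, identify $B_{\pm}$ and $\Sigma_{\pm}$ by projecting onto $\nabla_{\gamma}D_{\gamma^*,h^*}$, and close with a Lindeberg CLT at rate $\sqrt{nb_1}$. You also correctly identify the key phenomenon (integration over $v$ raises the LLR contribution from rate $1/\sqrt{nb_1^2}$ to $1/\sqrt{nb_1}$). But two steps in your sketch do not go through as written.

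First, your ``approximate first-order condition'' $\int \nabla_{\gamma}\hat D_{\hat\gamma,\hat h}\,\hat D_{\hat\gamma,\hat h}\,w = O_p(\alpha_n)$ is not a consequence of the definition of $\hat\gamma$ in (\ref{eqn:gamma-hat-def}). An approximate minimizer of the \emph{norm} $\|\hat D_{\gamma,\hat h}\|_w$ only controls the criterion value, not its gradient; to get a gradient bound you would need a quantitative strong-convexity argument, and even then you would obtain a $\sqrt{\alpha_n}$ bound on the gradient, not $\alpha_n$, which by itself is too coarse for the $1/\sqrt{nb_1}$ rate. The paper avoids this entirely: it first derives an initial rate for $\hat\gamma$ via a Taylor expansion and the reverse triangle inequality (using the lower bound (\ref{eqn:linearly-independent}), which rests on Assumption \ref{ass:complexity-parametric-model}(iv)); then proves a sharper stochastic-equicontinuity statement (\ref{eqn:se}) \emph{along sequences converging at that initial rate}; uses it to validate the linearization (\ref{eqn:approx-error-linearization2}); defines $\tilde\gamma$ as the exact minimizer of the linearized criterion $\hat L_{\gamma}$ (which \emph{does} satisfy a clean FOC because $\hat L_{\gamma}$ is affine in $\gamma$); and finally shows $\|\hat\gamma-\tilde\gamma\|=o_p(1/\sqrt{nb_1})$ by comparing $\|\hat L_{\hat\gamma}\|_w^2$ with $\|\hat L_{\tilde\gamma}\|_w^2$ and exploiting the $L^2(w)$-orthogonality of $\hat L_{\tilde\gamma}$ to $\nabla_{\gamma}D_{\gamma^*,h^*}$. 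This two-round rate argument is where the precise conditions in Assumption \ref{ass:bandwidth}(ii)--(iii) enter, and it is what your sketch is missing.

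Second, your attribution of the variance gain to a ``second-order degenerate $U$-statistic'' is a misdiagnosis. The object $\hat D_{\gamma^*,h^*}(e,u) = \int_0^u [\cdots]\,dv$ remains a \emph{degree-one} empirical average in $i$; there is no pairing of observations and no $U$-statistic structure. The extra factor of $b_1$ comes from the change of variable $\tilde v = (T_i - h_0(v))/b_1$ inside the $v$-integral: the Jacobian $(h_0^{-1})'$ produces a $b_1$ that cancels one of the two $b_1$'s in the $1/(nb_1^2)$ normalization, and after that the only surviving localization is $k_R((R_i-\bar r)/b_1)$, whose second moment is $O(b_1)$. Uniformity in $(e,u,\gamma,h_0)$ is then obtained by empirical-process arguments over the product class $\Psi_{\Xi_0}\Psi_Y\Psi_{TR}\Psi_{\mathcal H_0}$, which is \emph{log-Euclidean} (because of the smooth nonparametric component $\mathcal{H}_0$, not Euclidean), requiring the bound in Lemma \ref{lm:log-Euclidean-rate} rather than a standard VC maximal inequality. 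If you instead rely on a $U$-statistic decomposition you will not see where the piecewise monotonicity/smoothness restriction of Assumption \ref{ass:h-tilde}(i) or the $b_1$-band conditions of Assumption \ref{ass:bandwidth}(iii) come in, and the argument will not produce the required uniformity.
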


    \begin{remark}
        The convergence rate of $\hat{\gamma}$ is $b_1^2 + 1/\sqrt{nb_1}$, which is equal to $n^{-2/5}$ when $b_1 \asymp n^{-1/5}$. This rate is the same as the one obtained in the classical RD design with a binary treatment variable \citep{hahn2001identification}. Having a continuous treatment does not slow down the convergence rate of the estimator in this case. This is due to the integral smoothing in the definition of the criterion function $D_{\gamma,h}$ in (\ref{eqn:D-gammah-def}).
    \end{remark}

    \begin{remark}
        The proof of Theorem \ref{thm:estimation} follows the general steps of proving asymptotic normality of semiparametric estimators as in, for example, \cite{TORGOVITSKY2017minimum} and \cite{chen2003estimation}. The main difficulty is that the usual stochastic equicontinuity condition is not sharp because the criterion function is nonparametrically estimated. To overcome this issue, we first use the empirical process theory to derive a uniform convergence rate for the estimated criterion function, which gives an initial bound on the convergence rate of $\hat{\gamma}$. A sharper stochastic equicontinuity result is then derived based on this initial bound together with more applications of the empirical process theory. This sharper stochastic equicontinuity result helps demonstrate that the usual linearization of the criterion function is valid. See the proof in Appendix \ref{sec:proof-est} for details.
    \end{remark}

    Once the asymptotic normal distribution of the estimator $\hat{\gamma}$ is established, we can conduct inference for $\gamma^*$. Theorem \ref{thm:estimation} together with the undersmoothing condition that $nb_1^5 = o(1)$ gives that
    \begin{align*}
        \sqrt{nb_1} (\hat{\gamma} - \gamma^*)  \overset{d}{\rightarrow} N(0,\Delta^{-1}(\Sigma_- + \Sigma_+)\Delta^{-1}).
    \end{align*}
    A linear null hypothesis regarding $\gamma$ can be written as $H \gamma = \eta$, where $\eta \in \mathbb{R}^{d_\eta}$ and $H$ is a $d_{\eta} \times d_\gamma$ full-rank matrix. Consider the  test statistic
    \begin{align*}
        nb_1 (\hat{\gamma} - \gamma^*)' \left( H \hat{\Delta}^{-1}(\hat{\Sigma}_- + \hat{\Sigma}_+)\hat{\Delta}^{-1} H' \right)^{-1} (\hat{\gamma} - \gamma^*),
    \end{align*}
    where $\hat{\Delta}$, $\hat{\Sigma}_-$, and $\hat{\Sigma}_+$ are consistent estimators of $\Delta$, $\Sigma_-$, and $\Sigma_+$, respectively. 
    By Slutsky's theorem, the above test statistic converges in distribution to the $\chi^2$ distribution with ${d_\eta}$ degrees of freedom. In Appendix \ref{sec:proof-est}, we discuss how to construct consistent estimators for $\Delta$, $\Sigma_-$, and $\Sigma_+$.

    \subsection{First-step nonparametric quantile estimators} \label{ssec:quantile-est}

    This section discusses how to construct nonparametric conditional quantile estimators that satisfy Assumption \ref{ass:h-tilde}. Consider the following two-step estimation procedure introduced by \cite{QU2015nonparametric}. Define $\rho_u(t) = t(u - \mathbf{1}\{t < 0\})$. 

    \begin{itemize}
        \item \textbf{STEP 1.} Choose a bandwidth sequence $b_3 = b_{3n} = o(1)$ and a kernel function $k_{\textit{FS}}$. Partition the unit interval $(0,1)$ into a grid of equally spaced points $\{u_1,\cdots,u_{J_n}\}$, where $J_n/(nb_3)^{1/4} \rightarrow \infty$. Solve the following optimization problem:
        \begin{align} \label{eqn:first-step-quantile-est}
            \min_{\{\mathrm{h}_j,\mathrm{h}'_j\}_{j=1}^{J_n}} \sum_{j=1}^{J_n} \sum_{i=1}^n \rho_{u_j} \left( T_i - \mathrm{h}_j - \mathrm{h}_j' (R_i - \bar{r}) \right) k_{\textit{FS}} \left( \frac{R_i - \bar{r}}{b_3} \right) \mathbf{1}\{ R_i < \bar{r} \}.
        \end{align}
        Denote the minimizers by $(\hat{h}_0(\bar{r},u_1),\cdots,\hat{h}_0(\bar{r},u_{J_n}))$.

        \item \textbf{STEP 2.} Let $u_0 = 0$ and $u_{J_{n+1}}=1$. Let $\hat{h}_0(\bar{r},u_0) = \min_{i:R_i < \bar{r}} T_i$ and $\hat{h}_0(\bar{r},u_{J_n + 1}) = \max_{i:R_i < \bar{r}} T_i$. Linearly interpolate between the estimates to obtain an estimate for the entire quantile process. That is, for any $u \in (u_j,u_{j+1})$, define
        \begin{align*}
            \hat{h}_0(\bar{r},u) = \frac{u_{j+1} - u}{u_{j+1} - u_j} \hat{h}_0(\bar{r},u_j) + \frac{u - u_j}{u_{j+1} - u_j} \hat{h}_0(\bar{r},u_{j+1}).
        \end{align*}
    \end{itemize}
    The estimator $\hat{h}_1(\bar{r},\cdot)$ can be analogously defined by using the data with $R_i \geq \bar{r}$.\footnote{There are three estimators of conditional quantile process in \cite{QU2015nonparametric}. The estimator explained here is their second one, denoted by $\hat{\alpha}^*$ in that paper. Their third estimator imposes a monotonicity constrain to the minimization problem (\ref{eqn:first-step-quantile-est}).}
    We can verify Assumption \ref{ass:h-tilde} for the estimator constructed above. Denote 
    \begin{align*}
        \Omega_{Q,0} & = \int (1,x) (1,x)' \mathbf{1}\{x < 0\} k_{\textit{FS}}(x) dx, \\
        \Omega_{Q,1} & = \int (1,x) (1,x)' \mathbf{1}\{x \geq 0\} k_{\textit{FS}}(x) dx.
    \end{align*}
    \begin{proposition} \label{prop:fs-quantile}
        Let Assumptions \ref{ass:smoothness}(iv) and \ref{ass:smoothness-distribution}(i)-(ii) hold. Assume that the third-order derivatives $\frac{\partial^3}{\partial R^3} h_0^*(r,u)$ and $\frac{\partial^3}{\partial R^3} h_1^*(r,u)$ are Lipschitz continuous respectively on $[r_0,\bar{r}] \times [0,1]$ and $[\bar{r},r_1] \times [0,1]$. Assume that the bandwidth $b_3 = cb_1$ for some constant $c>0$. Assume that the kernel $k_{\textit{FS}}$ is nonnegative, of bounded variation, compactly supported, having finite first-order derivatives and satisfying 
        \begin{align*}
            \int k_{\textit{FS}}(x) dx= 1, \int x k_{\textit{FS}}(x) dx= 0, \int x^2 k_{\textit{FS}}(x) dx < \infty.
        \end{align*}
        Then the estimators $\hat{h}_0(\bar{r},\cdot)$ and $\hat{h}_1(\bar{r},\cdot)$ described above satisfy Assumption \ref{ass:h-tilde}.
        The specific forms of $\nu_0$ and $\nu_1$ are 
        \begin{align*}
            \nu_0(u) & = \frac{c^2}{2} \frac{\partial^2}{\partial r^2} h_0^*(\bar{r},u) \iota' \Omega_{Q,0}^{-1} \int x^2  (1,x)' k_{\textit{FS}}(x) \mathbf{1}\{x < 0\} dx, \\
            \nu_1(u) & = \frac{c^2}{2} \frac{\partial^2}{\partial r^2} h_1^*(\bar{r},u) \iota' \Omega_{Q,1}^{-1} \int x^2  (1,x)' k_{\textit{FS}}(x) \mathbf{1}\{x \geq 0\} dx.
        \end{align*}
        The specific forms of $k_{Q,0}$ and $k_{Q,1}$ are 
        \begin{align*}
            k_{Q,j}(x) = \iota' \Omega_{Q,j}^{-1} (1,x/c)' k_{\textit{FS}}(x/c)/c, j =0,1.
        \end{align*}
        The functions $q_0$ and $q_1$ are
        \begin{align*}
            q_0(T,R;u) = (u - \mathbf{1}\{ T \leq h_0^*(\bar{r},u) \}) / (f_R(\bar{r}) f^-_{T | R}(h_0^*(\bar{r},u) | \bar{r}) ), \\
            q_1(T,R;u) = (u - \mathbf{1}\{ T \leq h_1^*(\bar{r},u) \}) / (f_R(\bar{r}) f^+_{T | R}(h_1^*(\bar{r},u) | \bar{r}) ),
        \end{align*}
        which take the form of an influence function for quantiles. 
    \end{proposition}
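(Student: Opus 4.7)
The plan is to derive the claimed Bahadur representation at a single grid point using the standard asymptotic theory for local linear quantile regression at a boundary, then extend it uniformly over the grid, and finally control the linear‐interpolation error across grid points. The monotonicity and smoothness requirements of Assumption \ref{ass:h-tilde}(i) will be handled by noting that piecewise linear interpolation is automatically smooth and is monotone with high probability once the estimation errors at consecutive grid points are smaller than the gap $h_0^*(\bar{r},u_{j+1})-h_0^*(\bar{r},u_j)$.

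First I would fix a grid point $u_j$ and analyze the first-order conditions of the boundary local linear quantile regression in (\ref{eqn:first-step-quantile-est}). Using a Knight‐type convexity argument (cf.\ \citet{QU2015nonparametric}) together with the smoothness of $f^-_{T,R}$ from Assumption \ref{ass:smoothness-distribution}(i)–(ii) and the Lipschitz third derivative of $h_0^*(\cdot,u)$, one obtains the pointwise expansion
\begin{align*}
\hat{h}_0(\bar{r},u_j)-h_0^*(\bar{r},u_j) = b_3^2\,\tilde\nu_0(u_j) + O_p(b_3^3) + \iota'\Omega_{Q,0}^{-1}\frac{1}{nb_3}\sum_{i=1}^n (1,(R_i-\bar r)/b_3)'\,k_{\textit{FS}}\!\left(\tfrac{R_i-\bar r}{b_3}\right)\,\tilde q_0(T_i;u_j)\mathbf{1}\{R_i<\bar r\} + o_p(1/\sqrt{nb_3}),
\end{align*}
where the bias comes from Taylor expanding $h_0^*$ in $r$ to third order (the first-order term annihilated by local linearity, the second-order term giving $b_3^2 \tfrac{1}{2}\partial_r^2 h_0^*(\bar r,u_j)\iota'\Omega_{Q,0}^{-1}\int x^2(1,x)' k_{\textit{FS}}(x)\mathbf{1}\{x<0\}dx$) and the linear part is the usual quantile influence function $\tilde q_0(T;u)=(u-\mathbf{1}\{T\leq h_0^*(\bar r,u)\})/[f_R(\bar r)f^-_{T|R}(h_0^*(\bar r,u)|\bar r)]$. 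Reparameterising with $b_3=cb_1$ absorbs the constant $c$ into the kernel, producing exactly the forms of $\nu_0$, $q_0$ and $k_{Q,0}$ stated in the proposition.

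Next I would upgrade the pointwise representation to be uniform in $u$. Uniformity over the finite grid follows from the equicontinuity of the quantile process established in \citet{QU2015nonparametric} plus a union bound that is harmless since $J_n$ grows only polynomially. For $u$ in an interior of a grid interval $(u_j,u_{j+1})$, the interpolation error for both $h_0^*$ and the leading Bahadur terms is $O(1/J_n^2)$ by smoothness of $h_0^*(\bar r,\cdot)$; the assumed divergence $J_n/(nb_3)^{1/4}\to\infty$ forces $1/J_n^2=o(1/\sqrt{nb_3})$, so the interpolation contributes only a $o_p(1/\sqrt{nb_1})$ remainder. This simultaneously delivers Assumption \ref{ass:h-tilde}(ii) on the whole interval and, by taking supremum norms, also Assumption \ref{ass:h-tilde}(iii) with rate $\sqrt{\log n/(nb_1)}+b_1^2$ via a standard Bernstein/VC argument on the class $\{(1,(R-\bar r)/b_3)'k_{\textit{FS}}((R-\bar r)/b_3)\psi_u\}_u$.

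Finally, for Assumption \ref{ass:h-tilde}(i) I take $\mathcal{P}_0^n=\{(u_j,u_{j+1})\}_{j=0}^{J_n}$, on which $\hat{h}_0(\bar r,\cdot)$ is affine and therefore has a $C^\infty$ inverse whose third derivative is trivially Lipschitz. Monotonicity across grid points holds whenever $\hat h_0(\bar r,u_{j+1})-\hat h_0(\bar r,u_j)>0$ for every $j$; since the true increments are at least $c_0/J_n$ for some $c_0>0$ by Assumption \ref{ass:smoothness-distribution}(i), while the uniform estimation error is of order $\sqrt{\log n/(nb_1)}+b_1^2$, a Borel–Cantelli/tail‐probability bound gives failure probability $O(\sqrt{b_1})$ under the bandwidth conditions of Assumption \ref{ass:bandwidth}. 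The main obstacle throughout is the double task of sharpening the Bahadur remainder to $o_p(1/\sqrt{nb_1})$ uniformly in $u$ while simultaneously ensuring the interpolation does not break this rate; the resolution is precisely the grid-size condition $J_n/(nb_3)^{1/4}\to\infty$ from \citet{QU2015nonparametric}, combined with our bandwidth restrictions in Assumption \ref{ass:bandwidth}.
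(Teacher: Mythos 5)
Your plan follows the same overall route as the paper, and the claimed forms of $\nu_0$, $\nu_1$, $q_0$, $q_1$, $k_{Q,0}$, $k_{Q,1}$ are correct (including the absorption of $c$ in the reparameterization $b_3 = cb_1$). The main difference is that the paper does not re-derive the pointwise Bahadur expansion via Knight's convexity; it instead cites the boundary version proved in Lemma 3 of the Appendix of Dong, Lee and Gou (2021) — a modification of Theorem 1.2 of Qu and Yoon (2015) — and then cites Step 2 of the proof of Theorem 2 in Qu and Yoon (2015) for the claim that the linear-interpolation error is asymptotically negligible. Your re-derivation is a valid and more self-contained route; what the paper's approach buys is brevity, what yours buys is not having to check that the cited lemmas cover the boundary case with the required uniformity in $u$. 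For part (iii) your ``Bernstein/VC argument on the class $\{(1,(R-\bar r)/b_3)'k_{\textit{FS}}((R-\bar r)/b_3)\psi_u\}_u$'' is exactly what the paper's Lemma \ref{lm:uc4} does, using the bounded-variation-kernel Euclidean-class machinery together with Lemma \ref{lm:gine}.

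One step deserves more care. For part (i), you argue that monotonicity across grid points fails only when the uniform estimation error exceeds the true quantile increment $\approx c_0/J_n$, and you assert a Borel–Cantelli / tail-probability bound gives $O(\sqrt{b_1})$. Two problems: (a) an $O_p(\sqrt{\log n/(nb_1)} + b_1^2)$ rate statement does not by itself yield a quantitative tail bound $\mathbb{P}(\text{error} > c_0/(2J_n)) = O(\sqrt{b_1})$; you would need either a moment bound (then Markov) or an exponential concentration inequality along the lines of the paper's Lemma \ref{lm:gine}, applied at the right threshold; and Borel–Cantelli is the wrong instrument — it gives almost-sure statements along subsequences, not non-asymptotic probability bounds. (b) The argument implicitly requires an upper bound on $J_n$ (the true increments shrink as $1/J_n$), whereas the paper only imposes the lower bound $J_n/(nb_3)^{1/4}\to\infty$; with $J_n$ allowed to grow arbitrarily fast, $c_0/J_n$ could drop below the estimation error. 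To be fair, the paper's own proof of part (i) is a single sentence (``the estimator is a linear function within each interval and hence is contained in the class'') and does not address the sign of the interpolation slope at all, so your attempt is actually the more honest one — but the $O(\sqrt{b_1})$ failure-probability claim needs a genuine concentration step and a choice like $J_n \asymp (nb_1)^{1/4}$, not just a rate-comparison heuristic.
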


    Other quantile estimation methods are also available. For example, one can consider the generic framework proposed by \cite{chernozhukov2010quantile} for rearrangement. In particular, they show that the rearrangement of a preliminary estimated quantile process delivers a monotonic estimator that preserves the asymptotic properties. This result gives a different way to generate estimators that satisfy Assumption \ref{ass:h-tilde}. We can start with an estimator with desired asymptotic properties that give rise to Assumption \ref{ass:h-tilde}(ii) and (iii), and then apply the rearrangement procedure. The resulting estimator would be monotonic on the entire domain, and partitioning is unnecessary.


    \section{Numerical results} \label{sec:numeric}

    This section presents the empirical application and the simulation studies. The empirical study shows that the semiparametric estimator is considerably better than the simple TSLS estimator in discovering quantitative information regarding the structural function. The simulation studies show that the semiparametric procedure can accurately estimate the parameters with a moderate sample size.\footnote{Replication files for the empirical and simulation studies are available from the author upon request.}

    \subsection{Empirical application}

    In the empirical study, we examine the causal effect of sleep time on health status by exploiting the discontinuity in the timing of natural light at time zone boundaries. The unit of observation is the individual in the American Time Use Survey (ATUS), the outcome $Y$ is the individual's health status measured by the body-mass index (BMI),\footnote{BMI is a person's weight in kilograms divided by the square of height in meters. The Centers for Disease Control and Prevention define overweight as BMI > 25 and obesity as BMI > 30.} the treatment $T$ is the sleep time, and the running variable $R$ is the longitudinal distance to the nearest time zone boundary, with cutoff $\bar{r} = 0$ denoting the time zone border. As explained in the introduction, the identification is based on the exogenous variation in the sleep time around the time zone boundary. This exogenous variation is due to the difference in the timing of natural light on each side of the time zone boundary.

 Many studies in the medical literature examine the effect of sleep time on overweight issues. See \cite{beccuti2011sleep} and the references therein. These studies typically use survey or laboratory data. This problem is first studied by using the RD design in \cite{Giuntella2019sunset}.\footnote{\cite{Giuntella2019sunset} study many health and economics-related issues. Here we only mention the relevant ones.} The relevant outcome variable they use is a binary indicator of the obesity (or overweight) status indicating whether the BMI is above some threshold. They use the TSLS procedure to estimate a linear structural function. We consider two improvements based on their work. First, we directly use BMI as the outcome variable, providing a more quantitative measure of the health status. Second, we use the proposed semiparametric estimator to estimate a nonlinear structural function. Previous medical studies have provided evidence of the nonlinearity of the structural function. For example, \cite{hairston2010sleep} show that both undersleeping and oversleeping lead to an increase in BMI while sleeping around 8 hours leads to a more healthy BMI level.

 The data for this empirical application is collected from IPUMS CPS \citep{ipums-cps} and IPUMS ATUS \citep{ipums-atus} during the periods 2006 - 2008 and 2014 - 2016. By linking these datasets, we can locate the county where the individual lives and then use the county's centroid as the location of the individual. We focus on counties near the time zone boundary between the Eastern and Central time zone. The counties are divided into two regions based on their latitude. We estimate the model separately for each region.
 
 The estimated marginal effects of sleep on BMI from the semiparametric estimator and the TSLS estimator are shown in Figure \ref{fig:effects-empirical}. Several interesting findings are observed based on the semiparametric estimates. First, the marginal effects are increasing and increase from negative to positive. This lends some support to the previous argument that the structural function is nonlinear and neither sleeping too little nor too much is preferable. Second, we can determine the optimal (in terms of BMI) sleep time by finding the zero of the marginal effect curve. In both cases, the optimal sleep time is between 7 and 8 hours, which also aligns with the findings in previous medical studies. Third, the results from the two regions are similar, meaning that the variation across different latitudes is small. 
 
From Figure \ref{fig:effects-empirical}, we can also see that the TSLS estimates are not capable of demonstrating the above results. First, the TSLS procedure only provides a constant estimate of the marginal effect across all levels of sleep time. This means an extra hour of sleep would lead to the same effect on health regardless of the person's current sleep time, which is inappropriate in this setting. Moreover, we cannot estimate the optimal sleep time based on the linear structural function. Second, the magnitude of the TSLS estimates is small. This is because the TSLS provides a weighted average of the marginal effects across the entire range of sleep time. By averaging the negative and positive effects, the TSLS delivers an estimate attenuated toward zero, which is not informative for the researcher.\footnote{\cite{Giuntella2019sunset} find a more significant effect of sleep time on obesity. There are two possible reasons: they consider the binary indicator of obesity, and they include more control variables in the regression.}

\begin{figure}
    \centering
    \includegraphics[width=1\linewidth]{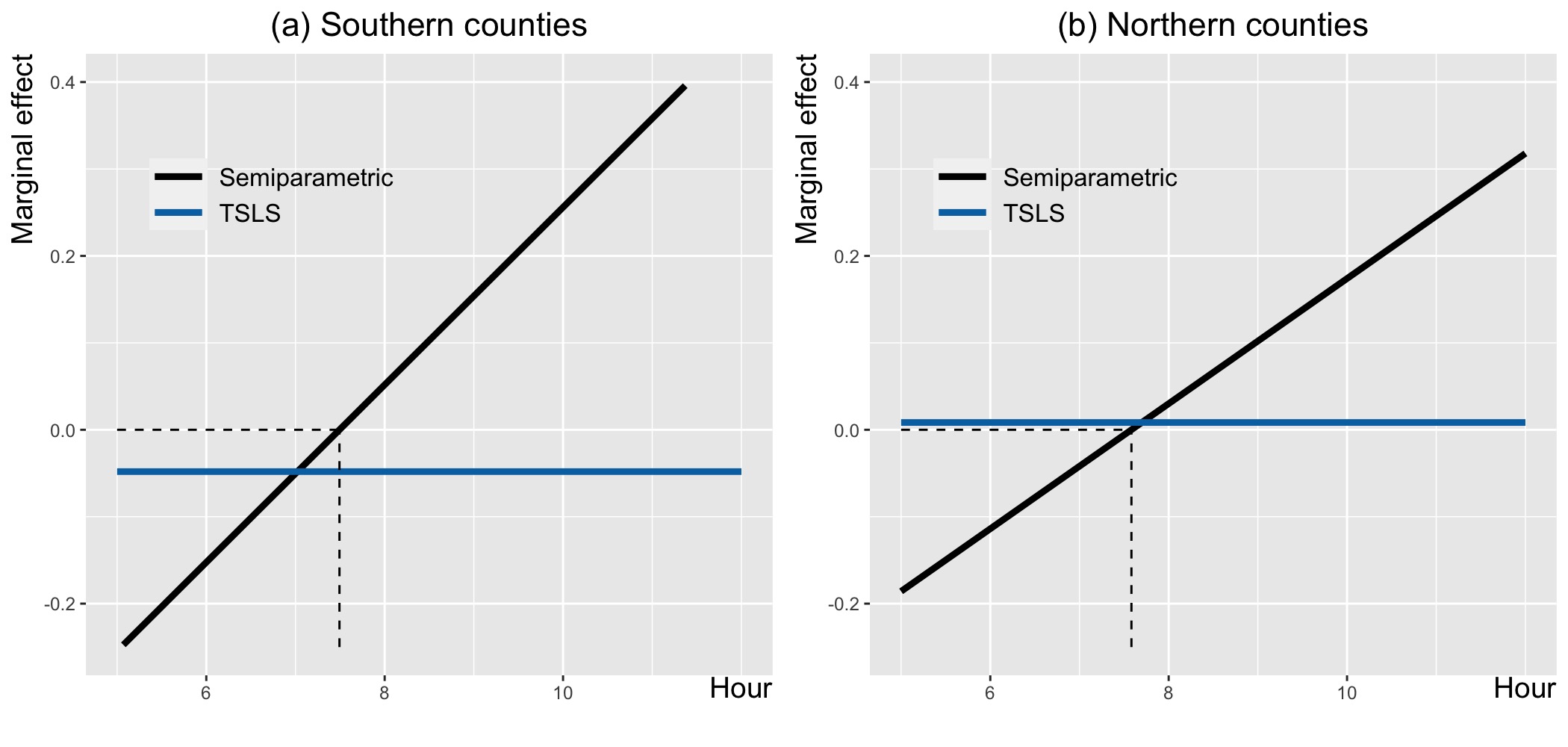}
    \caption{Estimated marginal effects of sleep time on BMI (kg/m$^2$).}
    \label{fig:effects-empirical}
    \caption*{\footnotesize The plots show the estimated marginal effects based on the semiparametric and the TSLS estimator. Graph (a) is computed based on counties with latitude < 37. Graph (a) is computed based on counties with latitude > 37. We can see that the marginal effects are increasing, indicating a nonlinear (U-shaped) structural function. The optimal sleep time computed as the zero of the marginal effect curve is between 7 and 8 hours. However, the marginal effect estimated from the TSLS procedure is constant across different sleep times. These estimates are small in magnitude and less informative for the researcher.}
\end{figure}

    \subsection{Simulations}
    We use simulation studies to investigate the performance of the proposed semiparametric method and compare it with the performance of the TSLS estimator.
    The data generating process (DGP) for these simulations was chosen to roughly approximate the ATUS data used in the empirical application. Let marginal distributions of $U$ and $\varepsilon$ are given by $F_U = \textit{Unif}(0,1)$ and $F_\varepsilon = \textit{Beta}(2,2)$, respectively. Two marginal distributions for $R$ are considered, $\textit{Unif}(0,1)$ and $N(0,1)$. The joint distribution of $(R,\varepsilon,U)$ be characterized by the Gaussian copula with correlation structure \textit{corr}$(R,U) = 0$, \textit{corr}$(\varepsilon,R) = \rho_R$, and \textit{corr}$(\varepsilon,U) = \rho_{U}$. The treatment choice model is given by $h^*_0(r,u) = r + 2\sin(\pi u /2)$ and $h^*_1(r,u) = r + 2u^3$. The structural function is given by 
    \begin{align*}
        g_{\gamma}(T,R,\varepsilon) = \gamma_1 (T-0.5) + \gamma_2 (T^2 - 0.5^2) + \gamma_3 (T - 0.5) \varepsilon + \varepsilon + R. 
    \end{align*}
    The true $\gamma^*$ is taken to be $(1,1,1)$.

    Further implementation details are described below. Construct a kernel function $k$ that is an even function given by 
    \begin{align*}
        k(x) = 
        \begin{cases}
            2x^3 - 3x^2 + 1, & \text{ if } x \in [0,1], \\
            0, & \text{ if } x > 1.
        \end{cases}
    \end{align*}
    We can verify that $k$ is continuously differentiable on the real line and compactly supported on $[0,1]$. Within the interior of its support, $k$ is strictly positive. We use this function $k$ to be the kernels $k_T$, $k_R$, $k_Y$, and $k_{\textit{FS}}$ in the estimation. The bandwidth is chosen to be $b_1 = b_2 = b_3 = 2n^{-1/5}$. The weighting function $w(e,u)$ is chosen to be constant in $u$ and equal to the standard normal density function with respect to $e$.

    Table \ref{tb:sim-semiparametric} contains the simulation results of the performance of the semiparametric estimator for different choices of the marginal distribution of $R$, the correlation parameters $(\rho_U,\rho_R)$ and the sample size. In each case, the number of replications is set at 500. We can see that the estimator performs well with a moderate sample size ($n=1000$). The marginal distribution of $R$ does not have a large impact on the performance. When the sample size is small, larger values of $\rho_R$ or $\rho_U$ can lead to poorer performance of the estimator. The plausible reason is that larger values of the correlation parameters would lead to more severe endogeneity issues in finite samples. When the sample size becomes large, the performance of the estimator does not vary significantly with the choices of $(\rho_R,\rho_U)$.

    \begin{table}[!htbp] 
        \centering
        \begin{tabular}{ccccccccccccc}
        \toprule
        Dist. $R$ & $\rho_U$ & $\rho_R$ & Param & \multicolumn{3}{c}{$n = 500$} & \multicolumn{3}{c}{$n=1000$} & \multicolumn{3}{c}{$n=1500$}\\ \cmidrule(lr){5-7} \cmidrule(lr){8-10} \cmidrule(lr){11-13}
        & & & & bias & sd & mse & bias & sd & mse & bias & sd & mse \\ \midrule 
        \multirow{12}{*}{$U(0,1)$} & \multirow{6}{*}{$.3$} & \multirow{3}{*}{$.3$} & $\gamma_1$ & -.257 & .206 & .108 & -.233 & .134 & .072 & -.219 & .119 & .062\\ 
        & & & $\gamma_2$ & .038 & .153 & .025 & .062 & .098 & .013 & .068 & .081 & .011 \\
        & & & $\gamma_3$ & .172 & .139 & .049 & .158 & .091 & .033 & .154 & .073 & .029 \\ \cmidrule(lr){4-13}
        & & \multirow{3}{*}{$.5$} & $\gamma_1$ & -.260 & .464 & .283 &-.231 & .134 & .071 & -.215 & .116 & .060  \\ 
        & & & $\gamma_2$ & .027 & .514 & .265 & .053 & .097 & .012 & .060 & .080 & .010  \\ 
        & & & $\gamma_3$ & .178 & .168 & .060 & .159 & .097 & .035 & .155 & .076 & .030 \\  \cmidrule(lr){4-13}
        & \multirow{6}{*}{$.5$} & \multirow{3}{*}{$.3$} & $\gamma_1$ & -.256 & .199 & .105 & -.225 & .129 & .067 & -.210 & .111 & .056 \\
        & & & $\gamma_2$ & .039 & .157 & .026 &  .061 & .098 & .013 & .068 & .078 & .011 \\
        & & & $\gamma_3$ & .189 & .140 & .055 & .171 & .086 & .037 & .163 & .070 & .031  \\ \cmidrule(lr){4-13}
        & & \multirow{3}{*}{$.5$} & $\gamma_1$ & -.248 & .456 & .269 & -.217 & .125 & .063 & -.202 & .104 & .052 \\
        & & & $\gamma_2$ & .015 & .512 & .262 & .044 & .095 & .011 & .052 & .075 & .008 \\
        & & & $\gamma_3$ & .222 & .419 & .225 & .178 & .100 & .042 & .171 & .076 & .035 \\  \cmidrule(lr){1-13}
        \multirow{12}{*}{$N(0,1)$} & \multirow{6}{*}{$.3$} & \multirow{3}{*}{$.3$} & $\gamma_1$ & -.224 & .479 & .280 & -.228 & .156 & .076 & -.216 & .136 & .065 \\
        & & & $\gamma_2$ & .006 & .527 & .278 & .059 & .114 & .016 & .067 & .093 & .013 \\
        & & & $\gamma_3$ & .152 & .170 & .052 & .152 & .105 & .034 & .148 & .085 & .029 \\ \cmidrule(lr){4-13}
        & & \multirow{3}{*}{$.5$} & $\gamma_1$ & -.254 & .362 & .196 & -.224 & .153 & .074 & -.212 & .124 & .060 \\
        & & & $\gamma_2$ & .030 &.268 & .073 & .048 & .112 & .015 & .058 & .091 & .012 \\
        & & & $\gamma_3$ & .167 & .178 & .060 & .152 & .113 & .036 & .149 & .086 & .030 \\ \cmidrule(lr){4-13}
        & \multirow{6}{*}{$.5$} & \multirow{3}{*}{$.3$} & $\gamma_1$ & -.243 & .233 & .113 & -.221 & .149 & .071 & -.209 & .127 & .060  \\
        & & & $\gamma_2$ & .027 & .185 & .035 & .059 & .113 & .016 & .069 & .090 & .013 \\
        & & & $\gamma_3$ & .176 & .172 & .061 & .163 & .102 & .037 & .157 & .082 & .031 \\ \cmidrule(lr){4-13}
        & & \multirow{3}{*}{$.5$} & $\gamma_1$ & -.230 & .500 & .303 & -.210 & .144 & .065 & -.198 & .118 & .053 \\
        & & & $\gamma_2$ & -.012 & .619 & .383 & .037 & .113 & .014 & .049 & .086 & .010 \\
        & & & $\gamma_3$ & .193 & .248 & .099 & .172 & .116 & .043 & .167 & .088 & .036 \\ \bottomrule
    \end{tabular}
    \caption{Performance of the semiparametric estimator.}
    \caption*{\footnotesize The structural function follows the three-parameter specification in Example \ref{eg:1}. The number of replications is $500$. The marginal distribution of $R$ is chosen to be the uniform distribution on $[0,1]$ or the standard normal distribution. The two correlation parameters $\rho_R = \text{corr}(\varepsilon,R)$ and $\rho_U = \text{corr}(\varepsilon,U)$ are chosen from $\{0.3,0.5\}$. The results demonstrate the following points. First, the semiparametric estimator performs well with a moderate sample size of 1000. Second, the performance of the estimator is not affected by The marginal distribution of $R$. Third, when the sample size is as large as 1000, the performance of the estimator does not vary significantly with the choices of $(\rho_R,\rho_U)$.}
    \label{tb:sim-semiparametric}
    \end{table}

    It is also of interest to compare the semiparametric estimator with the TSLS estimator. Directly comparing the the two estimators can be difficult since they are of different dimensions and converge to different limits. Instead, we can compare their performance on estimating the marginal effect. For the structural function $g_\gamma(t,\bar{r},e) = \gamma_1 t + \gamma_2 t^2 + \gamma_3 te + e$, the marginal effect of the treatment on the outcome is $\frac{\partial}{\partial t} g_\gamma(t,\bar{r},e) = \gamma_1 + 2\gamma_2 t + \gamma_3 e$, which takes on different values for different treatment and outcome levels. For a given treatment level $t$, we can use the semiparametric estimator to obtain an estimate $\hat{\gamma}_1 + 2\hat{\gamma}_2 t + \hat{\gamma}_3 e$ of the marginal effect. However, a TSLS procedure would deliver a scalar estimate that is a mixture of marginal effects across different treatment and outcome levels. Figure \ref{fig:nonlinear-sf} shows that with a nonlinear specification the semiparametric estimator outperforms the TSLS estimator. Figure \ref{fig:nonseparable-sf} shows similar findings with a fully nonlinear and nonseparable specification.
    
    Next, we compare the semiparametric estimator with the TSLS estimator when the structural function $g$ is linear. This is achieved by imposing $\gamma_2 = \gamma_3 = 0$. In this case, the TSLS estimator is consistent for the coefficient $\gamma_1$. However, the identification of the TSLS estimator is based solely on the difference between the two means. If the two distributions corresponding to $h_0$ and $h_1$ have the same mean, then the TSLS procedure suffers from weak identification issues. In contrast, the identification of the semiparametric estimator $\hat{\gamma}$ is based on the entire difference between $h_0$ and $h_1$. The semiparametric estimator continues to work even if the estimand of the TSLS estimator is weakly identified. For the simulation, we let $h_0$ be the quantile function of \textit{Beta}$(0.1,0.1)$, and $h_1$ be the quantile function of \textit{Beta}$(10,10)$. These two distributions are significantly different, but they have the same mean (0.5). Figure \ref*{fig:linear-sf} shows that in this case the semiparametric estimator outperforms the TSLS estimator even if the structural function is linear.
    

    \begin{figure}[!htbp] 
        \centering
        \includegraphics[width = 1\linewidth]{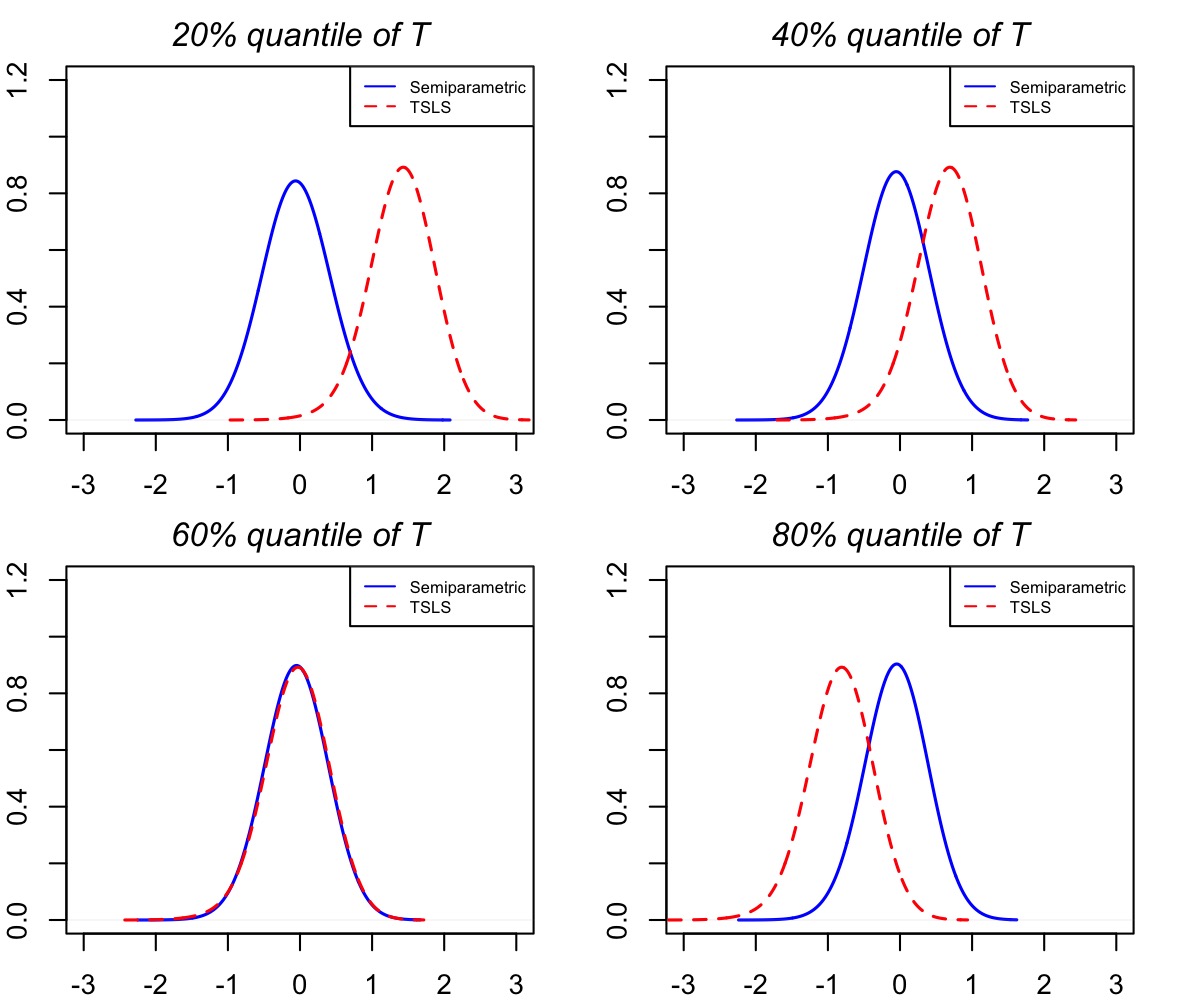}
        \caption{Marginal effects comparison with a nonlinear structural function.}
        \label{fig:nonlinear-sf}
        \caption*{\footnotesize Kernel density estimates of estimated marginal effect by the semiparametric and TSLS estimators (minus the true marginal effect) based on 500 replications. The sample size is 1000. The true structural function is specified to be $g(t,\bar{r},e)=t/2+t^2+e$, where the marginal effect is $1/2 + 2t$. The graphs show the estimation results of four quantile levels of the treatment: 20\%, 40\%, 60\%, and 80\%. The distribution of the semiparametric estimator is correctly centered while the TSLS estimator incurs a large bias. The TSLS estimator gives an approximately unbiased estimate of the marginal effect only around the 60\% quantile level.}
    \end{figure}

    \begin{figure}[!htbp] 
        \centering
        \includegraphics[width = 1\linewidth]{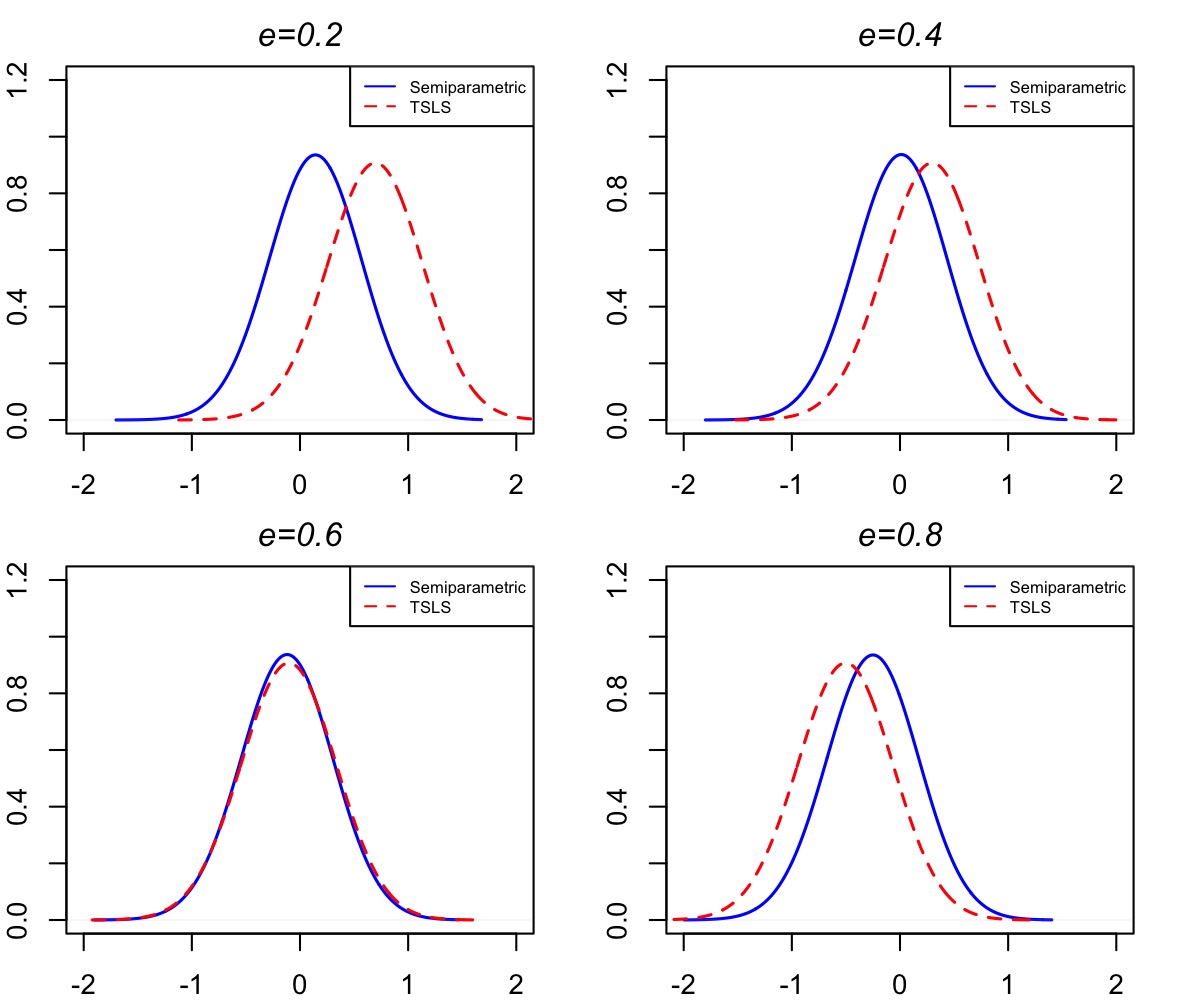}
        \caption{Marginal effects comparison with a nonseparable structural function.}
        \label{fig:nonseparable-sf}
        \caption*{\footnotesize Kernel density estimates of estimated marginal effect by the semiparametric and TSLS estimators (minus the true marginal effect) based on 500 replications. The sample size is 1000. The true structural function is specified to be $g(t,\bar{r},e)=t/2+t^2+2te+e$, where the marginal effect is $1/2 + 2t+2e$. The treatment level is specified to be the median. The graphs show the estimation results of four levels of $e$: 0.2, 0.4, 0.6, 0.8. The distribution of the semiparametric estimator is correctly centered while the TSLS estimator incurs a large bias. The TSLS estimator gives an unbiased estimate of the marginal effect only around $e=0.6$.}
    \end{figure}

    \begin{figure}[!htbp] 
        \centering
        \includegraphics[width = 1\linewidth]{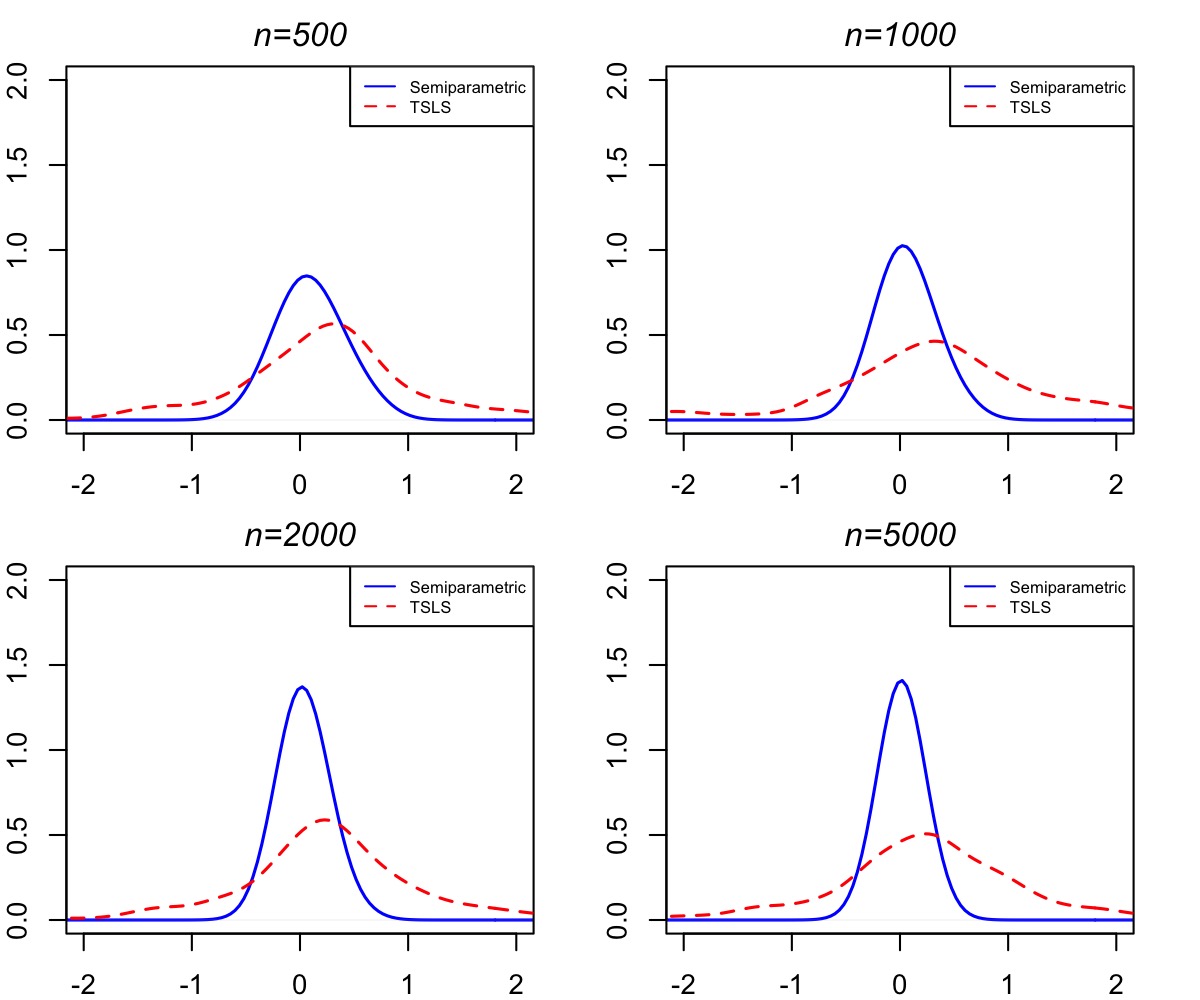}
        \caption{Semiparametric and TSLS estimators when the structural function is linear.}
        \label{fig:linear-sf}
        \caption*{\footnotesize Kernel density estimates of the semiparametric and TSLS estimators (minus the true $\gamma^*=1$) based on 500 replications. In the DGP, $h_0$ is equal to the quantile function of \textit{Beta}$(0.1,0.1)$, and $h_1$ is equal to the quantile function of \textit{Beta}$(10,10)$. These two distributions are significantly different, but they have the same mean (0.5). In this case, the semiparametric estimator outperforms the TSLS estimator because the latter is weakly identified. }
    \end{figure}
    
    \section{Conclusion}
    
    In this study, we have examined the identification and estimation of the structural function in an RD design with a continuous treatment variable. We have established the nonparametric identification result and proposed a semiparametric estimator for the possibly nonlinear and nonseparable structural function. The estimator is proven to be consistent and asymptotically normal. The empirical application and simulation studies demonstrate the advantage of the semiparametric estimator compared to the TSLS estimator. 
    
    There are two promising ways to extend the results in this paper in the future. First, we can consider extrapolating the identification result away from the cutoff. This can be done by identifying the derivative of the structural function with respect to the running variable at the cutoff as in \cite{dong2015identifying}. Second, we can apply the methodology developed in this paper to the regression kink design model studied by \cite{card2015RKD,dong2018jump} where the treatment choice function exhibits a kink instead of a discontinuity at the cutoff.

\appendix

\numberwithin{equation}{section}
\numberwithin{lemma}{section}
\numberwithin{definition}{section}

\section{Proof of identification results} \label{sec:proof-id}

In this section we prove the identification results including Lemma \ref{lm:quantile-representation}, Lemma \ref{lm:local-control}, Theorem \ref{thm:id-rd-cutoff}, Corollary \ref{cor:CASF-identification}, and Corollary \ref{cor:gamma-star-identification}.

\begin{proof} [Proof of Lemma \ref{lm:quantile-representation}]
    We first prove the second equality in Equation (\ref{eqn:def-U}). The conditional distribution function of $T$ given $R$ is 
    \begin{align*}
        F_{T|R}(t|r) & = \mathbb{P}(T \leq t | R=r) \mathbf{1}\{r < \bar{r}\} + \mathbb{P}(T \leq t | R=r) \mathbf{1}\{r \geq \bar{r}\} \\
        & = \mathbb{P}(U_0 \leq m_0^{-1}(r,t) | R=r) \mathbf{1}\{r < \bar{r}\} + \mathbb{P}(U_1 \leq m_1^{-1}(r,t) | R=r) \mathbf{1}\{r \geq \bar{r}\} \\
        & = F_{U_0|R}(m_0^{-1}(r,t)|r) \mathbf{1}\{r < \bar{r}\} + F_{U_1|R}(m_1^{-1}(r,t)|r) \mathbf{1}\{r \geq \bar{r}\},
    \end{align*}
    where the last line follows from the monotonicity of $m_0$ and $m_1$. Therefore, we have 
    \begin{align*}
        F_{T|R}(T|R) & = F_{U_0|R}(m_0^{-1}(R,T)|R) \mathbf{1}\{R < \bar{r}\} + F_{U_1|R}(m_1^{-1}(R,T)|R) \mathbf{1}\{R \geq \bar{r}\} \\
        & = F_{U_0|R}(U_0|R) \mathbf{1}\{R < \bar{r}\} + F_{U_1|R}(U_1|R) \mathbf{1}\{R \geq \bar{r}\} = U.
    \end{align*}
    For (i) of Lemma \ref{lm:quantile-representation}, take any $u \in [0,1]$ and $r < \bar{r}$, we have
    \begin{align*}
        \mathbb{P}(U \leq u | R=r) = \mathbb{P}(F_{U_0 | R}(U_0 | r) \leq u | R=r) = u.
    \end{align*}
    Similarly, we can show that $\mathbb{P}(U \leq u | R=r) = u$ when $r \geq \bar{r}$. Therefore, $U | R$ follows the uniform distribution. Then
    the second argument follows from the monotonicity of $h$ with respect to the second argument.
    The statements (ii) and (iii) are straightforward from the definition of $h_0$ and $h_1$ and the monotonicity and continuity assumptions. For (iv), notice that for $r < \bar{r}$,
    \begin{align*}
        F_{\varepsilon | U,R}(e | u,r) & = \mathbb{P}(\varepsilon \leq e | U=u, R=r) \\
        & = \mathbb{P}(\varepsilon \leq e | F_{U_0 | R}(U_0 | r)=u, R=r) \\
        & = \mathbb{P}(\varepsilon \leq e | U_0 = F_{U_0 | R}^{-1}(u | r), R=r) \\
        & = F_{\varepsilon | U_0,R}\big(e | F_{U_0 | R}^{-1}(u | r),r\big),
    \end{align*}
    where the third equality follows from the strict monotonicity of $F_{U_0 | R}^{-1}(u|r)$ in $u$ imposed in Assumption \ref{ass:smoothness}(ii). Similarly, we can show that for $r \geq \bar{r}$,
    \begin{align*}
        F_{\varepsilon | U,R}(e | u,r) & = \mathbb{P}(\varepsilon \leq e | U=u, R=r) = F_{\varepsilon | U_1,R}\big(e | F_{U_1 | R}^{-1}(u | r),r\big).
    \end{align*}
    Combining the two equations together, we have
    \begin{align*}
        F_{\varepsilon | U,R}(e | u,r) = 
        \begin{cases}
            F_{\varepsilon | U_0,R}\big(e | F_{U_0 | R}^{-1}(u | r),r\big), r < \bar{r}, \\
            F_{\varepsilon | U_1,R}\big(e | F_{U_1 | R}^{-1}(u | r),r\big), r \geq \bar{r}.
        \end{cases}
    \end{align*}
    By Assumption \ref{ass:smoothness}(iii), we know that $F_{\varepsilon | U,R}(e | u,r)$ is strictly increasing in the first argument $e$. By Bayes rule, the rank similarity condition in Assumption \ref{ass:rank-similarity} implies that $U_0 | R=\bar{r}^-$ has the same distribution as $U_1 | R=\bar{r}^+$, and $\varepsilon | U_0,R=\bar{r}^- $ has the same distribution as $ \varepsilon | U_1,R=\bar{r}^+$. Then we have
    \begin{align*}
        \lim_{r \uparrow \bar{r}} F_{\varepsilon | U,R}(e | u,r) =  F_{\varepsilon | U_0,R}\big(e | F_{U_0 | R}^{-1}(u | \bar{r}),\bar{r}\big) = F_{\varepsilon | U_1,R}\big(e | F_{U_1 | R}^{-1}(u | \bar{r}),\bar{r}\big) = \lim_{r \downarrow \bar{r}} F_{\varepsilon | U,R}(e | u,r).
    \end{align*}
\end{proof}

\begin{lemma} \label{lm:continuous-inverse-bivariate}
    Let $f: (x,y) \mapsto z$ be a real-valued bivariate function defined on a compact set in $\mathbb{R} \times \mathbb{R}^d$. Assume that $f$ is continuous on its entire domain and strictly increasing in the first argument $x$. Let $f^{-1}$ denote the inverse of $f$ with respect to the first argument. Then $f^{-1}$ is continuous on its domain and strictly increasing in the first argument.

    Therefore, under Assumptions \ref{ass:dual-monotonicity} and \ref{ass:smoothness}, the inverse of $g,h_0$ and $h_1$ (with respect to the last argument), which are respectively $g^{-1}, (h_0)^-$ and $(h_1)^-$, are all continuous and strictly increasing with respect to the last argument.
\end{lemma}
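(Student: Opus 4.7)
The plan is to establish strict monotonicity of $f^{-1}$ first (which is easy) and then continuity via a subsequence/compactness argument that exploits the continuity and injectivity of $f$ in the first argument.

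For strict monotonicity, I would fix $y$ and take $z_1 < z_2$ in the relevant slice of the domain of $f^{-1}$. Setting $x_i = f^{-1}(z_i,y)$, if $x_1 \geq x_2$, then strict monotonicity of $f$ in its first argument would force $z_1 = f(x_1,y) \geq f(x_2,y) = z_2$, contradicting $z_1 < z_2$. Hence $x_1 < x_2$, which gives strict monotonicity of $f^{-1}$ in its first argument.

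For continuity, I would take an arbitrary sequence $(z_n,y_n)$ in the domain of $f^{-1}$ converging to $(z_0,y_0)$, and let $x_n = f^{-1}(z_n,y_n)$ and $x_0 = f^{-1}(z_0,y_0)$. The goal is to show $x_n \to x_0$. Since the domain of $f$ is compact, the sequence $\{x_n\}$ lies in a compact subset of $\mathbb{R}$, so it suffices to show that every convergent subsequence $x_{n_k} \to x^*$ must satisfy $x^* = x_0$. Applying continuity of $f$ yields $f(x_{n_k},y_{n_k}) \to f(x^*,y_0)$, while by construction $f(x_{n_k},y_{n_k}) = z_{n_k} \to z_0$. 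Hence $f(x^*,y_0) = z_0 = f(x_0,y_0)$, and strict monotonicity of $f$ in its first argument (which implies injectivity in that argument) forces $x^* = x_0$. A standard subsequence-of-subsequence argument then concludes $x_n \to x_0$, establishing joint continuity of $f^{-1}$.

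The main (minor) obstacle is handling the joint continuity in $(z,y)$ rather than just continuity in $z$ for fixed $y$; the subsequence argument above handles this cleanly because the compactness of the domain is assumed. For the application to $g$, $h_0$, and $h_1$, the hypotheses of the lemma are satisfied directly: Assumption \ref{ass:dual-monotonicity} gives strict monotonicity in the last argument, and Assumption \ref{ass:smoothness}(i), combined with Lemma \ref{lm:quantile-representation}(ii)--(iii) for $h_0,h_1$, gives continuity on the respective (compact) domains. Therefore $g^{-1}$, $(h_0)^{-1}$, and $(h_1)^{-1}$ are continuous and strictly increasing in the last argument.
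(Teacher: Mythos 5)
Your proof is correct and follows essentially the same approach as the paper's: both establish strict monotonicity by a short contradiction from monotonicity of $f$, and both prove continuity via a sequential/compactness argument that extracts a convergent subsequence of the $x$-values, applies continuity of $f$, and invokes injectivity of $f$ in the first argument. The only cosmetic difference is that you phrase continuity directly through the every-convergent-subsequence criterion while the paper frames it as a proof by contradiction; the key ingredients are identical.
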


\begin{proof} [Proof of Lemma \ref{lm:continuous-inverse-bivariate}]
    Fix any $(z_0,y_0)$, we want to show that $f^{-1}(z_0,\cdot)$ is continuous at $(z_0,y_0)$. If not, then there exists $\delta>0$ and a sequence $\{(z_k,y_k)\}$ such that $\lVert (z_k,y_k) - (z_0,y_0) \rVert < 1/k$ but 
    \begin{align*}
        |f^{-1}(z_0,y_k) - f^{-1}(z_0,y_0)| > \delta.
    \end{align*}
    Denote $x_k = f^{-1}(z_k,y_k)$ and $x_0 = f^{-1}(z_0,y_0)$. Because the sequence $\{x_k\}$ lies in a compact set, it has a convergent subsequence. Without loss of generality, we assume $\{x_k\}$ itself is converging. Then $\lim x_k \ne x_0$. However, by the continuity of $f$,
    \begin{align*}
        f(\lim x_k,y_0) = f(\lim x_k,\lim y_k) = \lim f(x_k,y_k) = z_0 = f(x_0,y_0).
    \end{align*}
    This leads to a contradiction since $f(\cdot,y_0)$ is strictly increasing.

    To show that $f^{-1}$ is strictly increasing with respect to the first argument, take any $y$ and $z_1 > z_0$. If $f^{-1}(z_1,y) \leq f^{-1}(z_0,y)$, then $z_1 = f(f^{-1}(z_1,y),y) \leq f(f^{-1}(z_0,y),y) = z_0$, which leads to a contradiction.
\end{proof}

\begin{proof} [Proof of Lemma \ref{lm:local-control}]

    By the definition of $F^-_{Y|T,R}$ and the monotonicity and continuity of $g^*$ and $h_0$, 
    \begin{align*}
        \text{LHS of (\ref{eqn:refutable-implication}) with } g = g^* & = \lim_{r \uparrow \bar{r}} F_{Y | T,R} (g^*(h_0(r,u),r,e) | h_0(r,u),r) \\
        & = \lim_{r \uparrow \bar{r}} \mathbb{P}(Y \leq g^*(h_0(r,u),r,e) | T=h_0(r,u),R=r) \\
        & = \lim_{r \uparrow \bar{r}} \mathbb{P}(g^*(h_0(r,u),r,\varepsilon) \leq g^*(h_0(r,u),r,e) | T=h_0(r,u),R=r) \\
        & = \lim_{r \uparrow \bar{r}} \mathbb{P}(\varepsilon \leq e | U=u,R=r) \\
        & = F_{\varepsilon | U,R}(e | u,r)
    \end{align*}
    where the last line follows from the continuity of $F_{\varepsilon | U,R}(e | u,r)$ with respect to the last argument $r$ (Lemma \ref{lm:quantile-representation}). Similarly, we can show that the RHS of (\ref{eqn:refutable-implication}) is equal to $F_{\varepsilon | U,R}(e | u,\bar{r}).$
    Then the result follows.
\end{proof}

\begin{proof} [Proof of Theorem \ref{thm:id-rd-cutoff}]
    Denote 
    \begin{align*}
        \mathcal{T}^\times = \{ h_0(\bar{r},u): h_0(\bar{r},u) = h_1(\bar{r},u) \in [t_0',t_0''] \cap [t_1',t_1''], u \in [0,1] \}.
    \end{align*}
    By Assumption \ref{ass:strong-discontinuity}(ii), $\mathcal{T}^\times$ is nonempty and finite. Then $[t_0',t_0''] \cap [t_1',t_1'']$ is a closed interval with nonempty interior.\footnote{Notice that if $[t_0',t_0''] \cap [t_1',t_1'']$ is a singleton, then $\mathcal{T}^\times$ is empty.} Let 
    $$\inf ([t_0',t_0''] \cap [t_1',t_1'']) = t_1 \leq t_2 \leq \cdots \leq t_L = \sup ([t_0',t_0''] \cap [t_1',t_1''])$$
    denote the unique elements of $\mathcal{T}^\times \cup \{ \inf ([t_0',t_0''] \cap [t_1',t_1'']), \sup ([t_0',t_0''] \cap [t_1',t_1'']) \}.$

    Here is the strategy of the proof. For each $g \in \mathcal{G}$ that satisfies Equation (\ref{eqn:refutable-implication}), define
    \begin{align} \label{eqn:Ig-defn}
        \tilde{\lambda}^g(t,e) = g^{-1}(t,\bar{r},g^*(t,\bar{r},e)).
    \end{align} 
    The goal is to show that $\tilde{\lambda}^g$ is constant as a function of $t$ for every $e \in \mathcal{E}$.
    We proceed in five steps. Step 1 derives some useful properties of the function $\tilde{\lambda}^g$, including an important identity, Equation (\ref{eqn:lambda-g-invariant}). Step 2 shows that $\tilde{\lambda}^g$ is constant in $t$ on the interval $(t_1,t_2)$. Step 3 shows that $\tilde{\lambda}^g$ is constant in $t$ on the entire region $[t_1,t_L] = [t_0',t_0''] \cap [t_1',t_1'']$. Step 4 further expands this constancy to $[t_0',t_0''] \cup [t_1',t_1'']$. Step 5 concludes.

    \bigskip
    \noindent \textbf{Step 1.}
    Since $g^*$ and $g^{-1}$ are continuous and are strictly increasing in the last argument, $\tilde{\lambda}^g$ is also continuous and strictly increasing in the last argument. Also, $F^-_{Y|T,R}(\cdot | t,\bar{r})$ is strictly increasing since 
    \begin{align*}
        F^-_{Y | T,R}(y | t,\bar{r}) = F_{\varepsilon | U,R}((g^*)^{-1}(t,\bar{r},y) | (h_0)^{-1}(\bar{r},t),\bar{r}) 
    \end{align*}
    is strictly increasing in $y$ (Assumption \ref{ass:smoothness}(ii)).
    
    Notice that $g(t,\bar{r},\tilde{\lambda}^g(t,e)) = g^*(t,\bar{r},e).$
    Then for any $e \in \mathcal{E}$ and $u \in [0,1]$,
    \begin{align*}
        & F^-_{Y | T,R} (g(h_0(\bar{r},u),\bar{r},\tilde{\lambda}^g(h_0(\bar{r},u),e)) | h_0(\bar{r},u),\bar{r}) \\
        = &  F^-_{Y | T,R} (g^*(h_0(\bar{r},u),\bar{r},e) | h_0(\bar{r},u),\bar{r}) \\
        = &  F^+_{Y | T,R} (g^*(h_1(\bar{r},u),\bar{r},e) | h_1(\bar{r},u),\bar{r}) \\
        = &  F^+_{Y | T,R} (g(h_1(\bar{r},u),\bar{r},\tilde{\lambda}^g(h_1(\bar{r},u),e)) | h_1(\bar{r},u),\bar{r}),
    \end{align*}
    where the second inequality follows from Lemma \ref{lm:local-control}. The above equality implies 
    \begin{align} \label{eqn:lambda-g-invariant}
        \tilde{\lambda}^g(h_0(\bar{r},u),e) = \tilde{\lambda}^g(h_1(\bar{r},u),e).
    \end{align}
    To see that, suppose there exists $e $ and $u$ such that $\tilde{\lambda}^g(h_0(\bar{r},u),e) \ne \tilde{\lambda}^g(h_1(\bar{r},u),e).$ Since $g$ satisfies Condition (\ref{eqn:refutable-implication}), 
    \begin{align*}
        &F^-_{Y | T,R} (g(h_0(\bar{r},u),\bar{r},\tilde{\lambda}^g(h_1(\bar{r},u),e)) | h_0(\bar{r},u),\bar{r}) \\
        = &  F^+_{Y | T,R} (g(h_1(\bar{r},u),\bar{r},\tilde{\lambda}^g(h_1(\bar{r},u),e)) | h_1(\bar{r},u),\bar{r}) \\
        = &  F^-_{Y | T,R} (g(h_0(\bar{r},u),\bar{r},\tilde{\lambda}^g(h_0(\bar{r},u),e)) | h_0(\bar{r},u),\bar{r}),
    \end{align*}
    which violates the fact that $F^-_{Y | T,R} (g(h_0(\bar{r},u),\bar{r},\cdot) | h_0(\bar{r},u),\bar{r})$ is strictly increasing.

    \bigskip
    \noindent \textbf{Step 2.}
    Consider the interval $(t_1,t_2)$. By construction, $\{t_1,t_2\} \cap \mathcal{T}^\times \ne \emptyset$. Notice that over $(t_1,t_2)$, $(h_0)^{-1}(\bar{r},\cdot)$ and $(h_1)^{-1}(\bar{r},\cdot)$ do not intersect. Then by continuity, one of them is always strictly greater than the other. The goal is to show that $\tilde{\lambda}^g(t,e)$ is constant as a function of $t$ over $(t_1,t_2)$. There are four cases to consider, depending on whether $t_1 \in \mathcal{T}^\times$ or $t_2 \in \mathcal{T}^\times$ and whether $(h_0)^{-1}(\bar{r},\cdot)$ is strictly greater or smaller than $(h_1)^{-1}(\bar{r},\cdot)$ over $(t_1,t_2)$.
    
    We first focus on the case of $t_1 \in \mathcal{T}^\times$ and $(h_0)^{-1}(\bar{r},\cdot) < (h_1)^{-1}(\bar{r},\cdot)$ over $(t_1,t_2)$. The other cases are essentially the same. Define a mapping $\pi(t) = h_1(\bar{r},(h_0)^{-1}(\bar{r},t))$. 
    Such a mapping $\pi$ maps the interval $(t_1,t_2)$ back to itself. To see that, we first notice that $\pi(t)$ is less than $t$ for any $t \in (t_1,t_2)$ since 
    \begin{align*}
        \pi(t) = h_1(\bar{r},(h_0)^{-1}(\bar{r},t)) \leq h_1(\bar{r},(h_1)^{-1}(\bar{r},t)) = t < t_2.
    \end{align*}
    Suppose $\pi(t) \leq t_1$, then
    \begin{align*}
        (h_0)^{-1}(\bar{r},t_1) & = (h_1)^{-1}(\bar{r},t_1) \\
        & \leq (h_1)^{-1}(\bar{r},\pi(t)) \\
        & = (h_0)^{-1}(\bar{r},t),
    \end{align*}
    where the first line follows from $t_1 \in \mathcal{T}^\times$, the second line follows from the monotonicity of $(h_1)^{-1}$ and $\pi(t) \leq t_1$, and the last line follows from the definition of $\pi$. This contradicts the strict monotonicity of $(h_0)^{-1}(\bar{r},\cdot)$ since $t_1 < t$.

    Now pick any $\tilde{t}_0 \in (t_1,t_2)$, the recursive sequence $\tilde{t}_{k+1} = \pi(\tilde{t}_k)$ is well-defined. This sequence is non-increasing and bounded below by $t_1$. Therefore, $\lim_{k \rightarrow \infty} \tilde{t}_k$ exists and lies in the interval $[t_1,t_2)$. By the continuity of $(h_0)^{-1}$ and $(h_1)^{-1}$,
    \begin{align*}
        (h_0)^{-1}(\bar{r},\lim \tilde{t}_k) & = \lim (h_0)^{-1}(\bar{r}, \tilde{t}_k) \\
        & = \lim (h_1)^{-1}(\bar{r}, \pi(\tilde{t}_k)) \\
        & = \lim (h_1)^{-1}(\bar{r}, \tilde{t}_{k+1}) \\
        & = (h_1)^{-1}(\bar{r}, \lim \tilde{t}_{k+1}),
    \end{align*}
    where the second line follows from the definition of $\pi$ and the third line follows from the construction of the sequence $\{\tilde{t}_k\}$. Then it must be true that $\lim \tilde{t}_k = t_1$ since we are studying the case where $(h_0)^{-1}(\bar{r},\cdot) < (h_1)^{-1}(\bar{r},\cdot)$ over $(t_1,t_2)$.

    Equation (\ref{eqn:lambda-g-invariant}) implies that $\tilde{\lambda}^g(\cdot,e)$ is invariant with respect to the transformation $\pi$: 
    \begin{align*}
        \tilde{\lambda}^g(\pi(t),\bar{r},e) = \tilde{\lambda}^g(t,e), 
    \end{align*}
    for every $t \in [t_0',t_0'']$ and $e \in \mathcal{E}$. 
    Then $\tilde{\lambda}^g$ is invariant along the sequence $\{ \tilde{t}_k \}$. By the continuity of $\tilde{\lambda}^g$,  
    \begin{align*}
        \tilde{\lambda}^g(\tilde{t}_0,e) & = \lim_{k \rightarrow \infty} \tilde{\lambda}^g(\tilde{t}_0,e) = \lim_{k \rightarrow \infty} \tilde{\lambda}^g(\tilde{t}_k,e) = \tilde{\lambda}^g(\lim_{k \rightarrow \infty} \tilde{t}_k,e) = \tilde{\lambda}^g(t_1,e) = \lambda^g(e),
    \end{align*}
    where the first equality holds since $\tilde{\lambda}^g(\tilde{t}_0,\bar{r},e)$ is constant with respect to $k$, the second equality holds since $\tilde{\lambda}^g$ is invariant along the sequence $\{ \tilde{t}_k \}$, the third equality follows from the continuity of $\tilde{\lambda}^g$, and the last equality is the definition of the function $\lambda$ on $\mathcal{E}$.

    Since the initial point $\tilde{t}_0$ is chosen arbitrarily from the interval $(t_1,t_2)$, the above analysis shows that $\tilde{\lambda}^g(t,e) = \lambda^g(e)$ for $t \in (t_1,t_2)$ (hence for $t \in [t_1,t_2]$, by continuity) and $e \in \mathcal{E}$. 
    
    Recall that this analysis is conducted for the case where $t_1 \in \mathcal{T}^\times$ and $(h_0)^{-1}(\bar{r},\cdot)$ is strictly smaller than $(h_1)^{-1}(\bar{r},\cdot)$ over $(t_1,t_2)$. The other three cases reach the same conclusion that $\tilde{\lambda}^g(t,e)$ is equal to $\lambda^g(e)$ over $[t_1,t_2] \times \mathcal{E}$ through symmetric arguments. More specifically, we can switch $h_1$ and $h_0$ in defining $\pi$ so that the sequence $\{ \tilde{t}_k \}$ tends to a point in $\mathcal{T}^\times$.

    \bigskip
    \noindent \textbf{Step 3.} 
    Repeat step 2 on each interval $(t_l,t_{l+1}),l=2,\cdots,L$. It follows that $\tilde{\lambda}^g(t,e) = \lambda^g(e)$ over $([t_0',t_0''] \cap [t_1',t_1'']) \times [0,1]$.

    \bigskip
    \noindent \textbf{Step 4.} 
    Pick any $t' \in \text{int}([t_0',t_0''] \cup [t_1',t_1'']) \setminus [t_1,t_L]$ (if this set is nonempty). There are four cases to consider, depending on whether $t' \in [t_0',t_0'']$ or $t' \in [t_1',t_1'']$ and whether $t_0'' < t_1''$ or $t_0'' > t_1''$. Without loss of generality, assume that $t' \in [t_0',t_0'']$ and $t_0'' < t_1''$. The other three cases can be dealt with symmetric arguments. In this case, $t_1' < t_0''$ because $[t_0',t_0''] \cap [t_1',t_1'']$ is a non-degenerate interval. Denote $t'' \in [t_1',t_1'']$ such that $(h_0)^{-1}(\bar{r},t) = (h_1)^{-1}(\bar{r},t'')$. It must be the case that $t'' \in [t_1',t_0'']$. If that is not the case, then $t'' > t_0''$. Then for any $t \in [t_1',t_0'']$,
    \begin{align*}
        (h_0)^{-1}(\bar{r},t) > (h_0)^{-1}(\bar{r},t') = (h_1)^{-1}(\bar{r},t'') > (h_1)^{-1}(\bar{r},t), t \in [t_0',t_0''] \cap [t_1',t_1''],
    \end{align*}
    by the strict monotonicity of $(h_0)^{-1}(\bar{r},\cdot)$ and $(h_1)^{-1}(\bar{r},\cdot)$. However, this contradicts the assumption that $\mathcal{T}^\times$ is nonempty. Then by (\ref{eqn:lambda-g-invariant}), 
    \begin{align*}
        \tilde{\lambda}^g(t',e) = \tilde{\lambda}^g(t'',e) = \lambda^g(e).
    \end{align*}

    \bigskip
    \noindent \textbf{Step 5.}
    By the definition of $\tilde{\lambda}^g$ in (\ref{eqn:Ig-defn}), we now have
    \begin{align*}
        g^*(t,\bar{r},e) = g(t,\bar{r},\lambda^g(e)), \text{ for } t \in [t_0',t_0''] \cup [t_1',t_1''] , e \in \mathcal{E}.
    \end{align*} 
    By the properties of $\tilde{\lambda}^g$, we know $\lambda^g$ is continuous and strictly increasing. The above statement holds for any $g \in \mathcal{G}$ that satisfies Equation (\ref{eqn:refutable-implication}). 

\end{proof}

\begin{proof} [Proof of Corollary \ref{cor:CASF-identification}]

    Based on the definition of $F_{\varepsilon | U,R}^{g^{\circ}}$ in (\ref{eqn:def-F-varepsilon-g}), Theorem \ref{thm:id-rd-cutoff} and Lemma \ref{lm:local-control}, we have
    \begin{align*}
        F^{g^{\circ}}_{\varepsilon | U,R}(\lambda^g(e) | u,\bar{r}) & = F^-_{Y | T,R} ({g^{\circ}}(h_0(\bar{r},u),\bar{r},\lambda^{g^{\circ}}(e)) | h_0(\bar{r},u),\bar{r}) \\
        & = F^-_{Y | T,R} ({g^{\circ}}(h_0(\bar{r},u),\bar{r},e) | h_0(\bar{r},u),\bar{r}) \\
        & = F_{\varepsilon | U,R}(e | u,\bar{r}).
    \end{align*}
    The second claim follows from a change of variable.

\end{proof}

\begin{proof} [Proof of Corollary \ref{cor:gamma-star-identification}]
    By construction, $\big\lVert D_{\gamma,h^*} \big\rVert_{w} \geq 0$ for any $\gamma \in \Gamma$. By Lemma \ref{lm:local-control}, we have $\big\lVert D_{\gamma^*,h^*} \big\rVert_{w} = 0$. We want to show that $\gamma^*$ is the unique zero. Since $w>0$, we have 
    \begin{align*}
        \norm{ D_{\gamma,h^*} }_w = 0 & \implies D_{\gamma,h^*}(e,u) = 0, \text{ for all } e \in \mathcal{E}, u \in [0,1] \\
        & \implies \text{ Condition (\ref{eqn:refutable-implication}) is satisfied by } g_\gamma(\cdot,\bar{r},\cdot),
    \end{align*}
    where the second line follows by taking the partial derivative with respect to $u$ on both sides. 
    By Theorem \ref{thm:id-rd-cutoff}, this implies that $g_\gamma(\cdot,\bar{r},\cdot) = g_{\gamma'}(\cdot,\bar{r},\lambda(\cdot))$. By Assumption \ref{ass:normalization-Gamma}, it must be that $\gamma = \gamma^*$. Therefore, $\gamma^*$ is the unique minimizer of $\norm{ D_{\gamma,h^*} }_w$.
\end{proof}

\section{Proof of estimation results} \label{sec:proof-est}

This section proceeds as follows. Section \ref{ssec:proof-thm2} provides the proofs of Theorem \ref{thm:estimation} and Proposition \ref{prop:fs-quantile}. Section \ref{ssec:ept} introduces the empirical process theory and presents the lemmas on the uniform convergence results used in Section \ref{ssec:proof-thm2}. Section \ref{ssec:cov-est} discusses the consistent estimation of the asymptotic covariance matrix.

\subsection{Proofs of Theorem \ref{thm:estimation} and Proposition \ref{prop:fs-quantile}} \label{ssec:proof-thm2}
\begin{proof} [Proof of Theorem \ref{thm:estimation}]
In this proof, the functions $h_1(r,u)$, $h_0(r,u)$, and $g(t,r,e)$ are only evaluated at $r = 
\bar{r}$. For simplicity, we omit this argument $\bar{r}$ throughout.
The proof proceeds with seven steps:
\begin{itemize}
    \item \hyperlink{thm2pf1}{Step 1} contains preliminary results on the LLR estimator of the condition distribution $Y|T,R$.
    \item \hyperlink{thm2pf2}{Step 2} derives the consistency of $\hat{\gamma}$.
    \item \hyperlink{thm2pf3}{Step 3} derives an initial estimate of the convergence rate of $\hat{\gamma}$.
    \item \hyperlink{thm2pf4}{Step 4} proves a stochastic equicontinuity condition on the criterion function.
    \item \hyperlink{thm2pf5}{Step 5} presents a linear approximation of the criterion function.
    \item \hyperlink{thm2pf6}{Step 6} shows the asymptotic normality of the minimizer of the linearized criterion function.
    \item \hyperlink{thm2pf7}{Step 7} derives the asymptotic normal distribution of $\hat{\gamma}$.
\end{itemize}

\noindent \hypertarget{thm2pf1}{\textbf{Step 1.}} (Preliminary results on LLR.) Let $ X_i(t) = (1,(T_i-t)/b_1,(R_i-\bar{r})/b_1)'$ denote the vector containing the regressors in the LLR. For $\bm{x} = (1,x_1,x_2)$, let
\begin{align*}
    k_0(\bm{x}) = k_T(x_1) k_R(x_2) \mathbf{1}\{ x_2 < 0 \}, \\
    k_1(\bm{x}) = k_T(x_1) k_R(x_2) \mathbf{1}\{ x_2 \geq 0 \}.
\end{align*}
The kernel weights in the LLR can be written as
\begin{align*}
    k_0(X_i(t)) = k_T\left( (T_i - t)/b_1 \right) k_R\left( (R_i - \bar{r})/b_1 \right) \mathbf{1}\{R_i < \bar{r}\}, \\
    k_1(X_i(t)) = k_T\left( (T_i - t)/b_1 \right) k_R\left( (R_i - \bar{r})/b_1 \right) \mathbf{1}\{R_i \geq \bar{r}\}.
\end{align*}
From \cite{xie2021uniform}, we have the following uniform asymptotic linear representation for the LLR estimator of the conditional distribution functions:
\begin{align*}
	\hat{F}^-_{Y|T,R}(y|t,\bar{r}) - F^-_{Y|T,R}(y|t,\bar{r}) & = b_1^2 \mu_0(y,t) + \iota' \Xi_0(t)^{-1} \frac{1}{nb_1^2} \sum_{i=1}^n s_0(Y_i,T_i,R_i,y,t) \\
    & \quad + O_p \left( b_1^3 + \frac{|\log b_1|}{nb_1^2} \right), 
\end{align*}
uniformly over $y \in \mathbb{R},t \in [t_0',t_0'']$, and 
\begin{align*}
    \hat{F}^+_{Y|T,R}(y|t,\bar{r}) - F^+_{Y|T,R}(y|t,\bar{r}) & = b_1^2 \mu_1(y,t) +  \iota' \Xi_1(t)^{-1} \frac{1}{nb_1^2} \sum_{i=1}^n s_1(Y_i,T_i,R_i,y,t) \\
    & \quad + O_p \left( b_1^3 + \frac{|\log b_1|}{nb_1^2} \right),
\end{align*}
uniformly over $y \in \mathbb{R},t \in [t_1',t_1'']$. In the above expressions, $\iota = (1,0,\cdots,0)$. The functions $\mu_0(y,t)$ and $\mu_1(y,t)$ are defined by
\begin{align*}
    \mu_0(y,t) & = \frac{b_1^2}{2}  \iota' \Omega_0(t)^{-1} \frac{\partial^2 }{\partial t^2} F^-_{Y|T,R}(y|t,\bar{r}) \int \bm{x} x_1^2 k_0(\bm{x}) \mathbf{1}\{ t+ b_1x_1 \in [t_0',t_0''] \} dx_1dx_2  \\
    & \quad + \frac{b_1^2}{2}  \iota' \Omega_0(t)^{-1} \frac{\partial^2 }{\partial r^2} F^-_{Y|T,R}(y|t,\bar{r}) \int \bm{x} x_2^2 k_0(\bm{x}) \mathbf{1}\{ t+ b_1x_1 \in [t_0',t_0''] \} dx_1dx_2  \\
    & \quad + b_1^2 \iota' \Omega_0(t)^{-1} \frac{\partial^2 }{\partial t \partial r} F^-_{Y|T,R}(y|t,\bar{r}) \int \bm{x} x_1x_2 k_0(\bm{x}) \mathbf{1}\{ t+ b_1x_1 \in [t_0',t_0''] \} dx_1dx_2  \\
    & \quad + \frac{b_2^2}{2} \Omega_0(t)^{-1} \frac{\partial^2 }{\partial y^2} F^-_{Y|T,R}(y|t,\bar{r}) \int \bm{x} k_0(\bm{x}) \mathbf{1}\{ t+ b_1x_1 \in [t_0',t_0''] \} dx_1dx_2, 
\end{align*}
and
\begin{align*}
    \mu_1(y,t) & = \frac{b_1^2}{2}  \iota' \Omega_1(t)^{-1} \frac{\partial^2 }{\partial t^2} F^+_{Y|T,R}(y|t,\bar{r}) \int \bm{x} x_1^2 k_1(\bm{x}) \mathbf{1}\{ t+ b_1x_1 \in [t_1',t_1''] \} dx_1dx_2  \\
    & \quad + \frac{b_1^2}{2}  \iota' \Omega_1(t)^{-1} \frac{\partial^2 }{\partial r^2} F^+_{Y|T,R}(y|t,\bar{r}) \int \bm{x} x_2^2 k_1(\bm{x}) \mathbf{1}\{ t+ b_1x_1 \in [t_1',t_1''] \} dx_1dx_2  \\
    & \quad + b_1^2 \iota' \Omega_1(t)^{-1} \frac{\partial^2 }{\partial t \partial r} F^+_{Y|T,R}(y|t,\bar{r}) \int \bm{x} x_1x_2 k_1(\bm{x}) \mathbf{1}\{ t+ b_1x_1 \in [t_1',t_1''] \} dx_1dx_2  \\
    & \quad + \frac{b_2^2}{2} \Omega_1(t)^{-1} \frac{\partial^2 }{\partial y^2} F^+_{Y|T,R}(y|t,\bar{r}) \int \bm{x} k_1(\bm{x}) \mathbf{1}\{ t+ b_1x_1 \in [t_1',t_1''] \} dx_1dx_2.
\end{align*}
The matrices $\Omega_0(t)$, $\Omega_1(t)$, $\Xi_0(t)$, and $\Xi_1(t)$ are defined by
\begin{align*}
    \Omega_0(t) & = \int \bm{x} \bm{x}' k_0(\bm{x}) \mathbf{1}\{ t+ b_1x_1 \in [t_0',t_0''] \} dx_1dx_2, \\
    \Omega_1(t) & = \int \bm{x} \bm{x}' k_1(\bm{x}) \mathbf{1}\{ t+ b_1x_1 \in [t_1',t_1''] \} dx_1dx_2,
\end{align*}
and
\begin{align*}
    \Xi_0(t) & = \int \bm{x} \bm{x}' k_0(\bm{x}) f_{T,R}^-(t+b_1x_1,\bar{r}+b_1x_2) dx_1dx_2, \\
    \Xi_1(t) & = \int \bm{x} \bm{x}' k_1(\bm{x}) f_{T,R}^-(t+b_1x_1,\bar{r}+b_1x_2) dx_1dx_2. 
\end{align*}
The terms $s_0$ and $s_1$ are defined by 
\begin{align*}
	s_0(Y_i,T_i,R_i,y,t) & = X_i(t) \tilde{K}_Y(Y_i,T_i,R_i;y,t) k_0(X_i(t)), \\
    s_1(Y_i,T_i,R_i,y,t) & = X_i(t) \tilde{K}_Y(Y_i,T_i,R_i;y,t) k_1(X_i(t)), \\
    \tilde{K}_Y(Y_i,T_i,R_i,y) & = K_Y \left( (y - Y_i)/b_1 \right) - \mathbb{E} \left[ K_Y \left( (y - Y_i)/b_1 \right) | T_i,R_i \right].
\end{align*}
Under Assumption \ref{ass:smoothness-distribution}(i) and (ii) and Assumption \ref{ass:kernels}(i), we can apply Lemma 1 in \cite{xie2021uniform}, which is a modification of Lemma 11 in \cite{fan2016multivariate}, and obtain that the eigenvalues of $\Omega_0(t)$, $\Omega_1(t)$, $\Xi_0(t)$ and $\Xi_1(t)$ are bounded and bounded away from zero for all values of $t$ and $b_1$. Consequently, the norm of these matrices and there inverses are bounded.

Notice that in \cite{xie2021uniform}, the remainder term from the bias expansion is $o(b_1^2)$ while in the above asymptotic linear representation, the corresponding term is $O(b_1^3)$. This is because we assume that $F^-_{Y|T,R}$ and $F^+_{Y|T,R}$ are three-times continuously differentiable (Assumption \ref{ass:smoothness-distribution}). Under this assumption, we can go through the same steps as in the proof of Theorem 1 in \cite{xie2021uniform} and show that the remainder from the bias expansion is $O(b_1^3)$. The details are omitted for brevity. The bias terms $\mu_0(y,t)$ and $\mu_1(y,t)$ are continuously differentiable under Assumption \ref{ass:smoothness-distribution}. The indicator functions inside the integral, for example, $\mathbf{1}\{ t+ b_1x_1 \in [t_0',t_0''] \}$, can be eliminated by explicitly indicating the lower and upper limits of the corresponding integral. The derivative can then be taken by using the Leibniz rule.

\bigskip
\noindent \hypertarget{thm2pf2}{\textbf{Step 2.}} (Consistency of $\hat{\gamma}$.) Because $w$ is positive and integrates to $1$, we have $\lVert\hat{D}_{\gamma,\hat{h}} - D_{\gamma,\hat{h}} \rVert_w$ $\leq$ $\lVert \hat{D}_{\gamma,\hat{h}} - D_{\gamma,\hat{h}} \rVert_\infty$. For any $\gamma \in \Gamma$, we can apply Fubini's theorem to the uniform asymptotic linear representation and obtain that
\begin{align*}
    \hat{D}_{\gamma,\hat{h}}(e,u) - D_{\gamma,\hat{h}}(e,u) = \RomNum{1} + \RomNum{2} + O_p \left( b_1^2 + \log n/(nb_1^2) \right)
\end{align*}
uniformly over $\gamma \in \Gamma, e \in \mathcal{E},$ and $u \in (0,1)$, where
\begin{align*}
    \RomNum{1} & = \frac{1}{nb_1^2} \sum_{i=1}^n \int_0^u \iota' \Xi_0(\hat{h}_0(v))^{-1}   s_0(Y_i,T_i,R_i;g_{\gamma}(\hat{h}_0(v),e),\hat{h}_0(v)) dv, \\
    \RomNum{2} &=   \frac{1}{nb_1^2} \sum_{i=1}^n \int_0^u \iota' \Xi_1(\hat{h}_1(v))^{-1}   s_1(Y_i,T_i,R_i;g_{\gamma}(\hat{h}_1(v),e),\hat{h}_1(v)) dv.
\end{align*}
By symmetry, we only need to study the term $\RomNum{1}$. Denote $\mathbf{1}_{\hat{h}_0} = \mathbf{1}\{ \hat{h}_0 \in \mathcal{H}_0(\mathcal{P}_0^n) \}$ as the indicator of whether $\hat{h}_0 \in \mathcal{H}_0(\mathcal{P}_0^n)$. Then $\RomNum{1} \leq \RomNum{1}.1 + \RomNum{1}.2$,
where 
\begin{align*}
    \RomNum{1}.1 & = \sup_{\gamma \in \Gamma, h_0 \in \mathcal{H}_0(\mathcal{P}_0^n), e \in \mathcal{E}, u \in (0,1)} \left| \frac{1}{nb_1^2} \sum_{i=1}^n \int_0^u \iota' \Xi_0(h_0(v))^{-1}   s_0(Y_i,T_i,R_i;g_{\gamma}(h_0(v),e),h_0(v)) dv \right|, \\
    \RomNum{1}.2 & = (1- \mathbf{1}_{\hat{h}_0}) \sup_{y \in \mathcal{Y}, t \in [t_0',t_0'']} \left| \frac{1}{nb_1^2} \sum_{i=1}^n \iota' \Xi_0(t)^{-1}   s_0(Y_i,T_i,R_i;y,t) \right|
\end{align*}
Define
\begin{align} \label{eqn:alpha-n}
    \tilde{\alpha}_n =  n^{-1/2 } b_1^{-7/12 - \bar{\epsilon}/5} \text{, and } \alpha_n = \left( b_1^2 + \sqrt{\log n / (nb_1^4)} \right) (b_1^2 + \tilde{\alpha}_n) ,
\end{align}
where $\bar{\epsilon}$ is defined in Assumption \ref{ass:bandwidth}.
In Lemma \ref{lm:uc1}, we show that $\RomNum{1}.1 = O_p(\tilde{\alpha}_n)$. Combining Assumption \ref{ass:h-tilde}(i) and Lemma \ref{lm:uc3}, we have $\RomNum{1}.2 = O_p(\tilde{\alpha}_n)$. Therefore,
\begin{align} \label{eqn:Dhat-uniform-rate}
    \sup_{\gamma \in \Gamma}\norm{\hat{D}_{\gamma,\hat{h}} - D_{\gamma,\hat{h}}}_w & O_p(b_1^2 + \tilde{\alpha}_n).
\end{align}
By the smoothness of $F^-_{Y,T,R}$ and $g_\gamma$, the following term is $O(\lVert \hat{h} - h^* \rVert_\infty)$:
\begin{align*}
    \sup_{\gamma \in \Gamma} \left| \int_0^u \left( F^-_{Y|T,R}( g_{\gamma}(\hat{h}_0(v),e) | h_0(v),\bar{r} ) - F^-_{Y|T,R}( g_{\gamma}(h^*_0(v),e) | h_0(v),\bar{r} ) \right) dv \right| 
\end{align*}
Therefore, we obtain that uniformly over $\gamma \in \Gamma$, 
\begin{align*}
    \norm{D_{\gamma,\hat{h}} - D_{\gamma,h^*} }_w = O(\lVert \hat{h} - h^*) \rVert_\infty = O_p\left( b_1^2 +\sqrt{\log n / (nb_1)} \right).
\end{align*}
By the triangle inequality, we have
\begin{align*}
    \norm{D_{\hat{\gamma},h^*}}_w & \leq \norm{D_{\hat{\gamma},h^*} - D_{\hat{\gamma},\hat{h}}}_w + \norm{\hat{D}_{\hat{\gamma},\hat{h}} - D_{\hat{\gamma},\hat{h}}}_w + \norm{\hat{D}_{\hat{\gamma},\hat{h}}}_w .
\end{align*}
By the definition of $\hat{\gamma}$ in (\ref{eqn:gamma-hat-def}), we have
\begin{align*}
    \norm{\hat{D}_{\hat{\gamma},\hat{h}}}_w & \leq \norm{\hat{D}_{\gamma^*,\hat{h}}}_w + o_p\left( \alpha_n \right) \leq \norm{\hat{D}_{\gamma^*,\hat{h}} - D_{\gamma^*,\hat{h}} }_w + \norm{D_{\gamma^*,\hat{h}} - D_{\gamma^*,h^*} }_w + o_p\left( \alpha_n \right).
\end{align*}
Combining the above two inequalities together, we obtain that
\begin{align} \label{eqn:D-hatgamma-hstar-rate}
    \norm{D_{\hat{\gamma},h^*}}_w & \leq 2 \sup_{\gamma \in \Gamma} \norm{D_{\gamma,\hat{h}} - D_{\gamma,h^*} }_w + 2 \sup_{\gamma \in \Gamma} \norm{\hat{D}_{\gamma,\hat{h}} - D_{\gamma,\hat{h}} }_w + o_p\left(\alpha_n \right) = O_p\left( b_1^2 + \tilde{\alpha}_n \right).
\end{align}
In particular, the above quantity is $o_p(1)$.
Because $\Gamma$ is compact, and $\norm{D_{\cdot,h^*}}_w$ is continuous and has a unique minimizer $\gamma^*$ (Corollary \ref{cor:gamma-star-identification}), for any $\varepsilon>0$ there exists $\delta>0$ such that $\norm{\gamma - \gamma^*}_2 > \epsilon$ $\implies$ $\norm{D_{\gamma,h^*}}_w > \delta$. Therefore, $\mathbb{P}(\norm{\hat{\gamma} - \gamma^*}_2 > \epsilon )$ $\leq$ $\mathbb{P}(\norm{D_{\gamma,h^*}}_w > \delta) = o(1) $. This proves that $\hat{\gamma}$ is a consistent estimator.

\bigskip
\noindent \hypertarget{thm2pf3}{\textbf{Step 3.}} (Convergence rate of $\hat{\gamma}$.)
Since $\hat{\gamma}$ is consistent, we can Taylor expand $D_{\hat{\gamma},h^*}$ around $\gamma^*$. Together with the reverse triangle inequality and the fact that $D_{\gamma^*,h^*} = 0$, the expansion gives that
         \begin{align*}
             \big \lVert D_{\hat{\gamma},h^*} \big \rVert_{w} & = \big \lVert \nabla_\gamma D_{\gamma^*,h^*} (\hat{\gamma} - \gamma^*) + (\hat{\gamma} - \gamma^*)' \nabla^2_\gamma D_{\tilde{\gamma},h^*} (\hat{\gamma} - \gamma^*) \big \rVert_{w} \\
             & \geq \big \lVert \nabla_\gamma D_{\gamma^*,h^*} (\hat{\gamma} - \gamma^*) \big \rVert_{w} - \lVert \hat{\gamma} - \gamma^* \rVert_2^2 \int_0^1 \int_\mathcal{E} \big \lVert \nabla^2_\gamma D_{\tilde{\gamma},h^*}(u,e) \big \rVert_2 w(e,u) dedu \\
             & \geq \big \lVert \nabla_\gamma D_{\gamma^*,h^*} (\hat{\gamma} - \gamma^*) \big \rVert_{w} + O\big(\lVert \hat{\gamma} - \gamma^* \rVert^2_2 \big),
         \end{align*} 
         where $\tilde{\gamma}$ is some point on the line segment connecting $\hat{\gamma}$ and $\gamma^*$ and the last line follows from Assumption \ref{ass:complexity-parametric-model}(iii) that $\big \lVert \nabla^2_\gamma D_{\tilde{\gamma},h^*}(u,e) \big \rVert_2$ is bounded. We claim that there exists a universal constant $C>0$ such that 
         \begin{align} \label{eqn:linearly-independent}
            \lVert \nabla_\gamma D_{\gamma^*,h^*} \zeta \big \rVert_{w} \geq C \lVert \zeta \rVert, \text{ for all } \zeta \in \mathbb{R}^{d_\Gamma}.
         \end{align}
         If this claim is true, then by Equation (\ref{eqn:D-hatgamma-hstar-rate}), we obtain a bound on the convergence rate of $\hat{\gamma}$:
         \begin{align*}
            \lVert \hat{\gamma} - \gamma^* \rVert_2=O_p\left( b_1^2 + \tilde{\alpha}_n \right).
         \end{align*}

         The remaining part of this step is devoted to the proof of (\ref{eqn:linearly-independent}). Suppose that claim is false, then for each integer $k \geq 1$, there exists $\zeta_k \in \mathbb{R}^{d_\Gamma}$ such that $\lVert \nabla_\gamma D_{\gamma^*,h^*} \zeta_k \big \rVert_{w} < 1/k \lVert \zeta_k \rVert$. Without loss of generality, we can assume $\lVert \zeta_k \rVert = 1$ (or simply redefine the sequence as $\zeta_k/\lVert \zeta_k \rVert$). By the Bolzano–Weierstrass theorem, the sequence $\{\zeta_k\}$ has a convergent subsequence. Without loss of generality, we assume $\{\zeta_k\}$ itself is convergent with the limit denoted by $\zeta_\infty$. Then it must be the case that $\lVert \nabla_\gamma D_{\gamma^*,h^*} \zeta_\infty \big \rVert_{w} = 0$. Since $\nabla_\gamma D_{\gamma^*,h^*}$ is a continuous function, the previous equation implies that $\nabla_\gamma D_{\gamma^*,h^*} \zeta_\infty = 0$. This violates Assumption \ref{ass:complexity-parametric-model}(iv) that $\nabla_\gamma D_{\gamma^*,h^*}$ is a vector of linearly independent functions.

    \bigskip
    \noindent \hypertarget{thm2pf4}{\textbf{Step 4.}} (Stochastic equicontinuity of the criterion function.) Let $\gamma_n \overset{p}{\rightarrow} \gamma^*$ be such that $\norm{\gamma_n - \gamma^*} = O_p(b_1^2 + \tilde{\alpha}_n)$. We want to find the asymptotic order of the term $\lVert \hat{D}_{\gamma_n,\hat{h}} - D_{\gamma_n,\hat{h}} - \hat{D}_{\gamma^*,h^*} \rVert_w$, which is bounded by
    \begin{align*}
        \sup_{e \in \mathcal{E}, u \in (0,1)} \left| \hat{D}_{\gamma_n,\hat{h}}(e,u) - D_{\gamma_n,\hat{h}}(e,u) - ( \hat{D}_{\gamma^*,h^*}(e,u) - D_{\gamma^*,h^*}(e,u)) \right| \leq \RomNum{1} + \RomNum{2},
    \end{align*}
    where
    \begin{align*}
        \RomNum{1} & = \sup_{e \in \mathcal{E}, u \in (0,1)} \Big| \hat{F}^-_{Y|T,R}( g_{\gamma_n}(\hat{h}_0(v),e) | \hat{h}_0(v),\bar{r} ) - F^-_{Y|T,R}( g_{\gamma_n}(\hat{h}_0(v),e) | \hat{h}_0(v),\bar{r} ) \\
        & \quad - \left( \hat{F}^-_{Y|T,R}( g_{\gamma^*}(h^*_0(v),e) | h^*_0(v),\bar{r} ) - F^-_{Y|T,R}( g_{\gamma^*}(h^*_0(v),e) | h^*_0(v),\bar{r} ) \right) \Big|, 
    \end{align*}
    and
    \begin{align*}
        \RomNum{2} & = \sup_{e \in \mathcal{E}, u \in (0,1)} \Big| \hat{F}^+_{Y|T,R}( g_{\gamma_n}(\hat{h}_1(v),e) | \hat{h}_1(v),\bar{r} ) - F^-_{Y|T,R}( g_{\gamma_n}(\hat{h}_1(v),e) | \hat{h}_1(v),\bar{r} ) \\
        & \quad - \left( \hat{F}^+_{Y|T,R}( g_{\gamma^*}(h^*_1(v),e) | h^*_1(v),\bar{r} ) - F^+_{Y|T,R}( g_{\gamma^*}(h^*_1(v),e) | h^*_1(v),\bar{r} ) \right) \Big|.
    \end{align*}
    By symmetry, we only need to study the term $\RomNum{1}$. The uniform asymptotic linear representation of the LLR estimators gives a bias-variance decomposition that 
    \begin{align*}
        \RomNum{1} \leq \RomNum{1}.1 + \RomNum{1}.2 + O_p \left( b_1^3 + |\log b_1|/nb_1^2 \right),
    \end{align*}
    where
    \begin{align*}
        \RomNum{1}.1 & = b_1^2 \left( \mu_0(g_{\gamma_n}(\hat{h}_0(v),e),\hat{h}_0(v)) - \mu_0(g_{\gamma^*}(h^*_0(v),e),h^*_0(v)) \right), \\
        \RomNum{1}.2 & = \sup_{e \in \mathcal{E}, u \in [0,1]} \Big| \frac{1}{nb_1^2} \sum_{i=1}^n  \iota' \Xi_0(\hat{h}_0(v))^{-1} s_0(Y_i,T_i,R_i;g_{\gamma_n}(\hat{h}_0(v),e),\hat{h}_0(v))  \\
        & \quad - \iota' \Xi_0(h^*_0(v))^{-1} s_0(Y_i,T_i,R_i;g_{\gamma^*}(h^*_0(v),e),h^*_0(v)) \Big|.
    \end{align*}
    By the smoothness of $\mu_0$ (Step 1) and $g_\gamma$ (Assumption \ref{ass:complexity-parametric-model}), we can bound the term $\RomNum{1}.1$ by 
    \begin{align*}
        \RomNum{1}.1 \leq Cb_1^2 (\norm{\gamma_n - \gamma^*}_2 + \lVert\hat{h} - h^*\rVert_\infty) = O_p(b_1^4 + b_1^2 \tilde{\alpha}_n).
    \end{align*}
    For the term $\RomNum{1}.2$, consider the decomposition that $\RomNum{1}.2 \leq \RomNum{1}.2.1 + \RomNum{1}.2.2$, where 
    \begin{align*}
        \RomNum{1}.2.1 & = \sup_{e \in \mathcal{E}, u \in [0,1]} \Big| \iota' \Xi_0(\hat{h}_0(v))^{-1} \frac{1}{nb_1^2} \sum_{i=1}^n s_0(Y_i,T_i,R_i;g_{\gamma_n}(\hat{h}_0(v),e),\hat{h}_0(v))  \\
        & \quad - s_0(Y_i,T_i,R_i;g_{\gamma^*}(h^*_0(v),e),h^*_0(v)) \Big|, \\
        \RomNum{1}.2.2 & = \sup_{e \in \mathcal{E}, u \in [0,1]} \Big| \iota' \Big( \Xi_0(\hat{h}_0(v))^{-1} - \Xi_0(h^*_0(v))^{-1} \Big) \frac{1}{nb_1^2} \sum_{i=1}^n s_0(Y_i,T_i,R_i;g_{\gamma^*}(h^*_0(v),e),h^*_0(v)) \Big| .
    \end{align*}
    As mentioned in Step 1, we know that $\norm{\Xi(t)^{-1}}_2$ is bounded for $t \in [t_0',t_0'']$ by Lemma 1 in \cite{xie2021uniform}. Applying the mean value theorem, we obtain that
    \begin{align*}
        \RomNum{1}.2.1 \leq C(\RomNum{1}.2.1.1 + \RomNum{1}.2.1.2 + \RomNum{1}.2.1.3)
    \end{align*}
    where
    \begin{align*}
        \RomNum{1}.2.1.1 & = \sup_{y \in \mathbb{R}, t \in [t_0',t_0'']} \norm{ \frac{1}{nb_1^2} \sum_{i=1}^n \frac{\partial }{\partial y} s_0(Y_i,T_i,R_i;y,t)}_2 \norm{ \nabla_\gamma g_{\gamma^*} }_\infty \norm{\gamma_n - \gamma^*}_2, \\
        \RomNum{1}.2.1.2 & = \sup_{y \in \mathbb{R}, t \in [t_0',t_0'']} \norm{ \frac{1}{nb_1^2} \sum_{i=1}^n \frac{\partial }{\partial y} s_0(Y_i,T_i,R_i;y,t)}_2 \norm{ \frac{\partial }{\partial T} g_{\gamma^*} }_\infty \lVert\hat{h} - h^*\rVert_\infty, \\
        \RomNum{1}.2.1.3 & = \sup_{y \in \mathbb{R}, t \in [t_0',t_0'']} \norm{ \frac{1}{nb_1^2} \sum_{i=1}^n \frac{\partial }{\partial t} s_0(Y_i,T_i,R_i;y,t)}_2 \lVert\hat{h} - h^*\rVert_\infty.
    \end{align*}
    In Lemma \ref{lm:uc2}, we show that the following two terms are of order $O_p\left( \sqrt{\log n /(nb_1^4)} \right)$:
    \begin{align*}
        \sup_{y \in \mathbb{R}, t \in [t_0',t_0'']} \left| \frac{1}{nb_1^2} \sum_{i=1}^n \frac{\partial }{\partial y} s(Y_i,T_i,R_i;y,t)\right|, \sup_{y \in \mathbb{R}, t \in [t_0',t_0'']} \left| \frac{1}{nb_1^2} \sum_{i=1}^n \frac{\partial }{\partial t} s(Y_i,T_i,R_i;y,t)\right|.
    \end{align*}
    Because $\norm{ \nabla_\gamma g_{\gamma^*} }_\infty$ and $\norm{ \partial g_{\gamma^*}  /\partial T }_\infty$ are finite, we know that 
    \begin{align*}
        \RomNum{1}.2.1 = O_p\left( \sqrt{\log n /(nb_1^4)}\right) \times(\norm{\gamma_n - \gamma^*}_2 + \lVert\hat{h} - h^*\rVert_\infty) = O_p\left( \sqrt{\log n /(nb_1^4)} \tilde{\alpha}_n \right).
    \end{align*}
    Applying the mean value theorem to $\RomNum{1}.2.2$, we obtain that
    \begin{align*}
        \RomNum{1}.2.2 & \leq \sup_{t \in [t_0',t_0'']} \left| \iota' \frac{\partial }{\partial t} \Xi_0(t)^{-1}  \right| \sup_{y \in \mathbb{R}, t \in [t_0',t_0''] }\norm{ \frac{1}{nb_1^2} \sum_{i=1}^n s_0(Y_i,T_i,R_i;y,t) }_2 \lVert\hat{h} - h^*\rVert_\infty.
    \end{align*}
    In Lemma \ref{lm:uc3}, we show that
    \begin{align*}
        \sup_{y \in \mathbb{R}, t \in [t_0',t_0''] }\norm{  \sum_{i=1}^n s_0(Y_i,T_i,R_i;y,t)/(nb_1^2) }_2 = O_p\left( \sqrt{\log n /(nb_1^2)} \right) .
    \end{align*}
    Therefore, $\RomNum{1}.2.1$ asymptotically dominates $\RomNum{1}.2.2$. Hence, the term $\RomNum{1}$ is of the following order:
    \begin{align*}
        \RomNum{1} = O_p\left( \left(b_1^2 + \sqrt{\log n /(nb_1^4)} \right) \tilde{\alpha}_n \right)= O_p(\alpha_n) .
    \end{align*}
    Based on the same argument, the above asymptotic order also applies to the term $\RomNum{2}$. 
    Thus, we have the following stochastic equicontinuity result:
    \begin{align} \label{eqn:se}
        \lVert \hat{D}_{\gamma_n,\hat{h}} - D_{\gamma_n,\hat{h}} - \hat{D}_{\gamma^*,h^*} \rVert_w = O_p(\alpha_n).
    \end{align}

    \bigskip
    \noindent \hypertarget{thm2pf5}{\textbf{Step 5.}} (Linearization of the criterion function.)
         Let $\partial_h^{[\hat{h} - h^*]} D_{\gamma,h^*}(e,u)$ be the Fr\'echet derivative of $D_{\gamma,h}(e,u)$ with respect to $h$ at $h^*$, in the direction of $h-h^*$. That is,
    \begin{align*}
        \partial_h^{[\hat{h} - h^*]} D_{\gamma,h^*}(e,u) & = \int_0 ^u (\phi^-_\gamma(e,v) - \phi^+_\gamma(e,v)) (\hat{h}_0(v) - h^*_0(v)) dv ,
    \end{align*}
    where
    \begin{align*}
        \phi^-_\gamma(e,v) & = \frac{\partial}{\partial Y} F^-_{Y|T,R}( g_\gamma(h^*_0(v),e)|h^*_0(v),\bar{r}) \frac{\partial}{\partial T} g_\gamma(h^*_0(v),e) + \frac{\partial}{\partial T} F^-_{Y|T,R}( g_\gamma(h^*_0(v),e)|h^*_0(v),\bar{r}), \\
        \phi^+_\gamma(e,v) & = \frac{\partial}{\partial Y} F^+_{Y|T,R}( g_\gamma(h^*_1(v),e)|h^*_1(v),\bar{r}) \frac{\partial}{\partial T} g_\gamma(h^*_1(v),e) + \frac{\partial}{\partial T} F^+_{Y|T,R}( g_\gamma(h^*_1(v),e)|h^*_1(v),\bar{r}).
    \end{align*}
    It is straightforward to see that $\norm{\partial_h^{[\hat{h} - h^*]} D_{\gamma,h^*}}_\infty$ $= O(\lVert\hat{h} - h^*\rVert_\infty)$.
    Following the same steps as in Lemma 4 of \cite{TORGOVITSKY2017minimum}, we can show that $\lVert  D_{\gamma,\hat{h}} - D_{\gamma,h^*} - \partial_h^{[\hat{h} - h^*]} D_{\gamma,h^*} \rVert_w = O(\lVert\hat{h} - h^*\rVert_\infty^2)$, uniformly over $\gamma \in \Gamma$, and
    \begin{align*}
        \norm{ \partial_h^{[\hat{h} - h^*]} D_{\gamma_n,h^*} - \partial_h^{[\hat{h} - h^*]} D_{\gamma^*,h^*} }_w = O(\norm{\gamma_n - \gamma^*}_2 \lVert\hat{h} - h^*\rVert_\infty),
    \end{align*}
    for any sequence $\gamma_n \overset{p}{\rightarrow} \gamma^*$. Define 
    \begin{align*}
        \hat{L}_\gamma(e,u) = \hat{D}_{\gamma^*,h^*}(e,u) + \nabla_\gamma D_{\gamma^*,h^*}(e,u) (\gamma - \gamma^*) + \partial_h^{[\hat{h} - h^*]} D_{\gamma^*,h^*}(e,u),
    \end{align*}
    as a linear approximation of $\hat{D}_{\gamma,\hat{h}}(e,u)$ for $\gamma$ near $\gamma^*$. For any sequence $\norm{\gamma_n - \gamma^*}_2 = O_p(b_1^2 + \tilde{\alpha}_n )$, we have
    \begin{align} \label{eqn:Lgamma-rate}
        \norm{ \hat{L}_{\gamma_n} }_w & \leq \norm{ \hat{D}_{\gamma^*,h^*} - D_{\gamma^*,h^*}}_w + \norm{ \nabla_\gamma D_{\gamma^*,h^*}  }_w \norm{\gamma_n - \gamma^*}_2 + O( \lVert\hat{h} - h^*\rVert_\infty) = O_p(b_1^2 + \tilde{\alpha}_n ) ,
    \end{align}
    where the asymptotic order of the first term on the RHS is derived in (\ref{eqn:Dhat-uniform-rate}). 
    We want to bound the approximation error from the linearization of the criterion function. By adding and subtracting terms, we obtain that
    \begin{align} \label{eqn:approx-error-linearization1}
        & \norm{ \hat{L}_{\gamma_n} - \hat{D}_{\gamma_n,\hat{h}} }_w \nonumber \\
         \leq & \norm{ \hat{D}_{\gamma^*,h^*} - \hat{D}_{\gamma_n,\hat{h}} - (D_{\gamma^*,h^*} - D_{\gamma_n,\hat{h}}) } + \norm{D_{\gamma^*,h^*} + \nabla_\gamma D_{\gamma^*,h^*}(\gamma - \gamma^*) - D_{\gamma_n,h^*}}_w \nonumber \\
        \quad & + \norm{D_{\gamma_n,h^*} + \partial_h^{[\hat{h} - h^*]} D_{\gamma_n,h^*} - D_{\gamma_n,\hat{h}}}_w + \norm{ \partial_h^{[\hat{h} - h^*]} D_{\gamma^*,h^*} - \partial_h^{[\hat{h} - h^*]} D_{\gamma_n,h^*} }_w .
    \end{align}
    The four terms on the RHS of the above inequality can be analyzed as the following. The order of the first term on the RHS of (\ref{eqn:approx-error-linearization1}) is given by (\ref{eqn:se}) in the previous step.
    The second term is $O(\norm{\gamma_n - \gamma^*}_2)$ by the smoothness of $D_{\gamma,h^*}$. The third term is bounded by 
    \begin{align*}
        \sup_{\gamma \in \Gamma} \norm{  D_{\gamma,\hat{h}} - D_{\gamma,h^*} - \partial_h^{[\hat{h} - h^*]} D_{\gamma,h^*} }_w = O(\lVert\hat{h} - h^*\rVert_\infty^2).
    \end{align*}
    The fourth term is $O(\norm{\gamma_n - \gamma^*}_2 \lVert\hat{h} - h^*\rVert_\infty)$. Therefore, the leading term on the RHS of (\ref{eqn:approx-error-linearization1}) is the first term, and hence the approximation error from the linearization of the criterion function is of the following order:
    \begin{align} \label{eqn:approx-error-linearization2}
        \norm{ \hat{L}_{\gamma_n} - \hat{D}_{\gamma_n,\hat{h}} }_w = O_p(\alpha_n) .
    \end{align}

    \bigskip
    \noindent \hypertarget{thm2pf6}{\textbf{Step 6.}} (Minimizer of the linearized criterion function.)
    Define $\tilde{\gamma}$ as the minimizer of $\norm{\hat{L}_\gamma}_w$. The first-order condition gives that
    \begin{align*}
        \Delta(\tilde{\gamma} - \gamma^*) = \int \nabla_\gamma D_{\gamma^*,h^*}(e,u) \left( \hat{D}_{\gamma^*,h^*}(e,u) + \partial_h^{[\hat{h} - h^*]} D_{\gamma^*,h^*}(e,u) \right) w(e,u) dedu ,
    \end{align*}
    where 
    \begin{align} \label{eqn:nabla-D}
        \Delta = \int \nabla_\gamma D_{\gamma^*,h^*}(e,u) \nabla_\gamma D_{\gamma^*,h^*}(e,u)' w(e,u) dedu.
    \end{align}
    By the uniform asymptotic linear representation of the LLR estimators and $\hat{h}$, we can write
    \begin{align*}
        & \int \nabla_\gamma D_{\gamma^*,h^*}(e,u) \left( \hat{D}_{\gamma^*,h^*}(e,u) + \partial_h^{[\hat{h} - h^*]} D_{\gamma^*,h^*}(e,u) \right) w(e,u) dedu \\
        = & \frac{1}{nb_1} \sum_{i=1}^n (\zeta_-^{\textit{DF}}(Y_i,T_i,R_i) + \zeta_-^{\textit{Q}}(Y_i,T_i,R_i)) - \frac{1}{nb_1} \sum_{i=1}^n (\zeta_+^{\textit{DF}}(Y_i,T_i,R_i) + \zeta_+^{\textit{Q}}(Y_i,T_i,R_i)) \\
        & + b_1^2(B_- - B_+) + O_p(b_1^3) + o_p(1/\sqrt{nb_1}).
    \end{align*}
    The terms $B_-$ and $B_+$ are deterministic bias terms defined by
    \begin{align}
        B_- & = \int  w(e,u) \nabla_\gamma D_{\gamma^*,h^*}(e,u) \left( \int_0^u \mu_0(g_{\gamma^*}(h_0^*(v),e),h_0^*(v)) + \phi^-_{\gamma^*}(e,v)\nu_0(v) \right) dedu, \label{eqn:bias-}\\
        B_+ & = \int  w(e,u) \nabla_\gamma D_{\gamma^*,h^*}(e,u) \left( \int_0^u \mu_1(g_{\gamma^*}(h_1^*(v),e),h_1^*(v)) + \phi^+_{\gamma^*}(e,v)\nu_1(v) \right) dedu. \label{eqn:bias+}
    \end{align}
    The functions $\zeta_-^{\textit{DF}}$ and $\zeta_+^{\textit{DF}}$ represent stochastic terms from the LLR estimation of the conditional distribution $F_{Y|T,R}$. They are defined by 
    \begin{align*}
        &\zeta_-^{\textit{DF}}(Y,T,R) \\
        = & \frac{1}{b_1} \int_{\mathcal{E}} \int_0^1 w(e,u) \nabla_\gamma D_{\gamma^*,h^*}(e,u) \int_0^u \iota' \Xi_0(h^*_0(v))^{-1} s_0(Y,T,R,g_{\gamma^*}(h^*_0(v),e),h^*_0(v))  dv dedu\\
        = & \int_{\mathcal{E}} \int_0^1 w(e,u) \nabla_\gamma  D_{\gamma^*,h^*}(e,u) \int_{(T-h^*_0(u))/b_1}^{(T-t_0')/b_1} \iota' \Xi_0(T + b_1 v)^{-1} (1,v,(R-\bar{r})/b_1)' \\
        & \times \tilde{K}_Y(Y,T,R;g_{\gamma}(T + b_1 v,e)) k_T(v) k_R^-((R-\bar{r})/b_1) ((h_0^*)^{-1})'(T+b_1v) dv, 
    \end{align*}
    and
    \begin{align*}
        &\zeta_+^{\textit{DF}}(Y,T,R) \\
        = & \frac{1}{b_1} \int_{\mathcal{E}} \int_0^1 w(e,u) \nabla_\gamma D_{\gamma^*,h^*}(e,u) \int_0^u \iota' \Xi_1(h^*_1(v))^{-1} s_1(Y_i,T_i,R_i,g_{\gamma^*}(h^*_1(v),e),h^*_1(v))  dv dedu\\
        = & \int_{\mathcal{E}} \int_0^1 w(e,u) \nabla_\gamma D_{\gamma^*,h^*}(e,u) \int_{(T-h_1(u))/b_1}^{(T-t_1')/b_1} \iota' \Xi_1(T + b_1 v)^{-1} (1,v,(r-\bar{r})/b_1)' \\
        & \times \tilde{K}_Y(Y,T,R;g_{\gamma^*}(T + b_1 v,e)) k_T(v) k_R^+((R-\bar{r})/b_1) ((h_0^*)^{-1})'(T+b_1v) dv.
    \end{align*}
    In the above notations, $k_R^-(x) = k_R(x) \mathbb{1}\{x < 0\}$ and $k_R^+(x) = k_R(x) \mathbb{1}\{x \geq 0\}$. Similarly define $k_{Q,0}^-(x) = k_{Q,0}(x)\mathbb{1}\{x < 0\}$ and $k_{Q,1}^+(x) = k_{Q,1}(x)\mathbb{1}\{x \geq 0\}$.
    The functions $\zeta_-^{\textit{Q}}$ and $\zeta_+^{\textit{Q}}$ represent stochastic terms from the nonparametric estimation of the conditional quantile function $h^*$. They are defined by 
    \begin{align*}
        \zeta_-^{\textit{Q}}(T,R) & = \int_{\mathcal{E}} \int_0^1 w(e,u) \nabla_\gamma D_{\gamma^*,h^*}(e,u) \int_0^u \phi^-_{\gamma^*}(e,v) q_{0}(T,R;v) k_{Q,0}^-\left( (R - \bar{r})/b_1 \right)  dv  dedu, \\
        \zeta_+^{\textit{Q}}(T,R) & = \int_{\mathcal{E}} \int_0^1 w(e,u) \nabla_\gamma D_{\gamma^*,h^*}(e,u) \int_0^u \phi^+_{\gamma^*}(e,v) q_{1}(T,R;v) k_{Q,1}^+\left( (R - \bar{r})/b_1 \right)  dv  dedu.
    \end{align*}
    By Fubini's theorem, we have 
    \begin{align*}
        \mathbb{E}[\zeta_\pm^{\textit{DF}}(Y,T,R) | T,R] = \mathbb{E} [\zeta_\pm^{\textit{Q}}(T,R)) | T,R] = 0.
    \end{align*}
    Notice that $((h_0^*)^{-1})'(\cdot) = f_{T|R}^-(\cdot | \bar{r})$ and $((h_1^*)^{-1})'(\cdot) = f_{T|R}^+(\cdot | \bar{r})$. The variance matrix can be computed as follows, where to save space, we use the notation of squaring a vector to mean the tensor product of that vector with itself. 
    \begin{align*}
        & \mathbb{E} \left[ (\zeta_-^{\textit{DF}}(Y,T,R) + \zeta_-^{\textit{Q}}(T,R)) \otimes (\zeta_-^{\textit{DF}}(Y,T,R) + \zeta_-^{\textit{Q}}(T,R)) \right] \\
        = & \int_{\mathcal{Y} \times [t_0',t_0''] \times [r_0,\bar{r}]} \Big( \int_{\mathcal{E}} \int_0^1  w(e,u) \nabla_\gamma D_{\gamma^*,h^*}(e,u) \int_{(t-h_0(u))/b_1}^{(t-t_0')/b_1} \iota' \Xi_0(t + b_1 v)^{-1} (1,v,(r-\bar{r})/b_1)' \\
        & \times \tilde{K}_Y(y,t,r;g_{\gamma^*}(t + b_1 v,e)) k_T(v) k_R^-((r-\bar{r})/b_1)f_{T|R}^-(t+b_1v | \bar{r}) dv, \\
        & + \int_0^u \phi^-_{\gamma*}(e,v) q_{0}(t,r;v) k_{Q,0}^-\left( (r - \bar{r})/b_1 \right) dv  dedu \Big)^2 f^-_{Y,T,R}(y,t,r) dydtdr,
    \end{align*}
    where $f^\pm_{Y,T,R}$ is defined analogously as $f^\pm_{T|R}$ and $F^\pm_{Y|T,R}$.
    Applying the change of variables $\tilde{r} = (r-\bar{r})/b_1$, we obtain that the above matrix is equal to $b_1$ times the matrix 
    \begin{align*} 
        & \int_{\mathcal{Y} \times [t_0',t_0''] \times [-1,0]} \Big( \int_{\mathcal{E}} \int_0^1 w(e,u) \nabla_\gamma D_{\gamma^*,h^*}(e,u) \int_{(t-h_0(u))/b_1}^{(t-t_0')/b_1} \iota' \Xi_0(t + b_1 v)^{-1} (1,v,\tilde{r})' \nonumber  \\
        & \times \tilde{K}_Y(y,t,\bar{r}+b_1\tilde{r};g_{\gamma^*}(t + b_1 v,e)) k_T(v) k_R^-(\tilde{r})f_{T|R}^-(t + b_1v| \bar{r}) dv \nonumber \\
        & + \int_0^u \phi^-_{\gamma*}(e,v) q_{0}(t,\bar{r}+b_1\tilde{r};v) k_{Q,0}^-\left( \tilde{r} \right) dv  dedu \Big)^2 f^-_{Y,T,R}(y,t,\bar{r}+b_1\tilde{r}) dydtd\tilde{r}.
    \end{align*}
    For any $t \in [t_0'+b_1,t_0''-b_1]$, we have $\Xi_0(t) = f^-_{T,R}(t,\bar{r}) \bar{\Omega}_0$ with $\bar{\Omega}_0 = \int \bm{x} \bm{x}' k_0(\bm{x}) dx_1 dx_2$.
    By letting $n \rightarrow \infty$ (so that $b_1 \rightarrow 0$) and using the continuity of the relevant functions and the dominated convergence theorem, we know that the above matrix is asymptotically equivalent to 
    \begin{align} \label{eqn:sigma-}
        \Sigma_- & = \int \Big( \int_{\mathcal{E}} \int_0^1 w(e,u) \nabla_\gamma D_{\gamma^*,h^*}(e,u) \Big( \iota' \bar{\Omega}_0^{-1} (1,0,\tilde{r})' \big(\mathbf{1}\{y \leq g_{\gamma^*}(t,e) \} \nonumber \\
        & \quad - F^-_{Y|T,R}(g_{\gamma^*}(t,e)|t,\bar{r}) \big) k_R^-(\tilde{r})/f_{R}(\bar{r}) \nonumber \\
        & \quad + \int_0^u \phi^-_{\gamma*}(e,v) q_{0}(t,\bar{r};v)k_{Q,0}^-\left( \tilde{r} \right) dv \Big) dedu \Big)^2 f^-_{Y,T,R}(y,t,\bar{r}) dydtd\tilde{r}.
    \end{align}
    In particular, we have used the following convergence result in the above expression:
    \begin{align*}
        K_Y((y-y')/b_1) & \rightarrow \mathbf{1}\{y' \leq y\}, \\
        \tilde{K}_Y(y,t,\bar{r}+b_1\tilde{r};g_{\gamma^*}(t + b_1 v,e)) & \rightarrow \mathbf{1}\{y \leq g_{\gamma^*}(t,e) \} - F^-_{Y|T,R}(g_{\gamma^*}(t,e)|t,\bar{r}).
    \end{align*}
    The above derivation shows that
    \begin{align*}
        \mathbb{E} \left[ (\zeta_-^{\textit{DF}}(Y,T,R) + \zeta_-^{\textit{Q}}(T,R)) \otimes (\zeta_-^{\textit{DF}}(Y,T,R) + \zeta_-^{\textit{Q}}(T,R)) \right] \sim b_1 \Sigma_-.
    \end{align*}
    Similarly, we can show that 
    \begin{align*}
        \mathbb{E} \left[ (\zeta_+^{\textit{DF}}(Y,T,R) + \zeta_+^{\textit{Q}}(T,R)) \otimes (\zeta_+^{\textit{DF}}(Y,T,R) + \zeta_+^{\textit{Q}}(T,R)) \right] \sim b_1 \Sigma_+,
    \end{align*}
    where
    \begin{align} \label{eqn:sigma+}
        \Sigma_+ & = \int \Big( \int_{\mathcal{E}} \int_0^1 w(e,u) \nabla_\gamma D_{\gamma^*,h^*}(e,u) \Big( \iota' \bar{\Omega}_1^{-1} (1,0,\tilde{r})' \big(\mathbf{1}\{y \leq g_{\gamma^*}(t,e) \} \nonumber \\
        & \quad - F^+_{Y|T,R}(g_{\gamma^*}(t,e)|t,\bar{r}) \big) k_R^+(\tilde{r})/f_{R}(\bar{r}) \nonumber \\
        & \quad  + \int_0^u \phi^+_{\gamma*}(e,v) q_{1}(t,\bar{r};v)k_{Q,1}^+\left( \tilde{r} \right) dv \Big) dedu \Big)^2 f^+_{Y,T,R}(y,t,\bar{r}) dydtd\tilde{r},
    \end{align}
    and $\bar{\Omega}_1 = \int \bm{x} \bm{x}' k_1(\bm{x}) dx_1 dx_2$.
    The terms $\zeta_-^{\textit{DF}}(Y,T,R)$ and $\zeta_-^{\textit{Q}}(T,R)$ contain the factor $\mathbf{1}\{R < 0\}$ while the terms $\zeta_+^{\textit{DF}}(Y,T,R)$ and $\zeta_+^{\textit{Q}}(T,R)$ contain the factor $\mathbf{1}\{R \geq 0\}$. Hence, we can compute the variance matrix of their sum as
    \begin{align*}
        \textit{var} \left( (\zeta_-^{\textit{DF}}(Y,T,R) + \zeta_-^{\textit{Q}}(T,R)) - (\zeta_+^{\textit{DF}}(Y,T,R) + \zeta_+^{\textit{Q}}(T,R))\right) = \Sigma_- + \Sigma_+.
    \end{align*}
    Since $\Sigma_-$ and $\Sigma_+$ do not vary with $n$, Chebyshev's inequality implies that the following term is $O_p\big( 1/\sqrt{nb_1} \big)$:
    \begin{align*}
        \frac{1}{nb_1} \sum_{i=1}^n \left( \zeta_-^{\textit{DF}}(Y_i,T_i,R_i) + \zeta_-^{\textit{Q}}(T_i,R_i) \right) - \frac{1}{nb_1} \sum_{i=1}^n \left( \zeta_+^{\textit{DF}}(Y_i,T_i,R_i) + \zeta_+^{\textit{Q}}(T_i,R_i) \right).
    \end{align*}
    Moreover, $\mathcal{E}$ is compact and the relevant functions in the expressions of $\zeta_\pm^{\textit{DF}}$ and $\zeta_\pm^{\textit{Q}}$ are bounded (Assumptions \ref{ass:w}, \ref{ass:smoothness-distribution}, \ref{ass:complexity-parametric-model}, and \ref{ass:h-tilde}). We can apply the Lyapnov's central limit theorem (for example, Theorem 5.11 in \cite{white2001asymptotic}) to obtain that 
    \begin{align*}
        & \big( \sqrt{nb_1} (\Sigma_- + \Sigma_+)^{-1/2} \big) \Big( \frac{1}{nb_1} \sum_{i=1}^n \left( \zeta_-^{\textit{DF}}(Y_i,T_i,R_i) + \zeta_-^{\textit{Q}}(T_i,R_i) \right) \\
        & \quad - \frac{1}{nb_1} \sum_{i=1}^n \left( \zeta_+^{\textit{DF}}(Y_i,T_i,R_i) + \zeta_+^{\textit{Q}}(T_i,R_i) \right) \Big) \overset{d}{\rightarrow} N(0,\bm{I}_{d_\Gamma}).
    \end{align*}
    Therefore, we obtain, for $\tilde{\gamma}$, the convergence rate: $\norm{\tilde{\gamma} - \gamma^*}_2 = O_p\big( b_1^2 + 1/\sqrt{nb_1} \big)$, and asymptotic normality:
    \begin{align} \label{eqn:asym-normal-tildegamma}
        \big( \sqrt{nb_1} (\Sigma_- + \Sigma_+)^{-1/2} \big) (\Delta (\tilde{\gamma} - \gamma^*) - b_1^2( B_- - B_+)) \overset{d}{\rightarrow} N(0,\bm{I}_{d_\Gamma}),
    \end{align}
    under the condition that $nb_1^7 \rightarrow 0$ (Assumption \ref{ass:bandwidth}).

    \bigskip
    \noindent \hypertarget{thm2pf7}{\textbf{Step 7.}} (Asymptotic normality of $\hat{\gamma}$.)
    By Equation (\ref{eqn:approx-error-linearization2}), we can apply the triangle inequality repeatedly and obtain that
    \begin{align*}
        \norm{\hat{L}_{\hat{\gamma}}}_w & \leq \norm{\hat{Q}_{\hat{\gamma},\hat{h}}}_w + O_p(\alpha_n) \leq  \norm{\hat{Q}_{\tilde{\gamma},\hat{h}}}_w + O_p(\alpha_n)  \leq \norm{\hat{L}_{\tilde{\gamma}}}_w + O_p(\alpha_n) ,
    \end{align*}
    where the second inequality uses the definition of $\hat{\gamma}$ in (\ref{eqn:gamma-hat-def}). Squaring the above inequality and using (\ref{eqn:Lgamma-rate}) to bounded $\lVert \hat{L}_{\tilde{\gamma}} \rVert_w$, we obtain that 
    \begin{align*}
        \norm{\hat{L}_{\hat{\gamma}}}^2_w & \leq \norm{\hat{L}_{\tilde{\gamma}}}^2_w + O_p(\alpha_n^2) + O_p(\alpha_n \tilde{\alpha}_n) \leq \norm{\hat{L}_{\tilde{\gamma}}}^2_w + O_p(\alpha_n \tilde{\alpha}_n) \\
        & = \norm{\hat{L}_{\tilde{\gamma}}}^2_w + O_p\left( \left(b_1^2 + \sqrt{\log n /(nb_1^4)}\right) \left( b_1^4 + n^{-1} b_1^{-7/6- 2\bar{\epsilon}/5} \right) \right).
    \end{align*}
    Thus, we have 
    \begin{align*}
        \norm{\hat{L}_{\hat{\gamma}}}_w^2 - \norm{\hat{L}_{\tilde{\gamma}}}_w^2 & = O_p \left( b_1^6 + n^{-1} b_1^{5/6- 2\bar{\epsilon}/5} + b_1^2 \sqrt{ \log n / n} + \sqrt{\log n} n^{- 3/2} b_1^{-3\frac{1}{6} - 2\bar{\epsilon}/5} \right) .
    \end{align*}
    We want to show that the four terms inside the $O_p$-notation in the above equation is $o(1/(nb_1))$.
    Both the terms $b_1^6$ and $b_1^2 \sqrt{\log n /n}$ are $o(1/(nb_1))$ under Assumption \ref{ass:bandwidth}(ii). The term $n^{-1} b_1^{5/6- 2\bar{\epsilon}/5}$ is $o(1/(nb_1))$ since $b_1=o(1)$. For the fourth term, we have
    \begin{align*}
        \sqrt{\log n} n^{- 3/2} b_1^{-3\frac{1}{6} -2\bar{\epsilon}/5} = o(1/(nb_1)) \iff nb_1^{4 \frac{1}{3} + \bar{\epsilon}} b_1^{-\bar{\epsilon}/5} / \log n \rightarrow \infty,
    \end{align*} 
    where the statement on the RHS is true by Assumption \ref{ass:bandwidth}(iii). The above derivations show that 
    \begin{align*}
        \norm{\hat{L}_{\hat{\gamma}}}_w^2 - \norm{\hat{L}_{\tilde{\gamma}}}_w^2 = o_p(b_1^2 + 1/(nb_1)).
    \end{align*}
    By adding and subtracting $(\tilde{\gamma} - \gamma^*)\nabla_\gamma D_{\gamma^*,h^*}$, we obtain that
    \begin{align*}
        \norm{\hat{L}_{\hat{\gamma}}}_w^2 = \norm{\tilde{L}_{\hat{\gamma}}}_w^2 + \norm{(\hat{\gamma} - \tilde{\gamma}) \nabla_\gamma D_{\gamma^*,h^*} }_w^2 + 2 (\hat{\gamma} - \tilde{\gamma}) \int \tilde{L}_{\tilde{\gamma}} (e,u) \nabla_\gamma D_{\gamma^*,h^*}(e,u) w(e,u) dedu.
    \end{align*}
    The last term (the innner product term) above is zero because $\tilde{L}_{\tilde{\gamma}}$ is orthogonal to $\nabla_\gamma D_{\gamma^*,h^*}$ from the projection perspective. This can also be verified by using the definition of $\tilde{\gamma}$. Hence, we have $\norm{(\hat{\gamma} - \tilde{\gamma}) \nabla_\gamma D_{\gamma^*,h^*} }_w^2 = o_p(1/(nb_1))$. By the same argument as in Step 3, we can show that $\norm{\hat{\gamma} - \tilde{\gamma}}_2 = o_p(1/\sqrt{nb_1})$. Therefore, by (\ref{eqn:asym-normal-tildegamma}) and Slutsky's theorem, we obtain the desired asymptotic distribution of $\hat{\gamma}$:
    \begin{align*}
        & \big( \sqrt{nb_1} (\Sigma_- + \Sigma_+)^{-1/2} \big) (\Delta (\hat{\gamma} - \gamma^*) - b_1^2( B_- - B_+)) \\
        = & \big( \sqrt{nb_1} (\Sigma_- + \Sigma_+)^{-1/2} \big) (\Delta (\tilde{\gamma} - \gamma^*) - b_1^2( B_- - B_+)) + o_p(1) \overset{d}{\rightarrow} N(0,\bm{I}_{d_\Gamma}).
    \end{align*}

\end{proof}

\begin{proof} [Proof of Proposition \ref{prop:fs-quantile}]
    We only prove the results for $\hat{h}_0(\bar{r},\cdot)$ since the results for $\hat{h}_1(\bar{r},\cdot)$ can be proved analogously. For part (i) of Assumption \ref{ass:h-tilde}, we can set the partition $\mathcal{P}_0^n$ to be the class of intervals $\{[u_j,u_{j+1}]:j=0,\cdots,J_n\}$. The estimator $\hat{h}_0(\bar{r},\cdot)$ is a linear function within each interval and hence is contained in the class $\mathcal{H}_0^n(\mathcal{P}_0^n)$. 

    For part (ii), notice that, under Assumption \ref{ass:smoothness-distribution}(i), the estimator $\hat{h}_0(\bar{r},u_0)$ and $\hat{h}_0(\bar{r},u_{J_n + 1})$ converge to $t_0'$ and $t_0''$, respectively, at the $1/n$ rate. Therefore, we can replace $\hat{h}_0(\bar{r},u_0)$ by $t_0'$ and $\hat{h}_0(\bar{r},u_{J_n + 1})$ by $t_0''$ without affecting the asymptotics. Let $\tilde{h}_0(\bar{r},u)$ denote the solution of (\ref{eqn:first-step-quantile-est}) at given $u$. The uniform asymptotic linear representation for $\hat{h}_0(\bar{r},u), u \in (0,1)$ follows from Lemma 3 in the Appendix of \cite{dong2021regression}, which is a slight modification of Theorem 1.2 of \cite{QU2015nonparametric}. Then we can use Step 2 in the proof of Theorem 2 in \cite{QU2015nonparametric} to show that the error induced by linear interpolation is asymptotically negligible.

    The uniform convergence rate in Part (iii) of Assumption \ref{ass:h-tilde} can be shown by using the uniform asymptotic linear representation. Since $\nu_0$ is bounded, the bias term is $O(b_1^2)$. In Lemma \ref{lm:uc4}, we show that the stochastic term is satisfies 
    \begin{align*}
        \sup_{u \in [0,1]} \left| \frac{1}{nb_1} \sum_{i=1}^n q_0(T_i,R_i;u) k_{Q,0}\left( \frac{R_i - \bar{r}}{b_1} \right) \mathbf{1}\{R_i < \bar{r}\} \right| = O_p \left( \sqrt{\log n / nb_1} \right).
    \end{align*}
    This proves the desired result.

\end{proof}

\subsection{Uniform convergence rates and the empirical process theory} \label{ssec:ept}

Below are some basic concepts and results from the empirical process theory which are used to prove several uniform convergence results.

Let $\mathcal{F}$ be a class of uniformly bounded measurable matrix-valued functions, that is, there exists $M>0$ such that, for all $f \in \mathcal{F}$, $\norm{f}_2 \leq M$. Let $N(\mathcal{F},P,\epsilon)$ be the $\epsilon$-covering number of the metric space $(\mathcal{F},L_2(P))$, that is, $N(\mathcal{F},P,\epsilon)$ is defined as the minimal number of open $\norm{\cdot}_{L_2(P)}$-balls of radius $\epsilon$ and centers in $\mathcal{F}$ required to cover $\mathcal{F}$. 

We say that a uniformly bounded function class $\mathcal{F}$ is \textit{Euclidean} if there exists $A_1,A_2 >0$ (that only depend on the uniform bound) such that for every probability measure $P$ and every $\epsilon \in (0,1]$, $N(\mathcal{F},P,\epsilon) \leq A_1 / \epsilon^{A_2}$. We say that a function class $\mathcal{F}$ is \textit{log-Euclidean} with coefficient $\rho \in (0,1)$ if there exists $A>0$ (that only depends on the uniform bound) such that for every probability measure $P$ and every $\epsilon \in (0,1]$, $\log N(\mathcal{F},P,\epsilon) \leq A /\epsilon^{2\rho}$.


The above definition of Euclidean classes is introduced by \cite{nolan1987uprocess}. The same concept is also studied by \cite{gine1999laws}, but they refer to what we call ``Euclidean'' as ``VC.'' There is a slight difference that \cite{nolan1987uprocess} use the $L_1$-norm while \cite{gine1999laws} use the $L_2$-norm. We ignored the envelope in their definition since we only work with uniformly bounded $\mathcal{F}$. The following two lemmas demonstrates how to generate function classes that are Euclidean and log-Euclidean.

\begin{lemma} \label{lm:generate-Euclidean}
    Let $\mathcal{F}_1$ and $\mathcal{F}_2$ be uniformly bounded and Euclidean classes of functions. The following classes of functions are also uniformly bounded and Euclidean.
    \begin{enumerate} [label = (\roman*)]
        \item $\mathcal{F}_1 \oplus \mathcal{F}_2 = \{f_1 + f_2:f_1 \in \mathcal{F}_1, f_2 \in \mathcal{F}_2\}$.
        \item $\mathcal{F}_1\mathcal{F}_2 = \{f_1\cdot f_2:f_1 \in \mathcal{F}_1, f_2 \in \mathcal{F}_2\}$.
        \item $\{ \mathbb{E}[f_1(\cdot) | X] : f_1 \in \mathcal{F}_1 \}$.
        \item $\left\{ k\left( (\cdot - x)/b \right): x \in \mathbb{R}, b>0  \right\}$, where $k:\mathbb{R} \rightarrow \mathbb{R}$ is a function of bounded variation.
    \end{enumerate}
\end{lemma}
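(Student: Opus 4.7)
The plan is to handle each of the four parts by standard covering-number arguments, exploiting the operational closure properties of Euclidean classes that are by now folklore in the empirical process literature (see, e.g., \cite{nolan1987uprocess} and Lemma 2.14 of \cite{gine1999laws}). Throughout, I will use the shorthand that a uniform bound on any member of the relevant class follows immediately from the uniform bounds on $\mathcal{F}_1, \mathcal{F}_2$, together with the additional observation that conditional expectation preserves sup-norm bounds, and $k$ being of bounded variation implies $k$ is bounded on $\mathbb{R}$.

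For part (i), given any probability measure $P$ and $\epsilon \in (0,1]$, take minimal $(\epsilon/2)$-covers $\mathcal{N}_1$ and $\mathcal{N}_2$ of $\mathcal{F}_1$ and $\mathcal{F}_2$ in $L_2(P)$. Then $\mathcal{N}_1 \oplus \mathcal{N}_2$ is an $\epsilon$-cover of $\mathcal{F}_1 \oplus \mathcal{F}_2$ by the triangle inequality, yielding $N(\mathcal{F}_1 \oplus \mathcal{F}_2,P,\epsilon) \leq N(\mathcal{F}_1,P,\epsilon/2) \cdot N(\mathcal{F}_2,P,\epsilon/2)$, which is polynomial in $1/\epsilon$ with constants depending only on the uniform bounds. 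For part (ii), using uniform bounds $M_1,M_2$ on $\mathcal{F}_1,\mathcal{F}_2$, we have
\begin{equation*}
\| f_1 f_2 - f_1' f_2' \|_{L_2(P)} \leq M_2 \| f_1 - f_1' \|_{L_2(P)} + M_1 \| f_2 - f_2' \|_{L_2(P)},
\end{equation*}
so taking $\epsilon/(2M_2)$ and $\epsilon/(2M_1)$ covers gives the desired polynomial bound on $N(\mathcal{F}_1 \mathcal{F}_2, P, \epsilon)$.

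For part (iii), observe that by Jensen's inequality applied to the convex function $x \mapsto x^2$,
\begin{equation*}
\| \mathbb{E}[f_1 | X] - \mathbb{E}[f_1' | X] \|_{L_2(P)}^2 = \mathbb{E}\bigl[ (\mathbb{E}[f_1 - f_1' | X])^2 \bigr] \leq \mathbb{E}\bigl[ \mathbb{E}[(f_1 - f_1')^2 | X] \bigr] = \| f_1 - f_1' \|_{L_2(P)}^2.
\end{equation*}
Thus the map $f_1 \mapsto \mathbb{E}[f_1 | X]$ is a contraction on $L_2(P)$, which implies $N(\{\mathbb{E}[f_1 | X] : f_1 \in \mathcal{F}_1\}, P, \epsilon) \leq N(\mathcal{F}_1, P, \epsilon)$, so the conditional expectation class inherits the Euclidean property.

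Part (iv) is where the real work lies and will be the main obstacle, since it requires passing from ``bounded variation'' to a VC-type statement. The strategy is to write $k = k_+ - k_-$ as a difference of two nondecreasing bounded functions via the Jordan decomposition. For a nondecreasing function $\tilde{k}$, the class $\{\tilde{k}((\cdot - x)/b) : x \in \mathbb{R}, b > 0\}$ is VC-subgraph: the subgraph sets are determined by the shifted-and-scaled monotone function, and monotone function classes have a bounded VC-subgraph dimension (this is a standard fact, e.g., Lemma 2.6.18 in \cite{van1996weak} together with monotonicity). By the standard passage from VC-subgraph to uniform covering numbers (Theorem 2.6.7 in \cite{van1996weak}), each such class is Euclidean. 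Then applying part (i) to the decomposition $k((\cdot-x)/b) = k_+((\cdot-x)/b) - k_-((\cdot-x)/b)$ yields that the original class is Euclidean. The main technical care needed here is that $b$ ranges over $(0,\infty)$ without compactness, but the monotonicity structure of $\tilde{k}$ means that as $b \to 0$ or $b \to \infty$ the functions degenerate to constants (or step functions), which can be absorbed into the cover; alternatively, one invokes the VC-subgraph result directly on the two-parameter family, whose subgraphs form a class with bounded VC dimension since they are determined by at most two level-set comparisons of a monotone function.
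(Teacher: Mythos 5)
The paper does not prove this lemma directly; it delegates to Appendix~B of \cite{xie2021uniform}, whose commented-out precursor lemmata in the source rely on the same standard covering-number manipulations you use for (i)--(ii), on the Hilbert-space projection argument for (iii), and on Lemma~22(i) of \cite{nolan1987uprocess} for (iv). Your self-contained writeup is therefore in the same spirit, and (i), (ii) are fine. Two points deserve attention.

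In part (iii), your Jensen chain silently uses the same measure $P$ for the outer expectation, for the conditional expectation, and for the $L_2$ norm. But the class in the statement is $\{\mathbb{E}[f_1 \mid X]\}$ where the conditioning is with respect to the data-generating law $\mathbb{P}_0$, while the Euclidean property requires the covering bound for \emph{every} probability measure $P$ on the space of $X$ (including empirical measures, which are not $\mathbb{P}_0$). As written, the final equality $\mathbb{E}_P\bigl[\mathbb{E}_{\mathbb{P}_0}[(f_1-f_1')^2 \mid X]\bigr] = \|f_1-f_1'\|_{L_2(P)}^2$ is false for generic $P$; the correct conclusion is $=\|f_1-f_1'\|_{L_2(Q)}^2$ where $Q = P \otimes \mathbb{P}_0(\cdot \mid X)$ is the compound measure formed by drawing $X \sim P$ and the remaining coordinates from the true conditional. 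Since $Q$ is still a probability measure on the domain of $\mathcal{F}_1$, the Euclidean bound for $\mathcal{F}_1$ applies at $Q$, and the argument closes. This is a one-line repair but it is a genuine logical step that is currently missing.

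In part (iv), the phrase ``monotone function classes have a bounded VC-subgraph dimension'' is, taken at face value, false: the class of \emph{all} bounded nondecreasing functions shatters arbitrarily many suitably placed points. What saves your argument is the cited preservation result (composition of a fixed monotone $\tilde{k}$ with the two-dimensional affine family $z \mapsto (z-x)/b$, which is VC-subgraph by Lemma~2.6.15 of that reference, remains VC-subgraph by Lemma~2.6.18), not any statement about monotone function classes per se. You should rephrase to make this explicit; the Jordan decomposition plus part (i) then completes the proof, and this is a perfectly valid and more self-contained alternative to citing Nolan and Pollard's Lemma~22(i) as the paper's underlying reference does.
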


\begin{proof} [Proof of Lemma \ref{lm:generate-Euclidean}]
    See Appendix B in \cite{xie2021uniform}.
\end{proof}

\begin{lemma} \label{lm:log-Eucl-times-Eucl}
    Let $\mathcal{F}_1$ be a uniformly bounded and Euclidean class of functions and $\mathcal{F}_2$ be a uniformly bounded and log-Euclidean class of functions with coefficient $\rho$. Then $\mathcal{F}_1\mathcal{F}_2$ is uniformly bounded and log-Euclidean with coefficient $\rho+
    \epsilon$ for any $\epsilon>0$.
\end{lemma}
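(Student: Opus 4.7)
The plan is to combine the two covering numbers via the standard product bound and then absorb the logarithmic factor contributed by $\mathcal{F}_1$ into an arbitrarily small polynomial factor. Let $M_1$ and $M_2$ denote uniform bounds of $\mathcal{F}_1$ and $\mathcal{F}_2$, respectively, so that $\mathcal{F}_1 \mathcal{F}_2$ is uniformly bounded by $M_1 M_2$.

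First I would establish the elementary product inequality: for any $f_1,g_1 \in \mathcal{F}_1$ and $f_2,g_2 \in \mathcal{F}_2$,
\begin{align*}
    \norm{f_1 f_2 - g_1 g_2}_{L_2(P)} \leq M_2 \norm{f_1 - g_1}_{L_2(P)} + M_1 \norm{f_2 - g_2}_{L_2(P)},
\end{align*}
by inserting the cross term $f_1 g_2$ and applying the triangle inequality. Setting $C = M_1 + M_2$, an $(\epsilon/C)$-cover of $\mathcal{F}_1$ tensored with an $(\epsilon/C)$-cover of $\mathcal{F}_2$ yields an $\epsilon$-cover of $\mathcal{F}_1 \mathcal{F}_2$, so
\begin{align*}
    N(\mathcal{F}_1\mathcal{F}_2, P, \epsilon) \leq N(\mathcal{F}_1, P, \epsilon/C) \cdot N(\mathcal{F}_2, P, \epsilon/C).
\end{align*}

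Taking logarithms and applying the Euclidean bound $N(\mathcal{F}_1, P, \delta) \leq A_1/\delta^{A_2}$ together with the log-Euclidean bound $\log N(\mathcal{F}_2, P, \delta) \leq A_0/\delta^{2\rho}$, I get
\begin{align*}
    \log N(\mathcal{F}_1 \mathcal{F}_2, P, \epsilon) \leq \log A_1 + A_2 \log(C/\epsilon) + A_0 (C/\epsilon)^{2\rho}.
\end{align*}
The last step is to observe that for any fixed $\epsilon' > 0$, there exists a constant $K_{\epsilon'}$ such that $\log(1/\epsilon) \leq K_{\epsilon'} \epsilon^{-2\epsilon'}$ for all $\epsilon \in (0,1]$; combined with $\epsilon^{-2\rho} \leq \epsilon^{-2(\rho+\epsilon')}$, this absorbs both the constant and the logarithmic term into a single bound of the form $A'/\epsilon^{2(\rho+\epsilon')}$.

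There is no real obstacle here beyond bookkeeping; the only thing to be careful about is tracking how the uniform bounds $M_1, M_2$ (which are universal by assumption) enter the covering number transfer, so that the resulting constant $A'$ depends only on the uniform bounds and the Euclidean/log-Euclidean constants of the two input classes, as required by the definition.
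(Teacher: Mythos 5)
Your proof is correct and essentially just fills in the details that the paper's one-line proof ("This follows from the definition...") leaves implicit: the product covering-number bound via the cross-term trick, plus the observation that a logarithm can be absorbed into an arbitrarily small polynomial factor. The one technicality worth noting is that the lemma, as stated, should implicitly be read as "for any $\epsilon > 0$ with $\rho + \epsilon < 1$," since the paper's definition of log-Euclidean requires the coefficient to lie in $(0,1)$; this is a quirk of the lemma statement rather than a gap in your argument.
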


The following two lemmas give the asymptotic order of the supremum of empirical processes generated by Euclidean and log-Euclidean classes, respectively.

\begin{proof} [Proof of Lemma \ref{lm:log-Eucl-times-Eucl}]
    This follows from the definition of Euclidean and log-Euclidean classes.
\end{proof}

\begin{lemma} \label{lm:gine}

    Let $X_1,\cdots,X_n$ be an iid sample of a random vector $X$ in $\mathbb{R}^d$. Let $\mathcal{G}_n$ be a sequence of classes of measurable real-valued functions defined on $\mathbb{R}^d$. Assume that there is a fixed uniformly bounded Euclidean class $\mathcal{F}$ such that $\mathcal{F}_n \subset \mathcal{F} $ for all $n$. Let $\sigma^2_{n} \geq \sup_{f \in \mathcal{F}_n} \mathbb{E}[f(X)^2]$. Then
        \begin{align*}
             \sup_{f \in \mathcal{F}_n}  \left| \sum_{i=1}^n (f(X_i) - \mathbb{E}f(X_i)) \right| = O_p \left( \sqrt{n \sigma_n^2 |\log \sigma_n|} + |\log \sigma_n| \right) .
        \end{align*}
        In particular, if $n \sigma_n^2 / |\log \sigma_n| \rightarrow \infty$, then the above rate simplifies to $O_p \left( \sqrt{n \sigma_n^2 |\log \sigma_n|} \right) .$
\end{lemma}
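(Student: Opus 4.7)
The plan is to establish this standard maximal inequality for Euclidean classes in two conceptual steps: first, bound the expected supremum via symmetrization and an entropy integral (exploiting the polynomial covering numbers), and then convert the expectation bound into an $O_p$ statement by Markov's inequality. Since $\mathcal{F}_n \subset \mathcal{F}$ and $\mathcal{F}$ is uniformly bounded and Euclidean, every $\mathcal{F}_n$ inherits the same envelope $M$ and the same covering-number control $N(\mathcal{F}_n, L_2(Q), \epsilon) \leq A_1/\epsilon^{A_2}$ uniformly in $n$ and in the probability measure $Q$.

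First, by the symmetrization inequality (Lemma 2.3.1 in van der Vaart and Wellner, 1996),
\begin{align*}
    \mathbb{E} \sup_{f \in \mathcal{F}_n}  \left| \sum_{i=1}^n (f(X_i) - \mathbb{E}f(X_i)) \right| \leq 2 \, \mathbb{E} \sup_{f \in \mathcal{F}_n} \left| \sum_{i=1}^n \textit{Rad}_i f(X_i) \right|,
\end{align*}
where $\textit{Rad}_i$ are i.i.d.\ Rademacher variables independent of the sample. Conditionally on $X_1,\ldots,X_n$, the Rademacher process on the right is sub-Gaussian with respect to the empirical $L_2$ metric, and Dudley's entropy bound plus the uniform covering estimate yields, after integrating up to the variance level $\sigma_n$,
\begin{align*}
    \mathbb{E} \sup_{f \in \mathcal{F}_n} \left| \sum_{i=1}^n \textit{Rad}_i f(X_i) \right| \leq C \left( \sqrt{n \sigma_n^2 |\log \sigma_n|} + |\log \sigma_n| \right),
\end{align*}
for some constant $C$ depending only on $M$, $A_1$, and $A_2$. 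This is precisely the content of Proposition 2.1 in Giné and Guillou (2001), which I would quote directly rather than re-derive. The first term is the Dudley-integral contribution $\int_0^{\sigma_n} \sqrt{\log N(\mathcal{F}_n,L_2,\epsilon)}\, d\epsilon \asymp \sigma_n \sqrt{|\log \sigma_n|}$ multiplied by $\sqrt{n}$, while the second is the $\ell_\infty$ correction that dominates when $\sigma_n$ is very small.

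Combining the two displays bounds the expected supremum by $O\big(\sqrt{n\sigma_n^2 |\log \sigma_n|} + |\log \sigma_n|\big)$, and Markov's inequality then upgrades this into the claimed $O_p$ rate. The second claim is immediate: when $n\sigma_n^2 / |\log \sigma_n| \to \infty$, the stochastic term $\sqrt{n\sigma_n^2|\log \sigma_n|}$ dominates $|\log \sigma_n|$ asymptotically.

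The main technical obstacle is the localization aspect — one needs the bound to scale with $\sigma_n$ rather than with the envelope $M$, since a naive application of Hoeffding-type inequalities would only give $\sqrt{n}\,M$ and be insufficient for the uniform convergence rates needed throughout Section \ref{ssec:proof-thm2}. This is exactly what the entropy-integral approach delivers, provided the entropy bound is in the empirical $L_2$-norm uniformly over $Q$, which is guaranteed by the Euclidean property of the envelope class $\mathcal{F}$. Once this localized bound is in hand, the result follows immediately without any further calculation.
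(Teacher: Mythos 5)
Your proposal is correct and takes essentially the same route as the paper: the paper's own proof is a one-line citation to Lemma 2 of Xie (2021), but the underlying argument (visible in an earlier commented-out version of this lemma in the source) is exactly symmetrization, then Proposition 2.1 of Giné and Guillou (2001) for the Rademacher process expectation, then Chebyshev/Markov. Your sketch reproduces that chain faithfully, including the correct reading of the $|\log\sigma_n|$ correction term and of the simplification when $n\sigma_n^2/|\log\sigma_n| \to \infty$.
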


\begin{proof} [Proof of Lemma \ref{lm:gine}]
    This is Lemma 2 in \cite{xie2021uniform}.
\end{proof}

\begin{lemma} \label{lm:log-Euclidean-rate}

    Let $X_1,\cdots,X_n$ be an iid sample of a random vector $X$ in $\mathbb{R}^d$. Let $\mathcal{F}_n$ be a sequence of classes of measurable real-valued functions defined on $\mathbb{R}^d$. Assume that there is a fixed uniformly bounded log-Euclidean class $\mathcal{F}$ with coefficient $\rho$ such that $\mathcal{F}_n \subset \mathcal{F} $ for all $n$. Let $\sigma^2_{n} = \sup_{f \in \mathcal{F}_n} \mathbb{E}[f(X)^2]$. Then
        \begin{align*}
            \sup_{f \in \mathcal{F}_n}  \left| \sum_{i=1}^n (f(X_i) - \mathbb{E}f(X_i)) \right| = O_p \left( \sqrt{n} \sigma_n^{1-\rho} + n^{\rho/(1+\rho)} \right) .
        \end{align*}
\end{lemma}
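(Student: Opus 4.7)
The plan is to follow the standard route for empirical processes indexed by a class with polynomial metric entropy: symmetrization, followed by a chaining bound, followed by concentration. First I would apply the symmetrization inequality (Lemma 2.3.1 of van der Vaart and Wellner, 1996) to pass from the centered empirical sum $\sup_{f\in\mathcal{F}_n}\lvert\sum_{i=1}^n(f(X_i)-\mathbb{E}f(X_i))\rvert$ to a constant multiple of its Rademacher-symmetrized version $\sup_{f\in\mathcal{F}_n}\lvert\sum_{i=1}^n\varepsilon_i f(X_i)\rvert$. The symmetrized process is sub-Gaussian conditionally on $X_1,\ldots,X_n$, so Dudley's entropy integral applies.

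The second step is to invoke Dudley's chaining bound for the conditional sub-Gaussian process. Because the empirical $L_2$-diameter of $\mathcal{F}_n$ is of order $\sigma_n$ and the class is log-Euclidean with coefficient $\rho$, the entropy integral evaluates to
\begin{equation*}
\int_0^{\sigma_n}\sqrt{\log N(\mathcal{F}_n,P_n,\epsilon)}\,d\epsilon
\;\lesssim\;\int_0^{\sigma_n}\epsilon^{-\rho}\,d\epsilon
\;=\;\frac{\sigma_n^{1-\rho}}{1-\rho},
\end{equation*}
which is finite precisely because $\rho<1$. Multiplying by $\sqrt{n}$ (and taking expectations over $X_1,\ldots,X_n$, noting that the data-dependent entropy is dominated by the uniform entropy because $\mathcal{F}_n\subset\mathcal{F}$) delivers the first term $\sqrt{n}\,\sigma_n^{1-\rho}$ as an upper bound on the expected symmetrized supremum.

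The third step addresses the deviation of the empirical supremum from its mean and produces the second term $n^{\rho/(1+\rho)}$. Here I would apply Talagrand's concentration inequality (in the Bousquet-Klein form) for empirical processes indexed by a uniformly bounded class: the deviation is controlled by $\sqrt{v\log(1/\delta)}+M\log(1/\delta)$, where $v$ is the wimpy variance. However, a simpler route is to use a peeling argument, splitting $\mathcal{F}_n$ into shells on which the Rademacher complexity is approximately constant; balancing the shell-wise chaining bound $\sqrt{n}\,\delta^{1-\rho}$ against the envelope contribution $\delta^{-2\rho}$ from the metric entropy at scale $\delta$ produces the optimal choice $\delta\asymp n^{-1/(2(1+\rho))}$, which yields an additive term of order $n^{\rho/(1+\rho)}$. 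Combining this with Markov's inequality to convert expectation bounds to in-probability bounds completes the proof.

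The main obstacle will be the sharp handling of the envelope contribution, i.e., deriving the $n^{\rho/(1+\rho)}$ rate rather than a cruder polynomial-in-$n$ bound. Standard Dudley chaining alone only delivers $\sqrt{n}\,\sigma_n^{1-\rho}$, and one needs the peeling (or a careful application of Talagrand's inequality exploiting the uniform boundedness) to obtain the sub-$\sqrt{n}$ correction. Alternatively, the result is essentially Theorem 2.14.16 of van der Vaart and Wellner (1996) or Theorem 3.1 of Giné and Guillou (2001) specialized to the log-Euclidean entropy regime, so in practice I would cite one of these directly and verify the hypotheses rather than rederive the chaining and peeling from scratch.
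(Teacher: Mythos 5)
Your overall structure matches the paper's: symmetrize to a Rademacher process, invoke a sharp expected-supremum bound for polynomial-entropy classes, and finish with Markov/Chebyshev. The paper simply cites Koltchinskii (2011), Theorem 3.12 / Eq.\ (3.19), which delivers both terms $\sqrt{n}\,\sigma_n^{1-\rho}$ and $n^{\rho/(1+\rho)}$ directly for the expected Rademacher supremum, and your ``just cite Theorem 2.14.16 of van der Vaart--Wellner or Gin\'e--Guillou'' alternative is essentially equivalent. One imprecision worth fixing: the term $n^{\rho/(1+\rho)}$ does not arise from a deviation-from-the-mean/Talagrand-concentration argument as you frame it in your third step. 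It is already part of the bound on the \emph{expectation} of the supremum. The issue is that Dudley chaining conditionally on the data involves the random empirical $L_2$-diameter $\hat\sigma_n$, not $\sigma_n$, and controlling $\mathbb{E}\hat\sigma_n^{1-\rho}$ in terms of $\sigma_n^{1-\rho}$ requires the bootstrap/peeling argument you sketch; your balancing of $\sqrt{n}\,\delta^{1-\rho}$ against $\delta^{-2\rho}$ at $\delta\asymp n^{-1/(2(1+\rho))}$ is exactly the right computation, and it lives inside the expectation bound. A Talagrand/Bousquet concentration step applied on top of $\mathbb{E}\sup$ would only add terms of order $\sqrt{n}\,\sigma_n$ and $O(1)$ (up to logs), which are already dominated, so that route does not produce the $n^{\rho/(1+\rho)}$ correction. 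Once the supremum's expectation is bounded by the two-term expression, Markov's inequality gives the stated $O_p$ rate, which is also how the paper concludes.
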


\begin{proof} [Proof of Lemma \ref{lm:log-Euclidean-rate}]
    Let $M > 0$ be the uniform bound of $\mathcal{F}$. Since $\mathcal{F}$ is log-Euclidean with coefficient $\rho$, there exists $A>0$ such that $\log N(\mathcal{F},P_n,\epsilon) \leq A/\epsilon^\rho$ for every $\epsilon \in (0,1]$, where $P_n$ is the empirical measure. Since each $\mathcal{F}_n$ is contained in $\mathcal{F}$, the above result also holds when $\mathcal{F}$ is replaced by $\mathcal{F}_n$. Denote $\textit{Rad}_i, 1 \leq i \leq n,$ as a sequence of iid Rademacher variables. By Equation (3.19) in \cite{koltchinskii2011oracle} (which is a result of Theorem 3.12 in the same book), there exists a universal constant $C>0$ such that
    \begin{align*}
        \mathbb{E}\sup_{f \in \mathcal{F}}  \left| \sum_{i=1}^n \textit{Rad}_i f(X_i) \right| \leq C A^\rho M^\rho \sqrt{n} \sigma_n^{1-\rho} \vee C A^{2\rho/(\rho+1)} M n^{\rho/(1+\rho)} = O_p \left( \sqrt{n} \sigma_n^{1-\rho} + n^{\rho/(1+\rho)} \right).
    \end{align*}
    Then the desired result follows from the usual symmetrization argument (for example, Theorem 2.1 in \cite{koltchinskii2011oracle}) and Chebyshev's inequality.
\end{proof}

The following three lemmas give uniform convergence results that are used in the proof of Theorem \ref{thm:estimation}.

\begin{lemma} \label{lm:uc1}
    Under the assumptions of Theorem \ref{thm:estimation}, the following term is $O_p(\tilde{\alpha}_n)$:
    \begin{align*}
        & \sup_{e\in\mathcal{E},u \in [0,1],\gamma \in \Gamma,h_0 \in \mathcal{H}_0(\mathcal{P}_0^n)} \left| \frac{1}{nb_1^2} \sum_{i=1}^n \int_0^u \iota' \Xi_0(h_0(v))^{-1}   s_0(Y_i,T_i,R_i;g_{\gamma}(h_0(v),\bar{r},e),h_0(v)) dv \right| .
    \end{align*}
\end{lemma}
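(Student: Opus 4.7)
The plan is to view the object as a scaled supremum of a centered empirical process indexed by the class
\begin{align*}
    \mathcal{F}_n = \Big\{ & (Y,T,R) \mapsto \int_0^u \iota'\Xi_0(h_0(v))^{-1} s_0(Y,T,R;g_\gamma(h_0(v),\bar{r},e),h_0(v))\, dv : \\
    & u \in [0,1],\, e \in \mathcal{E},\, \gamma \in \Gamma,\, h_0 \in \mathcal{H}_0(\mathcal{P}_0^n) \Big\},
\end{align*}
and to apply the log-Euclidean maximal inequality in Lemma \ref{lm:log-Euclidean-rate}. Fubini's theorem permits swapping the sum and the integral, and the centering $\mathbb{E}[\tilde{K}_Y(Y,T,R;\cdot)\mid T,R]=0$ forces $\mathbb{E}[f]=0$ for every $f\in\mathcal{F}_n$, so the sum is already centered.

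The next step is to pin down the envelope and uniform variance of $\mathcal{F}_n$. Performing the change of variables $\tau=h_0(v)$, so that $dv=(h_0^{-1})'(\tau)\,d\tau$, the factor $k_T((T-\tau)/b_1)$ confines the effective range of $\tau$ to a $b_1$-neighborhood of $T$, while $k_R((R-\bar{r})/b_1)\mathbf{1}\{R<\bar{r}\}$ confines $R$ to $[\bar{r}-b_1,\bar{r})$. Combined with the uniform boundedness of $\Xi_0^{-1}$, $X(\cdot)$, $g_\gamma$, $\tilde{K}_Y$, and of $(h_0^{-1})'$ on each piece of $\mathcal{P}_0^n$, this yields the envelope $|f|\leq C b_1\mathbf{1}\{R\in[\bar{r}-b_1,\bar{r})\}$ and the uniform variance bound $\sigma_n^2:=\sup_{f\in\mathcal{F}_n}\mathbb{E}[f^2]=O(b_1^3)$.

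The plan then calls for verifying that $\mathcal{F}_n$ is contained in a fixed, uniformly bounded log-Euclidean class. The finite-dimensional indices $(u,e,\gamma)$ together with the scalar $\tau$ trace out a Euclidean subclass through Lemma \ref{lm:generate-Euclidean}, because $k_T$ and $k_R$ are of bounded variation and $\Xi_0^{-1}$, $X$, $g_\gamma$ depend smoothly on their arguments over the compact parameter set $\Gamma\times\mathcal{E}\times[t_0',t_0'']$. The nonparametric dependence on $h_0$ enters only through the Jacobian $(h_0^{-1})'$ and the two endpoints $h_0(0), h_0(u)$; by Assumption \ref{ass:h-tilde}(i), $(h_0^{-1})'$ lies in a piecewise $C^{2,1}$ H\"older class on a bounded one-dimensional interval, hence in a log-Euclidean class, and Lemma \ref{lm:log-Eucl-times-Eucl} promotes the product to a log-Euclidean class at the cost of an arbitrarily small penalty in the coefficient.

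With these ingredients, Lemma \ref{lm:log-Euclidean-rate} delivers
\[
    \sup_{f\in\mathcal{F}_n}\Big|\sum_{i=1}^n \big(f(Y_i,T_i,R_i)-\mathbb{E} f(Y_i,T_i,R_i)\big)\Big| = O_p\big(\sqrt{n}\,\sigma_n^{1-\rho}+n^{\rho/(1+\rho)}\big);
\]
dividing by $nb_1^2$ gives a bound of the form $n^{-1/2}b_1^{-c(\rho)}$ that fits inside $\tilde{\alpha}_n=n^{-1/2}b_1^{-7/12-\bar{\epsilon}/5}$ once the slack $\bar{\epsilon}$ in Assumption \ref{ass:bandwidth}(iii) is exploited. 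The hard part will be keeping the entropy bound uniform in $n$ when the deterministic partition $\mathcal{P}_0^n$ is permitted to grow with $n$ (as in the $J_n\to\infty$ pieces in Proposition \ref{prop:fs-quantile}): the number of pieces must contribute only logarithmically, via a union bound over pieces, and the log-Euclidean coefficient $\rho$ must be kept strictly below $1$ uniformly in $n$, so that the product of $\sqrt{n}$, the smoothness-driven variance $b_1^{3(1-\rho)/2}$, and the bandwidth slack produces an exponent no larger in magnitude than $7/12+\bar{\epsilon}/5$.
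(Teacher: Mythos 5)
Your high-level plan is the right family of ideas (Fubini, centering, a Euclidean/log-Euclidean product class for $(\gamma,e,u,h_0)$, then the maximal inequality for log-Euclidean classes), but there is a quantitative gap that makes the stated conclusion fail: keeping the integral over $\tau$ inside the function gives a rate of $n^{-1/2}b_1^{-3/4-O(\epsilon)}$, which is strictly worse than $\tilde{\alpha}_n=n^{-1/2}b_1^{-7/12-\bar{\epsilon}/5}$.

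Here is the arithmetic. Your class $\mathcal{F}_n$ has envelope $Cb_1\mathbf{1}\{R\in[\bar{r}-b_1,\bar{r})\}$ and hence $\sigma_n^2 = O(b_1^3)$, and you scale by $(nb_1^2)^{-1}$. With the log-Euclidean coefficient $\rho = 1/6+\epsilon$, Lemma \ref{lm:log-Euclidean-rate} gives
\begin{equation*}
\frac{1}{nb_1^2}\,\sqrt{n}\,\sigma_n^{1-\rho}
=\frac{1}{\sqrt{n}}\,b_1^{\,\frac{3}{2}(1-\rho)-2}
=\frac{1}{\sqrt{n}}\,b_1^{-\frac{3}{4}-\frac{3\epsilon}{2}},
\end{equation*}
and $3/4 > 7/12$, so this exceeds $\tilde{\alpha}_n$ by a factor $b_1^{-1/6+O(\epsilon)}$ which diverges. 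The paper avoids this by first performing the change of variables $\tilde{v}=(T_i-h_0(v))/b_1$ and then \emph{pulling the supremum over $\tilde{v}\in(-1,1)$ outside the integral}, discarding the integral-averaging over $\tilde{v}$. That transforms the scaling from $(nb_1^2)^{-1}$ to $(nb_1)^{-1}$ and changes the variance bound from $O(b_1^3)$ to $O(b_1)$: one factor of $b_1$ is moved from $\sigma_n^2$ to the scaling. Because the maximal inequality scales like $\sigma_n^{1-\rho}$ with $1-\rho<1$, this trade is favorable: moving $b_1$ into $\sigma_n$ as you do only recovers $b_1^{1-\rho}$ in the bound while paying a full $b_1$ in scaling, a net loss of $b_1^{\rho}$, exactly the $b_1^{-1/6}$ gap above. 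So the sup-out-the-integration-variable step is not a cosmetic choice; it is what makes the rate $\tilde{\alpha}_n$ attainable. A secondary, more technical issue is that your class $\mathcal{F}_n$ consists of genuine integral-valued functions, and Lemmas \ref{lm:generate-Euclidean} and \ref{lm:log-Eucl-times-Eucl} as stated cover products and sums, not integrals; the paper sidesteps this by indexing the empirical process directly by the integration variable so that the classes are pointwise products, for which the entropy lemmas apply verbatim.
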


\begin{proof} [Proof of Lemma \ref{lm:uc1}]
    Since $\mathcal{P}_0^n$ is a finite partition, we can without loss of generality assume that $\mathcal{P}_0^n$ only contains the whole interval $[t_0',t_0'']$ so that there is effectively no partition. To simply notation, we omit the term $\mathcal{P}_0^n$. 
By the change of variables $\tilde{v} = (T_i - h_0(v))/b_1$ and Fubini's theorem, we have
    \begin{align*}
        & \Big| \frac{1}{nb_1^2} \sum_{i=1}^n \int_0^u \iota' \Xi_0(h_0(v))^{-1}   s_0(Y_i,T_i,R_i;g_{\gamma}(h_0(v),\bar{r},e),h_0(v)) dv \Big| \\
        \leq & \int \Big| \frac{1}{nb_1} \sum_{i=1}^n \iota' \Xi_0(T_i + b_1 \tilde{v})^{-1}   s_0(Y_i,T_i,R_i;g_{\gamma}(T_i + b_1 \tilde{v},\bar{r},e),T_i + b_1 \tilde{v}) \\
        & \quad \times (h_0^{-1})'(T_i+b_1\tilde{v}) \mathbf{1}\{(T_i - h_0(u))/b_1 < \tilde{v} < (T_i - t_0')/b_1\}  d\tilde{v} \Big| \\
        \leq & \sup_{\tilde{v} \in (-1,1)} \Big| \frac{1}{nb_1} \sum_{i=1}^n \iota' \Xi_0(T_i + b_1 \tilde{v})^{-1}   s_0(Y_i,T_i,R_i;g_{\gamma}(T_i + b_1 \tilde{v},\bar{r},e),T_i + b_1 \tilde{v}) \\
        & \quad \times (h_0^{-1})'(T_i+b_1\tilde{v}) \mathbf{1}\{(T_i - h_0(u))/b_1 < \tilde{v} < (T_i - t_0')/b_1\}  \Big|,
    \end{align*}
    where, in the last inequality, the supremum is taken over $\tilde{v} \in (-1,1)$ because of the support of $k_T$. Define the following function of $(Y,T,R)$ indexed by $(v,u,e,\gamma,h_0)$:
    \begin{align*}
        \psi_n(Y,T,R;v,u,e,\gamma,h_0) & = \iota' \Xi_0(T + b_1 v)^{-1}   s_0(Y,T,R;g_{\gamma}(T + b_1 v,\bar{r},e),T + b_1 v) \\
        & \quad \times (h_0^{-1})'(T+b_1v) \mathbf{1}\{(T - h_0(u))/b_1 < v < (T - t_0')/b_1\}.
    \end{align*}
    Let $\Psi_n = \{\psi_n(\cdot,\cdot,\cdot;v,u,e,\gamma,h): v \in (-1,1),u \in (0,1),e \in \mathcal{E}, \gamma \in \Gamma, h_0 \in \mathcal{H}_0\}$. Our goal is to use empirical process theory to derive the asymptotic order of 
    \begin{align*}
        \sup_{\psi_n \in \Psi_n} | \sum_{i=1}^n \psi_n(Y,T,R;v,u,e,\gamma,h_0) |.
    \end{align*}
    Consider a larger class $\Psi $ as the product $\Psi = \Psi_\Xi \Psi_Y \Psi_{TR} \Psi_{\mathcal{H}_0} $, where
    \begin{align*}
        \Psi_{\Xi_0} & = \{ T \mapsto \iota' \Xi_0(T + v)^{-1}: v \in (-1,1) \}, \\
        \Psi_Y & = \{ (Y,T,R) \mapsto \tilde{K}_Y(Y,T,R;g_{\gamma}(T + v,\bar{r},e),T + v) : v \in (-1,1),\gamma \in \Gamma, e \in \mathcal{E} \}, \\
        \Psi_{TR} & = \{ (Y,T,R) \mapsto (1,v,(R-\bar{r})/b)' k_T(v) k_R^-((R-\bar{r})/b) \\
        & \quad \times  \mathbf{1}\{T - h_0(u) < bv < T - t_0'\} : b,u  \in (0,1),v \in (-1,1) \}, \\
        \Psi_{\mathcal{H}_0} & = \{ T \mapsto (h_0^{-1})'(T+v): h_0 \in \mathcal{H}_0, v \in (0,1) \}.
    \end{align*} 
    Notice that in the above definition of $\Psi_{\Xi_0}$, $\Psi_Y$, and $\Psi_{\mathcal{H}_0}$, omitting the parameter $b$ does not change the class under consideration. The class $\Psi$ does not vary with $n$ and $\Psi_n \subset \Psi, n \geq 1$.
    In the following paragraphs, we show that the classes $\Psi_{\Xi_0}$, $\Psi_Y$, and $\Psi_{TR}$ are Euclidean while the class $\Psi_{\mathcal{H}_0}$ is log-Euclidean. 
    
    For $\Psi_{\Xi_0}$, we know that $\norm{\Xi_0}$ and $\lVert \Xi_0^{-1} \rVert$ are uniformly bounded by Lemma 1 in \cite{xie2021uniform}. By the smoothness of $f^{-}_{T,R}$ in Assumption \ref{ass:smoothness-distribution}, the class $\{ T \mapsto \iota' \Xi_0(T + v): v \in (-1,1) \}$ is Lipschitz in the parameter $v \in (-1,1)$ and hence, by Theorem 2.7.11 in \cite{wellner1996}, has covering numbers bounded by that of one-dimensional intervals. This implies that $\{ T \mapsto \iota' \Xi_0(T + v): v \in (-1,1) \}$ is uniformly bounded and Euclidean. Then by Theorem 3 in \cite{ANDREWS1994empirical}, we know that $\Psi_{\Xi_0}$ is uniformly bounded and Euclidean. 
    
    The class $\Psi_Y$ can be written as $\Psi_Y = \Psi_{Y1} + \Psi_{Y2}$, where
    \begin{align*}
        \Psi_{Y1} & = \{ (Y,T) \mapsto K_Y((g_{\gamma}(T + v,\bar{r},e)-Y)/b): b,v \in (0,1), \gamma \in \Gamma, e \in \mathcal{E} \}, \\
        \Psi_{Y2} & = \{ (Y,T,R) \mapsto - \mathbb{E}[K_Y((g_{\gamma}(T + v,\bar{r},e)-Y)/b)|T,R] : b,v \in (0,1), \gamma \in \Gamma, e \in \mathcal{E} \}.
    \end{align*}
    In view of Lemma \ref{lm:generate-Euclidean}(i) and (iii), we only need to show that $\Psi_Y$ is Euclidean. The class $\Psi_{Y}$ is uniformly bounded by $1$. The function $K_Y$ is increasing since $k_Y$ is positive. The subgraph class of $\Psi_Y$ can be written as 
    \begin{align*}
        & \{ \{ (y,t,s) : K_Y((g_{\gamma}(t + v,\bar{r},e)-y)/b) \leq s \} : b,v \in (0,1), \gamma \in \Gamma, e \in \mathcal{E}\} \\
        = & \{ \{ (y,t,s) : g_{\gamma}(t + v,\bar{r},e)-y - b K^{-1}(s) \leq 0 \} : b,v \in (0,1), \gamma \in \Gamma, e \in \mathcal{E}\}.
    \end{align*}
    By Assumption \ref{ass:complexity-parametric-model}, the following function class is finite-dimensional:
    \begin{align*}
        \{ (t,y,s) \mapsto g_{\gamma}(t + v,\bar{r},e)-y - b K^{-1}(s) : b,v \in (0,1), \gamma \in \Gamma, e \in \mathcal{E}\}.
    \end{align*}
    By Lemma 18(ii) in \cite{nolan1987uprocess}, the subgraph class of $\Psi_Y$ is a polynomial class, which implies (by Theorem 2.6.7 in \cite{wellner1996}) that $\Psi_Y$ is Euclidean.
    
    For the class $\Psi_{TR}$, notice that the function $k_T$, $k_R$, and the indicator function are all of bounded variation. The kernel functions $k_T$ and $k_R$ are supported on $[-1,1]$. Therefore, the term $(R-\bar{r})/b$ is bounded between $[-1,1]$. By Lemma \ref{lm:generate-Euclidean}(ii) and (iv), we know that $\Psi_{TR}$ is uniformly bounded and Euclidean.

    Lastly, by Assumption \ref{ass:h-tilde}(i), the class $\Psi_{\mathcal{H}_0}$ is contained in the class of twice continuously diferentiable functions whose second-order derivatives are Lipschitize continuous. By the well-known bounds on the entropy of Lipschitz classes (see, for example, Example 5.11 in Chapter 5 of \cite{wainwright2019high}), we know the class $\Psi_{\mathcal{H}_0}$ is log-Euclidean with coefficient $1/2 \times 1/(2+1) = 1/6$. Then by Lemma \ref{lm:log-Eucl-times-Eucl}, we know that $\Psi$ is log-Euclidean with coefficient $1/6 + \epsilon$ for any small $\epsilon>0$.

    Next, we want to derive a uniform variance bound for the class $\Psi$ and appeal to Lemma \ref{lm:log-Euclidean-rate}. By the uniform boundedness of the classes studied above and applying the usual change of variables, we obtain that
    \begin{align*}
        \mathbb{E}[\psi_n(Y,T,R;v,u,e,\gamma,h_0)^2] \leq C \mathbb{E}[ k_R((R-\bar{r})/b_1)^2 ] = C b_1 \int k_R(\tilde{r}) f_{R}(\bar{r} + b_1 \tilde{r}) d\tilde{r} = O(b_1).
    \end{align*}
    Lemma \ref{lm:log-Euclidean-rate} then gives that
    \begin{align*}
        \sup_{\psi \in \Psi_n} \Big| \sum_{i=1}^n \psi(Y,T,R;v,u,e,\gamma,h_0) \Big| = O_p \left( n^{1/2 } b_1^{(1-1/6)/2 - \epsilon} + n^{1/7} \right),
    \end{align*}
    for any small $\epsilon > 0$. Notice that, in the rate specified above, the term $n^{1/7}$ is dominated in view of Assumption \ref{ass:bandwidth}. 
    Then the desired convergence rate follows from dividing by $nb_1$ on both sides.

\end{proof}

\begin{lemma} \label{lm:uc2}
    Under the assumptions of Theorem \ref{thm:estimation}, we have
    \begin{align*}
        \sup_{y \in \mathbb{R}, t \in [t_0',t_0'']} \norm{ \frac{1}{nb_1^2} \sum_{i=1}^n \frac{\partial }{\partial y} s_0(Y_i,T_i,R_i;y,t)}_2 & = O_p\left( \sqrt{\log n /(nb_1^4)} \right),\\
         \sup_{y \in \mathbb{R}, t \in [t_0',t_0'']} \norm{ \frac{1}{nb_1^2} \sum_{i=1}^n \frac{\partial }{\partial t} s_0(Y_i,T_i,R_i;y,t)}_2 & = O_p\left( \sqrt{\log n /(nb_1^4)} \right).
    \end{align*}
\end{lemma}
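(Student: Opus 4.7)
The plan is to mirror the empirical-process argument used in Lemma \ref{lm:uc1}: expose each sum as an empirical process over a uniformly bounded Euclidean indexing class after factoring out the natural $1/b_1$ prefactor produced by differentiation, verify the Euclidean property via Lemma \ref{lm:generate-Euclidean}, and apply Lemma \ref{lm:gine}. Since both quantities take values in $\mathbb{R}^3$, the inequality $\|v\|_2 \le \sqrt{3}\max_j |v_j|$ reduces the task to coordinate-wise scalar empirical processes, and I would treat the two derivative statements in parallel.

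The explicit derivatives, using that $\tilde{K}_Y$ does not depend on $t$, are
\[ \partial_y s_0 = \tfrac{1}{b_1}\big\{ k_Y((y-Y)/b_1) - \mathbb{E}[k_Y((y-Y)/b_1)\mid T,R]\big\}\, X(t) k_0(X(t)), \]
\[ \partial_t s_0 = \tilde{K}_Y(Y,y) \cdot \partial_t\{X(t)k_0(X(t))\}, \]
where the $t$-derivative on the second line extracts a factor $1/b_1$ from both $\partial_t X(t) = (0,-1/b_1,0)'$ and $\partial_t k_T((T-t)/b_1)$, generating a term involving $k_T'$. Writing each expression as $(1/b_1)\,\tilde{g}(Y,T,R;y,t)$, the rescaled functions $\tilde{g}_y$ and $\tilde{g}_t$ are uniformly bounded. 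Their indexing classes (parameterized by $(y,t)$) are products of Euclidean subclasses: the arguments used for $\Psi_{\Xi_0}, \Psi_{TR}$ in the proof of Lemma \ref{lm:uc1} suffice for the $X(t)$ and $k_0(X(t))$ factors, the kernel class generated by $k_Y$ is Euclidean by Lemma \ref{lm:generate-Euclidean}(iv) (using that $k_Y$ is of bounded variation), the centering $\mathbb{E}[\,\cdot\mid T,R]$ preserves Euclideanness by Lemma \ref{lm:generate-Euclidean}(iii), and the class generated by $k_T'$ is Euclidean because $k_T'$ inherits bounded variation from the continuous differentiability and compactness of $k_T$ in Assumption \ref{ass:kernels}(i).

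Computing the uniform $L^2$ sizes by iterated expectations and the change of variables $v = (y-Y)/b_1$ gives $\sup \mathbb{E}[\tilde{g}_y^2] = O(b_1^3)$: a factor $O(b_1)$ from $\mathbb{E}[k_Y^2((y-Y)/b_1)\mid T,R]$ times the $O(b_1^2)$ measure of the support of $k_0$ in $(T,R)$. For the $\partial_t$ case, however, $\mathbb{E}[\tilde{K}_Y^2\mid T,R]$ is only $O(1)$, since the integrated kernel $K_Y$ approaches a bounded indicator as $b_1\to 0$ and retains $O(1)$ conditional variance; only the $O(b_1^2)$ localization via $k_0$ supplies smallness, giving $\sup \mathbb{E}[\tilde{g}_t^2] = O(b_1^2)$. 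Applying Lemma \ref{lm:gine} yields $\sup_{y,t}|\sum_i \tilde{g}_y| = O_p(\sqrt{nb_1^3 \log n})$ and $\sup_{y,t}|\sum_i \tilde{g}_t| = O_p(\sqrt{nb_1^2 \log n})$; Assumption \ref{ass:bandwidth} ensures the leading $\sqrt{n\sigma_n^2|\log \sigma_n|}$ term dominates the $|\log \sigma_n|$ remainder. Reinstating the $1/b_1$ prefactor and dividing by $nb_1^2$ gives the $\partial_y$ rate $O_p(\sqrt{\log n/(nb_1^3)})$, majorized by the claimed $\sqrt{\log n/(nb_1^4)}$, and exactly $O_p(\sqrt{\log n/(nb_1^4)})$ for the $\partial_t$ case.

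The hard part is the $\partial_t$ bound: unlike $k_Y$, the centered integrated kernel $\tilde{K}_Y$ does not shrink with $b_1$, so the variance savings come purely from the $(T,R)$-localization via $k_0$. This is precisely what produces the $b_1^{-4}$ exponent in the stated rate and is what ultimately drives the undersmoothing side of Assumption \ref{ass:bandwidth}; there is no room to tighten the bound further without additional smoothness assumptions on the conditional density of $Y$ given $(T,R)$.
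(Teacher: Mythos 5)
Your argument is essentially the same as the paper's: write each derivative as $(1/b_2)$ or $(1/b_1)$ times a rescaled summand, show the indexing class is uniformly bounded and Euclidean via Lemma \ref{lm:generate-Euclidean}, derive a uniform variance bound by change of variables, and apply Lemma \ref{lm:gine}. Your $\partial_y$ variance bound of $O(b_1^3)$ (integrating out $k_Y^2$ to pick up an extra $O(b_2)$) is actually tighter than the paper's $O(b_1^2)$ (which just uses $\|k_Y\|_\infty$), yielding the sharper $O_p(\sqrt{\log n/(nb_1^3)})$ for the first quantity; the paper's looser bound already suffices for the stated rate and keeps the two cases symmetric. One minor slip: the paper's $K_Y$ uses bandwidth $b_2$, not $b_1$, so the prefactor from $\partial_y$ is $1/b_2$ — immaterial here because $b_1 \asymp b_2$.
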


\begin{proof} [Proof of Lemma \ref{lm:uc2}]
    The partial derivative of $s_0$ with respect to $y$ is a vector of length three whose generic element can be denoted by $\dot{s}_0(Y_i,T_i,R_i;y,t,b_1,b_2)/b_2,$
    where
    \begin{align*}
        \dot{s}_0(Y,T,R;y,t) = \left(\frac{T-t}{b_1}\right)^{\ell_1}\left(\frac{R-\bar{r}}{b_1}\right)^{\ell_2} \left(  k_Y \left( (y - Y)/b_2 \right) - \mathbb{E} \left[ k_Y \left( (y - Y)/b_2 \right) | T,R \right] \right) k_0(X(t))
    \end{align*}
    with $(\ell_1,\ell_2) = (0,0),(1,0),(0,1)$. We use the empirical process theory to derive the uniform convergence rate of the sample average of $\dot{s}_0$.
    Recall that the kernel functions $k_Y,k_T,$ and $k_R$ are of bounded variation. Then by Lemma \ref{lm:generate-Euclidean}, we know that the following function class is uniformly bounded and Euclidean:
    \begin{align*}
        &  \{(Y,T,R) \mapsto \dot{s}_0(Y,T,R;y,t,b,b') : y \in \mathbb{R}, t \in [t_0',t_0''], b,b' > 0 \} \\
        = & \{ (Y,T,R) \mapsto ((T-t)/b)^{\ell_1}\left((R-\bar{r})/b\right)^{\ell_2} \left(  k_Y \left( (y - Y)/b' \right) - \mathbb{E} \left[ k_Y \left( (y - Y)/b' \right) | T,R \right] \right) \\
        & \quad \times k_T((T-t)/b) k_R^-((R-\bar{r})/b) : y \in \mathbb{R}, t \in [t_0',t_0''], b,b' > 0 \}.
    \end{align*}
    By the law of iterated expectations and differentiation under the integral, we know that $\dot{s}_0$ is centered. By using the fact that $k_T$ and $k_R$ are supported on $[-1,1]$ and $k_Y$ is bounded and applying the standard change of variables, we can bound the variance of $\dot{s}_0$ by 
    \begin{align*}
         2\norm{k_Y}_\infty^2 \mathbb{E} \left[ k_T((T-t)/b_1)^2 k_R^-((R-\bar{r})/b_1)^2 \right] 
         = b_1^2 2\norm{k_Y}_\infty^2 \int k_T(x_1)^2 k_R^-(x_2)^2 = O(b_1^2),
    \end{align*}
    uniformly over $y \in \mathbb{R}$ and $t \in [t_0',t_0'']$. Then by Lemma \ref{lm:gine}, we know that the uniform convergence rate of the sample average of $\dot{s}_0$ is $O_p\left(\sqrt{nb_1^2\log n} \right)$. Therefore,
    \begin{align*}
        \sup_{y \in \mathbb{R}, t \in [t_0',t_0'']} \norm{ \frac{1}{nb_1^2} \sum_{i=1}^n \frac{\partial }{\partial y} s_0(Y_i,T_i,R_i;y,t)}_2 = \frac{1}{nb_1^2 b_2} O_p\left(\sqrt{nb_1^2\log n} \right) =  O_p\left( \sqrt{\log n /(nb_1^4)} \right)
    \end{align*}
    under the condition that $b_1/b_2 \in [1/C,C]$ (Assumption \ref{ass:bandwidth}). This proves the first claim of the lemma. For the second claim, the same argument applies. We just want to point out that $k_T$ is differentiable on the entire real line by Assumption \ref{ass:kernels} even though its support is $[-1,1]$.

\end{proof}

\begin{lemma} \label{lm:uc3}
    Under the assumptions of Theorem \ref{thm:estimation}, we have
    \begin{align*}
        \sup_{y \in \mathcal{Y}, t \in [t_0',t_0'']} \left| \frac{1}{nb_1^2} \sum_{i=1}^n \iota' \Xi_0(t)^{-1}   s_0(Y_i,T_i,R_i;y,t) \right| & = O_p\left( \sqrt{\log n /(nb_1^2)} \right), \\
        \sup_{y \in \mathbb{R}, t \in [t_0',t_0''] }\norm{ \frac{1}{nb_1^2} \sum_{i=1}^n s_0(Y_i,T_i,R_i;y,t) }_2 & = O_p\left( \sqrt{\log n /(nb_1^2)} \right) .
    \end{align*}
\end{lemma}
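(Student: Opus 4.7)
\textbf{Proof proposal for Lemma \ref{lm:uc3}.} The plan is to mimic the empirical-process argument used in Lemmas \ref{lm:uc1} and \ref{lm:uc2}: identify an appropriate index class of functions, show it is uniformly bounded and Euclidean, obtain a variance bound of order $b_1^2$, and then invoke Lemma \ref{lm:gine}. Because $\iota' \Xi_0(t)^{-1}$ is a deterministic bounded function of $t$ (the eigenvalues of $\Xi_0(t)$ are bounded away from $0$ and $\infty$ uniformly by Lemma 1 of \cite{xie2021uniform}), it suffices to prove the second claim — the claim for the scalar $\iota' \Xi_0(t)^{-1} s_0$ then follows by pulling the uniformly bounded factor $\iota' \Xi_0(t)^{-1}$ outside the sup and using that $\{t \mapsto \iota'\Xi_0(t)^{-1}\}$ is a fixed smooth bounded family.

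First I would decompose $s_0 = s_0^{(1)} - s_0^{(2)}$, where
\begin{align*}
    s_0^{(1)}(Y,T,R;y,t) &= X(t) K_Y\bigl((y-Y)/b_2\bigr) k_0(X(t)), \\
    s_0^{(2)}(Y,T,R;y,t) &= X(t) \, \mathbb{E}\bigl[K_Y\bigl((y-Y)/b_2\bigr)\mid T,R\bigr]\, k_0(X(t)),
\end{align*}
and argue component-wise for each of the three coordinates of $s_0$. The generic coordinate of $s_0^{(1)}$ is a product of $\bigl((T-t)/b_1\bigr)^{\ell_1}\bigl((R-\bar{r})/b_1\bigr)^{\ell_2} k_T((T-t)/b_1) k_R^-((R-\bar{r})/b_1)$ and $K_Y((y-Y)/b_2)$. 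The first factor, indexed by $t \in [t_0',t_0'']$, is Euclidean since $k_T$, $k_R$, and the indicator are of bounded variation (Lemma \ref{lm:generate-Euclidean}(ii),(iv)); the rescalings by $b_1$ stay in $[-1,1]$ on the kernel support. The second factor, indexed by $y \in \mathcal{Y}$, is Euclidean for the same reason. The product is therefore uniformly bounded and Euclidean by Lemma \ref{lm:generate-Euclidean}(ii). For $s_0^{(2)}$, the conditional-expectation class $\{(T,R) \mapsto \mathbb{E}[K_Y((y-Y)/b_2)\mid T,R]: y \in \mathcal{Y}\}$ is Euclidean by Lemma \ref{lm:generate-Euclidean}(iii), and multiplying by the same Euclidean $(T,R)$-factor preserves the property.

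Next I would bound the variance. Because $K_Y$ is uniformly bounded (by $\int k_Y = 1$) and the monomials $((T-t)/b_1)^{\ell_1}((R-\bar{r})/b_1)^{\ell_2}$ are bounded on the kernel support, a standard change of variables $\tilde t = (T-t)/b_1$, $\tilde r = (R-\bar{r})/b_1$ together with Assumption \ref{ass:smoothness-distribution}(i) giving bounded $f^-_{T,R}$ near $\bar r$ yields
\begin{align*}
    \sup_{y,t}\mathbb{E}\bigl[s_0(Y,T,R;y,t)^2\bigr] \leq C\,\mathbb{E}\bigl[k_T((T-t)/b_1)^2 k_R^-((R-\bar{r})/b_1)^2\bigr] = O(b_1^2).
\end{align*}
Setting $\sigma_n^2 = O(b_1^2)$, Assumption \ref{ass:bandwidth} ensures $n\sigma_n^2/|\log \sigma_n| \to \infty$, so Lemma \ref{lm:gine} delivers
\begin{align*}
    \sup_{y \in \mathcal{Y}, t \in [t_0',t_0'']} \Bigl|\sum_{i=1}^n\bigl(s_0(Y_i,T_i,R_i;y,t) - \mathbb{E} s_0\bigr)\Bigr| = O_p\bigl(\sqrt{n b_1^2 \log n}\bigr).
\end{align*}
The centering term $\mathbb{E} s_0$ vanishes by the tower property (since $\tilde{K}_Y$ is conditionally centered on $(T,R)$). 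Dividing by $nb_1^2$ gives the claimed rate $O_p(\sqrt{\log n /(nb_1^2)})$, which in turn controls each coordinate of the vector norm.

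I do not anticipate a serious obstacle: everything reduces to verifying the Euclidean property of the indexing class and computing a one-line variance bound. The only point requiring care is the conditional-expectation factor inside $\tilde K_Y$, which must be handled by Lemma \ref{lm:generate-Euclidean}(iii) rather than by direct subgraph arguments; this is the same device already used in Lemma \ref{lm:uc1}.
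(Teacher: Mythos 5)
Your proposal is correct and follows the same route as the paper, which in fact only sketches this proof with one sentence ("Following the same steps as in the proofs of the previous two lemmas...") — you have simply filled in the details that the paper declines to write out: split $\tilde{K}_Y$ into its two summands, verify the Euclidean property of each index class via Lemma \ref{lm:generate-Euclidean}, compute the $O(b_1^2)$ variance bound by change of variables, and finish with Lemma \ref{lm:gine}. Your reduction of the first display to the second by pulling the deterministic, uniformly bounded factor $\iota'\Xi_0(t)^{-1}$ out of the sum and applying Cauchy–Schwarz is clean and valid; the centering $\mathbb{E}s_0 = 0$ is indeed immediate from the tower property.
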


\begin{proof} [Proof of Lemma \ref{lm:uc3}]
    Following the same steps as in the proofs of the previous two lemmas, we can show that the relevant function classes are uniformly bounded and Euclidean. By the usual change of variables, we can show that the uniform variance bound is $O(b_1^2)$ before taking into account the factor $1/(nb_1)$ in the two terms. Then the desired results follow from Lemma \ref{lm:gine}. The details are omitted for brevity.
\end{proof}

\begin{lemma} \label{lm:uc4}
    \begin{align*}
        \sup_{u \in [0,1]} \left| \frac{1}{nb_1} \sum_{i=1}^n q_0(T_i,R_i;u) k_{Q,0}\left( \frac{R_i - \bar{r}}{b_1} \right) \mathbf{1}\{R_i < \bar{r}\} \right| = O_p \left( \sqrt{\log n / nb_1} \right).
    \end{align*}

\end{lemma}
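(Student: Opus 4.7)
The plan is to apply the empirical process machinery (Lemma \ref{lm:gine}) that was already developed earlier in the appendix to the function class indexed by $u \in [0,1]$:
\begin{align*}
    \mathcal{F}_n = \left\{ (T,R) \mapsto q_0(T,R;u)\, k_{Q,0}\!\left(\frac{R-\bar{r}}{b_1}\right) \mathbf{1}\{R<\bar{r}\} : u \in [0,1] \right\}.
\end{align*}
First I would verify that $\mathcal{F}_n$ is uniformly bounded: $q_0$ is bounded by Assumption \ref{ass:h-tilde}(ii), and $k_{Q,0}$ is bounded since by Proposition \ref{prop:fs-quantile} it is a linear combination of bounded-variation kernel terms on a compact support. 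Next I would argue that $\mathcal{F}_n$ sits inside a fixed Euclidean class that does not depend on $n$. Using the explicit formula $q_0(T,R;u) = (u - \mathbf{1}\{T \leq h_0^*(\bar{r},u)\})/(f_R(\bar{r}) f^-_{T|R}(h_0^*(\bar{r},u)|\bar{r}))$, the factor $\mathbf{1}\{T \leq h_0^*(\bar{r},u)\}$ generates a VC class of half-lines (as $u$ ranges over $[0,1]$ and $h_0^*(\bar{r},\cdot)$ is strictly monotone), and the remaining $u$-dependence enters only through bounded continuous scalars; the kernel factor $k_{Q,0}((R-\bar{r})/b)\mathbf{1}\{R<\bar{r}\}$ for $b>0$ is Euclidean by Lemma \ref{lm:generate-Euclidean}(iv). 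Taking the product via Lemma \ref{lm:generate-Euclidean}(ii) gives a fixed Euclidean envelope class containing every $\mathcal{F}_n$.

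Next I would compute a uniform variance bound. Using the uniform boundedness of $q_0$ and the standard change of variables $y = (R-\bar{r})/b_1$,
\begin{align*}
    \sup_{u \in [0,1]} \mathbb{E}\!\left[ \psi(T,R;u)^2 \right] \leq C\, \mathbb{E}\!\left[ k_{Q,0}^2\!\left(\frac{R-\bar{r}}{b_1}\right) \mathbf{1}\{R<\bar{r}\} \right] = C\, b_1 \int_{-\infty}^{0} k_{Q,0}^2(y)\, f_R(\bar{r} + b_1 y)\, dy = O(b_1).
\end{align*}
Assumption \ref{ass:bandwidth} ensures $nb_1/|\log b_1| \to \infty$, so Lemma \ref{lm:gine} applies with $\sigma_n^2 = O(b_1)$ and yields
\begin{align*}
    \sup_{u \in [0,1]} \left| \sum_{i=1}^n \bigl( \psi(T_i,R_i;u) - \mathbb{E}[\psi(T_i,R_i;u)] \bigr) \right| = O_p\bigl( \sqrt{n b_1 \log n} \bigr),
\end{align*}
and dividing by $nb_1$ produces the desired stochastic rate $O_p(\sqrt{\log n/(nb_1)})$.

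The only remaining task — and the subtle step — is to check that the sample mean of $\psi$ does not disturb the rate. Here I would exploit the moment structure of $k_{Q,0}$ that is baked into its construction: since $k_{Q,0}(x) = \iota' \Omega_{Q,0}^{-1} (1,x/c)'\, k_{\textit{FS}}(x/c)/c$, one verifies directly that $\int k_{Q,0}^-(y)\, dy = 1$ and $\int y\, k_{Q,0}^-(y)\, dy = 0$, because these integrals reproduce the first column of $\iota' \Omega_{Q,0}^{-1} \Omega_{Q,0} = \iota'$. Combined with $\mathbb{E}[q_0(T,R;u)\mid R = \bar{r}] = 0$ (by the definition of $h_0^*$) and a second-order Taylor expansion of $r \mapsto \mathbb{E}[q_0(T,R;u)\mid R=r]\, f_R(r)$ about $\bar{r}$, the two kernel moment conditions kill the leading-order contributions and give $\sup_u |\mathbb{E}[\psi(T,R;u)]| = O(b_1^3)$, hence the mean contribution to the displayed average is $O(b_1^2)$, which is absorbed by the leading bias term $b_1^2 \nu_0(u)$ in Assumption \ref{ass:h-tilde}(ii). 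The main conceptual hurdle is verifying the Euclidean property of $\{q_0(\cdot,\cdot;u)\}_u$ cleanly, which relies on the VC character of $\{\mathbf{1}\{T \leq h_0^*(\bar{r},u)\}\}_{u}$; once that is in place, the remainder of the argument is a routine application of the machinery assembled in Section \ref{ssec:ept}.
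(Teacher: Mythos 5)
Your treatment of the stochastic-fluctuation part is essentially the paper's: you build a fixed Euclidean envelope class containing $\Psi_n$ by splitting off the VC half-line class $\{\mathbf 1\{T \leq t\}\}$ and the bounded-variation kernel class via Lemma \ref{lm:generate-Euclidean}, compute the $O(b_1)$ variance bound by the standard change of variables, and feed this into Lemma \ref{lm:gine} to get $O_p(\sqrt{\log n/(nb_1)})$ for the deviation from the mean. The paper's proof constructs exactly such a product class $\Psi = \Psi_T\Psi_R$ with $\Psi_n \subset \Psi$; so up to that point you agree.

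The divergence is in the mean. The paper simply invokes Assumption \ref{ass:h-tilde}(ii)(2), which declares $q_0$ to be centered, and concludes via the law of iterated expectations that $\mathbb{E}[\psi_n(T,R;u)]=0$ identically — so the lemma reduces at once to the fluctuation bound. You instead work from the explicit form of $q_0$ in Proposition \ref{prop:fs-quantile} (which involves $\mathbf 1\{T \le h_0^*(\bar r,u)\}$, so exact orthogonality only holds at $R=\bar r$), verify the moment identities $\int_{-\infty}^0 k_{Q,0}=1$, $\int_{-\infty}^0 y\,k_{Q,0}(y)\,dy=0$ via $\iota'\Omega_{Q,0}^{-1}\Omega_{Q,0}=\iota'$, Taylor-expand in $R$ about $\bar r$, and arrive at $\sup_u|\mathbb{E}[\psi_n]| = O(b_1^3)$, hence an $O(b_1^2)$ contribution to the displayed average. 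That calculation is correct, but the last sentence of your argument — that the $O(b_1^2)$ term is ``absorbed by the leading bias term $b_1^2\nu_0(u)$'' — does not close the proof. The lemma is a self-contained claim that a specific quantity is $O_p(\sqrt{\log n/(nb_1)})$; the bias term of Assumption \ref{ass:h-tilde}(ii) lives in a different display and cannot absorb anything here. One would instead have to show $b_1^2 = O(\sqrt{\log n/(nb_1)})$, i.e.\ $nb_1^5 = O(\log n)$, and Assumption \ref{ass:bandwidth} does not guarantee this uniformly over the admissible bandwidth window: it holds for $b_1 \asymp n^{-1/5}$, but fails as $b_1$ approaches the slower end near $n^{-1/6}$. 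So your argument establishes a rate of $O_p(\sqrt{\log n/(nb_1)} + b_1^2)$, not the lemma's stated rate. The paper avoids this entirely by treating the centering of $q_0$ as a high-level primitive (Assumption \ref{ass:h-tilde}(ii)(2)) rather than re-deriving an approximate version of it from the explicit estimator; under that primitive, $\mathbb{E}[\psi_n]$ vanishes exactly and the lemma follows from the Euclidean bound alone.
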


\begin{proof} [Proof of Lemma \ref{lm:uc4}]
    Without loss of generality, let $c=1$. Define 
    \begin{align*}
        \psi_n(T,R;u) & = q_0(T,R;u) k_{Q,0}\left( \frac{R - \bar{r}}{b_1} \right) \mathbf{1}\{R < \bar{r}\} \\
        & = \frac{u - \mathbf{1}\{ T \leq h_0^*(\bar{r},u) \}}{f_R(\bar{r}) f^-_{T | R}(h_0^*(\bar{r},u) | \bar{r}) } \iota' \Omega_{Q,0}^{-1} (1,(R - \bar{r})/b_1)' K_{\textit{FS}}((R - \bar{r})/b_1)
    \end{align*}
    and $\Psi_n = \{ (T,R) \mapsto \psi_n(T,R;u): u \in [0,1] \}$. By the law of iterated expectations, $\psi_n$ is centered. Let $M = \sup_{u \in [0,1]} |f_R(\bar{r}) f^-_{T | R}(h_0^*(\bar{r},u) | \bar{r})|$. Define a product class $\Psi = \Psi_T \Psi_R$ where 
    \begin{align*}
        \Psi_T & = \{ (T,R) \mapsto C (u - \mathbf{1}\{ T \leq t \}): u \in [0,1],t \in [t_0',t_0''], |C| \leq M \}, \\
        \Psi_R & = \{ (T,R) \mapsto \iota' \Omega_{Q,0}^{-1} (1,(R - \bar{r})/b)' K_{\textit{FS}}((R - \bar{r})/b): b > 0 \}.
    \end{align*}
    The class $\Psi$ does not vary with $n$, and $\Psi_n \subset \Psi, n \geq 1$. The class $\Psi_T$ is uniformly bounded and Euclidean since the set of indicator functions $\mathbf{1}\{ T \leq t \},t \in [t_0',t_0'']$ is Euclidean. The class $\Psi_R$ is uniformly bounded and Euclidean since $K_{\textit{FS}}$ is of bounded variation and compactly supported. By the usual change of variables, we can show that the uniform variance bound for $\Psi_n$ is $O(b_1)$. Then the desired convergence rate follows from Lemma \ref{lm:gine}.

\end{proof}

\subsection{Covariance Matrix Estimation} \label{ssec:cov-est}

In this section, we discuss the estimation of the asymptotic variance matrix of $\hat{\gamma}$, which involves the estimation of $\Delta$, $\Sigma_-$, and $\Sigma_+$. For concreteness, we consider the first-step nonparametric conditional quantile estimation procedure described in Section \ref{ssec:quantile-est} and Proposition \ref{prop:fs-quantile}. In the expressions of $\Delta$, $\Sigma_-$, and $\Sigma_+$, the functions that require estimation include $\nabla_\gamma D_{\gamma^*,h^*}$, $\phi_{\gamma^*}^{\pm}$, $f^\pm_{Y,T,R}$, $f^\pm_{T|R}$, and $f_R$. By definition, 
\begin{align*}
    \nabla_\gamma D_{\gamma^*}(e,u) & = \int_0^u \big[ \frac{\partial}{\partial Y}F^-_{Y|T,R}( g_{\gamma^*}(h_0^*(\bar{r},v),\bar{r},e) | h^*_0(\bar{r},v),\bar{r} ) \nabla_\gamma g_{\gamma^*}(h_0^*(\bar{r},v),\bar{r},e)  \\
    & \quad - \frac{\partial}{\partial Y} F^+_{Y|T,R}( g_{\gamma^*}(h_1^*(\bar{r},v),\bar{r},e) | h_1^*(\bar{r},v),\bar{r} ) \nabla_\gamma g_{\gamma^*}(h_1^*(\bar{r},v),\bar{r},e) \big] dv .
\end{align*}
In the above quantity, we only need to estimate $\frac{\partial}{\partial Y} F^-_{Y|T,R} = f^\pm_{Y|T,R}$ since we already have estimators for $\gamma^*$ and $h^*$. By observing the definition of $\phi_{\gamma^*}^{\pm}$, we know that the additional term that requires estimation is $\frac{\partial}{\partial T} F^\pm_{Y|T,R}$. To summarize, we want to estimate $f^\pm_{Y,T,R}$ and $\frac{\partial}{\partial T} F^\pm_{Y|T,R}$. Once $f^\pm_{Y,T,R}$ is obtained, we can operate to get the marginal and conditional density functions.

For estimation of $f^\pm_{Y,T,R}$, we can employ the method developed by \cite{cattaneo2020simple}. They use the second-order local polynomial regression to estimate the joint density. Due to the nature of local polynomial regressions, the estimator is boundary adaptive and particularly suitable for RD designs.
To estimate the partial derivative $\frac{\partial}{\partial T} F^\pm_{Y|T,R}$, we can employ a second-order local polynomial regression. The procedure is similar to STEP 2 in the construction of $\hat{\gamma}$. We add two quadratic terms into the minimization problem:
\begin{align*} 
    &\sum_{i:R_i < \bar{r}} \left( K_Y\left( \frac{y - Y_i}{b_2} \right) - a^- - a_{T}^-(T_i-t) - a_{T,2}^-(T_i-t)^2 - a_{R}^-(R_i - \bar{r}) - a_{R,2}^-(R_i - \bar{r})^2 \right)^2 \\
    & \quad \times k_T\left( \frac{T_i - t}{b_1} \right) k_R\left( \frac{R_i - \bar{r}}{b_1} \right).
\end{align*}
The minimizer $\hat{a}_{T}^-$ is the estimate of $\frac{\partial}{\partial T} F^-_{Y|T,R}(y|t,\bar{r})$. The estimate of $\frac{\partial}{\partial T} F^+_{Y|T,R}(y|t,\bar{r})$ can be analogously constructed.

We assume that the resulting estimators $\hat{f}^\pm_{Y,T,R}$ and $\frac{\partial}{\partial T} \hat{F}^\pm_{Y|T,R}$ are uniformly consistent. That is,
\begin{align*}
    \sup_{y,t,r} |\hat{f}^\pm(y,t,r) - f^\pm(y,t,r)| & = o_p(1), \\
    \sup_{y,t} \left| \frac{\partial}{\partial T} \hat{F}^\pm_{Y|T,R}(y | t,\bar{r}) - \frac{\partial}{\partial T} F^\pm_{Y|T,R}(y | t,\bar{r}) \right| & = o_p(1).
\end{align*}
Such uniform convergence results can be proved along the lines of, for examples, \cite{fan2016multivariate} and \cite{xie2021uniform}. The details are omitted here. We can construct the following distributional estimates:
\begin{align*}
    \hat{f}^\pm_{T,R}(t,\bar{r}) & = \int \hat{f}^\pm_{Y,T,R}(y,t,\bar{r}) dy, \\
    \hat{f}^\pm_{Y|T,R}(y,t|\bar{r}) & = \hat{f}^\pm_{Y,T,R}(y,t,\bar{r}) / \hat{f}^\pm_{T,R}(t,\bar{r}).
\end{align*}
Under the assumption that $f^\pm_{T,R}$ is bounded away from zero, the estimator $\hat{f}^\pm_{Y|T,R}(y,t|\bar{r})$ is uniformly consistent.
Let 
\begin{align*}
    \hat{\Delta} & = \int w(e,u) \Big( \int_0^u \big[ \hat{f}^-_{Y|T,R}( g_{\hat{\gamma}}(\hat{h}_0(\bar{r},v),\bar{r},e) | \hat{h}_0(\bar{r},v),\bar{r} ) \nabla_\gamma g_{\hat{\gamma}}(\hat{h}_0(\bar{r},v),\bar{r},e)  \\
    & \quad - \hat{f}^+_{Y|T,R}( g_{\hat{\gamma}}(\hat{h}_1(\bar{r},v),\bar{r},e) | \hat{h}_1(\bar{r},v),\bar{r} )  \nabla_\gamma g_{\hat{\gamma}}(\hat{h}_1(\bar{r},v),\bar{r},e) \big] dv \Big)^2 dedu .
\end{align*}
Under the uniform consistency of $\hat{f}^\pm_{Y|T,R}$ and $\hat{h}$ and the consistency of $\hat{\gamma}$, we can show that $\hat{\Delta}$ is a consistent estimator of $\Delta$. For the estimation of $\Sigma_-$, define 
\begin{align*}
    \hat{f}_R (\bar{r}) & = \int \hat{f}^\pm_{T,R}(t,\bar{r}) dt, \\
    \hat{f}^\pm_{T|R}(t|\bar{r}) & =\hat{f}^\pm_{T,R}(t,\bar{r}) / \hat{f}_R(\bar{r}), \\
    \hat{\phi}^-_{\hat{\gamma}}(e,v) & = \hat{f}^-_{Y|T,R}( g_{\hat{\gamma}}(\hat{h}_0(v),e)|\hat{h}_0(v),\bar{r}) \frac{\partial}{\partial T} g_{\hat{\gamma}}(\hat{h}_0(v),e) + \frac{\partial}{\partial T} \hat{F}^-_{Y|T,R}( g_{\hat{\gamma}}(\hat{h}_0(v),e)|\hat{h}_0(v),\bar{r}).
\end{align*}
The above estimators are also uniformly consistent. In particular,
\begin{align*}
    \sup_{e,v} \left| \hat{\phi}^-_{\hat{\gamma}}(e,v) - \phi^-_{\gamma^*}(e,v) \right| = o_p(1).
\end{align*}
Let 
\begin{align*}
    \hat{\Sigma}_- & = \int \Big( \int_{\mathcal{E}} \int_0^1 w(e,u) \Big( \int_0^u \big[ \hat{f}^-_{Y|T,R}( g_{\hat{\gamma}}(\hat{h}_0(\bar{r},v),\bar{r},e) | \hat{h}_0(\bar{r},v),\bar{r} ) \nabla_\gamma g_{\hat{\gamma}}(\hat{h}_0(\bar{r},v),\bar{r},e)  \\
    & \quad - \hat{f}^+_{Y|T,R}( g_{\hat{\gamma}}(\hat{h}_1(\bar{r},v),\bar{r},e) | \hat{h}_1(\bar{r},v),\bar{r} )  \nabla_\gamma g_{\hat{\gamma}}(\hat{h}_1(\bar{r},v),\bar{r},e)\big] dv \Big) \\
    & \quad \Big( \iota' \bar{\Omega}_0^{-1} (1,0,\tilde{r})' \big(\mathbf{1}\{y \leq g_{\hat{\gamma}}(t,e) \}  - \hat{F}^-_{Y|T,R}(g_{\hat{\gamma}}(t,e)|t,\bar{r}) \big) k_R^-(\tilde{r})/\hat{f}_{R}(\bar{r}) \\
    & \quad + \frac{k_{Q,0}^-\left( \tilde{r} \right)}{c\hat{f}^-_{T,R}(t,\bar{r})} \int_0^u \hat{\phi}^-_{\hat{\gamma}}(e,v) (v - \mathbf{1}\{t \leq \hat{h}_0(\bar{r},v)\})  dv \Big) dedu \Big)^2 \hat{f}^-_{Y,T,R}(y,t,\bar{r}) dydtd\tilde{r}.
\end{align*}
Under the uniform consistency of $\hat{f}_R$, $\hat{f}_{T,R}^\pm$, $\hat{f}^\pm_{Y|T,R}$, $\hat{F}_{Y|T,R}$, $\hat{f}^\pm_{Y,T,R}$, $\hat{\phi}^-_{\hat{\gamma}}$, and $\hat{h}$, and the consistency of $\hat{\gamma}$, $\hat{\Sigma}_-$ is a consistent estimator of $\Sigma_-$. The estimation of $\Sigma_+$ can be performed analogously.

\bibliographystyle{chicago}
\bibliography{references.bib}

\end{document}